\def\ShowAuthNotes{0}
\def\llncs{0}
\def\anon{0}
\definecolor{DarkBlue}{RGB}{0,0,150}
\newcommand{\authnote}[3]{\textcolor{#3}{[{\footnotesize {\bf #1:} { {#2}}}]}}
\newcommand{\authnote}[3]{}
\newcommand{\omri}[1]{\authnote{Omri}{#1}{blue}}
\newcommand{\znote}[1]{\authnote{Z}{#1}{red}}
\newcommand{\PRF}{\mathsf{PRF}}
\newcommand{\Keys}{\mathcal{K}}
\newtheorem{theorem}{Theorem}[section]
\newtheorem{definition}[theorem]{Definition}
\newtheorem{lemma}[theorem]{Lemma}
\newtheorem{corollary}[theorem]{Corollary}
\newtheorem{cclaim}[theorem]{Claim} 
\newtheorem{fact}[theorem]{Fact}
\def\bbC{{\mathbb C}}
\def\bbE{{\mathbb E}}
\def\bbN{{\mathbb N}}
\def\bbR{{\mathbb R}}
\def\bbZ{{\mathbb Z}}
\newcommand{\ceil}[1]{\left\lceil #1 \right\rceil}
\def\binset{\{0,1\}}
\def\pmset{\{\pm 1\}}
\newcommand{\abs}[1]{\left\vert {#1} \right\vert}
\newcommand{\norm}[1]{\left\| {#1} \right\|}
\newcommand{\td}{\mathrm{TD}}
\newcommand{\sd}{\mathrm{SD}}
\newcommand{\fail}{\mathtt{fail}}
\newcommand{\success}{\mathtt{success}}
\newcommand{\SingleDistribution}[1]{\mathcal{N}^{\bbC}_{\text{R}\left( {#1} \right)}(0, 1)}
\newcommand{\normalDist}{\mathcal{N}(0, 1)}
\newcommand{\normalCompDist}{\mathcal{N}^{\bbC}(0, 1)}
\newcommand{\distribution}[2]{\big(\mathcal{N}^{\bbC}_{\text{R}\left( {#2} \right)}(0, 1) \big)^{{#1}}}
\newcommand{\gauss}[1]{G^{\bbC}_{\text{R}\left( {#1} \right)}}
\newcommand{\qrs}{\mathsf{QRS}}
\newcommand{\bvs}{\mathsf{BVS}}
\newcommand{\Gen}{\mathsf{Gen}}
\newcommand{\hatv}{\hat{v}}
\newcommand{\hatu}{\hat{u}}
\newcommand{\hatphi}{\hat{\varphi}}
\def\poly{{\rm poly}}
\def\negl{{\rm negl}}
\newcommand{\A}{\mathsf{A}}
\newcommand{\secp}{\lambda}
\newcommand{\Img}{\text{Img}}
\newcommand{\PreImg}{\text{PreImg}}
\newcommand{\ket}[1]{|{#1}\rangle}
\newcommand{\bra}[1]{\langle{#1}|}
\newcommand{\braket}[2]{\langle{#1}|{#2}\rangle}
\newcommand{\ketbra}[2]{\ket{{#1}}\bra{{#2}}}
\newcommand{\boldpar}[1]{\vspace{3pt}\par\noindent\textbf{#1}}
\renewcommand{\paragraph}{\boldpar}
\title{Scalable Pseudorandom Quantum States}
\author{}
\institute{}
\author{Zvika Brakerski\inst{1} \and Omri Shmueli\inst{2}}
\institute{Weizmann Institute of Science\thanks{Email: \texttt{zvika.brakerski@weizmann.ac.il}. Supported by the Binational Science Foundation (Grant No.\ 2016726), and by the European Union Horizon 2020 Research and Innovation Program via ERC Project REACT (Grant 756482) and via Project PROMETHEUS (Grant 780701).} \and Tel-Aviv University\thanks{Email: \texttt{omrishmueli@mail.tau.ac.il}. Supported by the European Union Horizon 2020 Research and Innovation Program via ERC Project REACT (Grant 756482), by the Israel Science Foundation Grant No. 18/484, and by Len Blavatnik and the Blavatnik Family Foundation.}}
\author{}
\author{Zvika Brakerski\thanks{Weizmann Institute of Science, \texttt{zvika.brakerski@weizmann.ac.il}. Supported by the Binational Science Foundation (Grant No.\ 2016726), and by the European Union Horizon 2020 Research and Innovation Program via ERC Project REACT (Grant 756482) and via Project PROMETHEUS (Grant 780701).} \and Omri Shmueli\thanks{Tel-Aviv University, \texttt{omrishmueli@mail.tau.ac.il}. Supported by the European Union Horizon 2020 Research and Innovation Program via ERC Project REACT (Grant 756482), by the Israel Science Foundation Grant No. 18/484, and by Len Blavatnik and the Blavatnik Family Foundation.}}
\date{}
\begin{document}
	\maketitle

\begin{abstract}
	Efficiently sampling a quantum state that is hard to distinguish from a truly random quantum state is an elementary task in quantum information theory that has both computational and physical uses. This is often referred to as pseudorandom (quantum) state generator, or PRS generator for short.
	
	In existing constructions of PRS generators, security scales with the number of qubits in the states, i.e.\ the (statistical) security parameter for an $n$-qubit PRS is roughly $n$. Perhaps counter-intuitively, $n$-qubit PRS are not known to imply $k$-qubit PRS even for $k<n$. Therefore the question of \emph{scalability} for PRS was thus far open: is it possible to construct $n$-qubit PRS generators with security parameter $\secp$ for all $n, \secp$. Indeed, we believe that PRS with tiny (even constant) $n$ and large $\secp$ can be quite useful.
	
	We resolve the problem in this work, showing that any quantum-secure one-way function implies scalable PRS. We follow the paradigm of first showing a \emph{statistically} secure construction when given oracle access to a random function, and then replacing the random function with a quantum-secure (classical) pseudorandom function to achieve computational security. However, our methods deviate significantly from prior works since scalable pseudorandom states require randomizing the amplitudes of the quantum state, and not just the phase as in all prior works. We show how to achieve this using Gaussian sampling.
\end{abstract}

	\thispagestyle{plain} 
	\pagestyle{plain} 

\section{Introduction}

Quantum mechanics asserts that the state of a physical system is characterized by a vector in complex Hilbert space, whose dimension corresponds to the number of degrees of freedom of the system. Specifically, a system with $2^n$ possible degrees of freedom (such as an $n$-qubit system, the quantum analogue to an $n$ bit system) is represented as a unit vector over $\bbC^{2^n}$. The ability to sample a random state of a system is a fundamental task when attempting to provide a computational description of the physical world.

Since the description length of a quantum state is infinite (and very long even when taken to a finite precision), relaxed notions for random state sampling are considered in the literature. Most commonly (and in this work) we consider restricting the \emph{number of copies} of the sampled state that are given to the adversary.\footnote{Recall that in the quantum setting, due to the no-cloning property, providing additional copies of the same state allows to recover more information about it. In utmost generality, any additional copy provides additional information, and a complete recovery of a quantum state requires infinitely many copies.} The notion of quantum $t$-designs \cite{ambainis2007quantum} considers computationally unbounded adversaries that are given $t$ copies of the sampled state, and the requirement is that this input is (statistically) indistinguishable from $t$ copies of a true random state. The resources of generating $t$-designs scale at least linearly with $t$, and therefore if efficient generation is sought, $t$ designs can only be constructed for polynomial $t$.\footnote{As usual, we use the notion of security parameter $\secp$ that indicates the power of honest parties and of adversaries. We assume that honest parties run in time $\poly(\secp)$ for a \emph{fixed} polynomial, whereas the advantage of the adversary needs to scale super-polynomially, and preferably exponentially, with $\secp$.} Recently, a computational variant known as Pseudorandom Quantum State (PRS) was proposed by Ji, Liu and Song~\cite{JLS18}. In a PRS, the adversary is allowed to request an a-priori unbounded polynomial number of samples $t$, but the guarantee of indistinguishability only holds against \emph{computationally bounded} adversaries. PRS have applications in quantum-cryptography (e.g. quantum money \cite{JLS18}) and computational physics (e.g. simulation of thermalized quantum states \cite{popescu2006entanglement}).

It was shown in \cite{JLS18,BS19} that PRS can be constructed from any quantum-secure one-way function. The design paradigm in both works is as follows. First, assume you are given (quantum) oracle access to a (classical) random function, and show how to efficiently construct a PRS which is secure even against computationally unbounded adversaries, a notion that \cite{BS19} calls Asymptotically Random State (ARS). Then, replace the random function with a post-quantum pseudorandom function (PRF) to obtain computational security. Since only a fixed number of calls to the PRF is required in order to generate each PRS copy, this paradigm also leads to new constructions of $t$-designs, as observed in \cite{BS19}.

The previous works \cite{JLS18,BS19} showed how to construct an $n$-qubit PRS, which is secure against any $\poly(n)$ time adversary. To be more precise, they constructed ARS whose distinguishing advantage is bounded by $4 t^2 \cdot 2^{-n}$, and converted it into a PRS using a PRF as described above. We can therefore say that the \emph{statistical security parameter} of the scheme is (essentially) $n$, and there is an additional computational security parameter that comes from the hardness of the PRF. Indeed, a security parameter of $n$ seems quite sufficient since the complexity of the construction is $\poly(n)$ so it is possible to choose $n$ as large as needed in order to provide sufficient security. Alas it is not possible to convert an $n$-qubit state generator into one that produces a random state over a smaller number of qubits, say $k<n$. This may be quite surprising as one would imagine that we can simply generate an $n$-qubit state, and just take its $k$-qubit prefix. However, recall that the $n$-qubits are in superposition, and taking a prefix is equivalent to measurement of the remaining $(n-k)$ qubits. For each of the $t$ copies, this measurement has a different outcome and therefore each of the $t$ copies will produce a different $k$-qubit states, as opposed to $t$ copies of the same state as we wanted.

This peculiar state of affairs means that prior to this work it was not known, for example, how to construct ARS/PRS of $n$ qubits, but with adversarial advantage bounded by $2^{-2n}$. This issue is also meaningful when considering the concrete (non-asymptotic) security guarantees of PRS, where we wish to obtain for example $128$ bits of security against an adversary that obtains at most $2^{20}$ copies of a PRS over $70$ qubits.

\paragraph{This Work: Scalable ARS/PRS.} In this light, it is desirable to introduce ARS/PRS constructions where the security parameter is in fact a parameter which is tunable independently of the length of the generated state. We call this notion \emph{scalable} ARS/PRS. We notice that the approaches of \cite{JLS18,BS19} are inherently not scalable since they can only generate states in which all computational-basis elements have the same amplitude, and the randomness only effects the phase. Such vectors are inherently distinguishable from uniform unless the dimension is very large (hence their dependence between length and security). In this work, we present new techniques for constructing ARS/PRS and in particular present a scalable construction under the same cryptographic assumptions as previous works.

\subsection{Our Results} 
Our main technical result, as in all previous works, is concerned with constructing an ARS generator which is efficient given oracle access to a random function.\footnote{Note that this is not the quantum random oracle model since the random oracle is ``private'' and the adversary does not get access to it.}

\begin{lemma} [Main Technical Lemma]\label{lem:intromain}
	There exists a scalable ARS generator. 
	
	Furthermore, for every length $n$ of a quantum state and security parameter $\secp$, running the generator $t$ times (for any $t$) produces an output distribution that is $O\left( \frac{t}{e^{\secp}} \right)$-indistinguishable from $t$ copies of a random quantum state of $n$ qubits.
\end{lemma}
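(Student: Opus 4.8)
The plan is to realize a Haar-random $n$-qubit state in the one way that decouples its dimension from its statistical quality: as the normalization of a complex Gaussian vector. Recall the standard fact that if $z=(z_x)_{x\in\binset^n}$ has i.i.d.\ entries $z_x \sim \normalCompDist$, then $\frac{1}{\norm{z}}\sum_x z_x\ket{x}$ is distributed exactly according to the Haar measure on the unit sphere of $\bbC^{2^n}$; consequently $t$ copies of such a state are \emph{exactly} $t$ copies of a random quantum state. The generator will therefore use its private random function $f$ to assign to each basis label $x$ an independent, \emph{rounded and truncated} complex Gaussian amplitude $z_x = \gauss{R}(f(x))$, and will coherently output $\ket{\psi_z} := \frac{1}{\norm{z}}\sum_x z_x\ket{x}$. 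Because $f$ is a fixed function, every invocation regenerates the same $z$ and hence the same $\ket{\psi_z}$, so running the generator $t$ times yields $t$ copies of one and the same (random) state, as required.

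For the coherent preparation ($\bvs$/$\ampBvs$) I would start from $\frac{1}{\sqrt N}\sum_x\ket{x}\ket{0}$ with $N=2^n$, query $f$ to write the rounded real and imaginary parts of $z_x$ into an ancilla, and apply a controlled rotation that places the scaled amplitude $z_x/B$ — magnitude as a rotation angle, $\arg z_x$ as a relative phase — on a flag qubit, where $B$ is the truncation radius so that $\abs{z_x/B}\le 1$. Conditioned on the flag reading $0$, the register collapses to exactly $\ket{\psi_z}$, and this happens with probability $\norm{z}^2/(N B^2)$. I would then amplify the flag-$0$ branch to amplitude close to $1$.

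The error accounting then splits into two independent pieces. First, the rounding/truncation: by choosing grid spacing $\sim e^{-\secp}/N$ and truncation radius $\sim\sqrt{\secp+n}$, the per-coordinate statistical distance between $\gauss{R}$ and $\normalCompDist$ is at most $e^{-\secp}/N$, so by subadditivity of statistical distance over the $N$ independent coordinates the sampled-vector distribution is within $e^{-\secp}$ of the true i.i.d.\ Gaussian-vector distribution; this costs only $\poly(n,\secp)$ bits of precision, keeping the generator efficient. Since the real and the ideal $t$-copy states are the images of these two input distributions under the \emph{same} map $z\mapsto \ket{\psi_z}\!\bra{\psi_z}^{\otimes t}$, a data-processing (coupling) argument bounds their trace distance by the input statistical distance $\le e^{-\secp}$, with \emph{no} dependence on $t$, and the continuous-Gaussian branch is literally $t$ copies of Haar. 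Second, the residual bad-flag component left by amplification contributes, across the $t$ independent runs, at most $t$ times the per-run failure amplitude (a hybrid argument); driving the latter below $e^{-\secp}$ yields $t\,e^{-\secp}$. Summing gives $O(t/e^{\secp})$.

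The hard part will be the global normalization hidden in the amplification step, because the success probability $\norm{z}^2/(N B^2)$ is itself random and must be amplified robustly over every value of $\norm{z}^2$ that occurs with probability exceeding $e^{-\secp}$. For large $n$ this is benign: $\norm{z}^2$ is a sum of $N$ i.i.d.\ terms, so it concentrates around $N$, the atypical regime is $e^{-\Omega(N)}$-rare, and $O(\sqrt{\secp+n})$ amplification rounds suffice. The genuinely delicate case — and the whole point of scalability — is small $n$ with large $\secp$: there $\norm{z}^2$ can be as small as $e^{-\Theta(\secp/N)}$ with probability $\approx e^{-\secp}$, which would naively demand $e^{\Theta(\secp/N)}$ amplification rounds, i.e.\ time exponential in $\secp$. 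The saving observation is that this only happens when $N=2^n \le \poly(\secp)$, a regime in which one can instead afford to compute $\norm{z}^2=\sum_x\abs{z_x}^2$ by direct enumeration over all $N$ labels ($\poly(\secp)$ queries) and prepare $\ket{\psi_z}$ by exact amplitude encoding, bypassing amplification altogether. Reconciling the large-dimension (amplify, using concentration) and small-dimension (enumerate, exact) strategies into one generator that is simultaneously $\poly(n,\secp)$-time and $O(t/e^{\secp})$-accurate is, I expect, the main technical obstacle; a slicker alternative would be a fixed-point/variable-time amplification whose running time is governed by the \emph{typical} rather than the worst-case $\norm{z}$.
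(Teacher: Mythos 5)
Your overall architecture---Gaussian amplitudes derived from $f$, a coherent rejection-sampling-style preparation with amplification, and a separate brute-force strategy when $2^n\le\poly(\secp)$---matches the paper's construction closely. But there is a genuine gap at the step you treat as routine: the discretization error. You claim that with grid spacing $\sim e^{-\secp}/N$ the per-coordinate statistical distance between the rounded/truncated sampler $\gauss{}$ and the continuous Gaussian $\normalCompDist$ is at most $e^{-\secp}/N$, and you then invoke data processing to get a $t$-independent trace-distance bound. This is false: the rounded distribution is supported on a countable grid, which has measure zero under the continuous Gaussian, so the total variation distance between them is exactly $1$, no matter how fine the grid. (Truncation alone \emph{can} be charged to statistical distance---that part is fine---but rounding cannot.) Consequently the data-processing step bounds the trace distance by $1$, i.e.\ gives nothing, and in particular the claimed ``no dependence on $t$'' for the discretization term is not achievable by this route. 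The paper states this obstruction explicitly: the efficiently samplable vector distribution and the continuous one are distinguishable with advantage $1$ as \emph{vector} distributions, yet must be shown close as \emph{quantum-state} distributions.

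What is needed---and this is precisely the paper's key technical tool, Lemma~\ref{lemma:angular}---is a geometric rather than statistical comparison: under the natural coupling $z\mapsto R(z)$, each rounded vector is close to its preimage in angle, $\abs{\braket{\hat{z}}{\widehat{R(z)}}}\ge 1-O(\varepsilon^2)$ (this itself needs a norm lower bound and a per-coordinate bound on $z$, supplied by Lemma~\ref{lemma:long_vectors} and the conditioning steps), and then one proves that an angular gap of $\varepsilon$ costs at most $\sqrt{2t\varepsilon}$ in trace distance over $t$ copies. Note the $\sqrt{t}$: the discretization cost \emph{does} grow with $t$, contrary to your accounting; the paper's statistical-distance tool (Lemma~\ref{lemma:sd_td}) is reserved for discrete-vs-discrete and conditioning comparisons, where it legitimately applies. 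Your small-$n$ ``exact amplitude encoding'' does not sidestep the issue either: any finite-precision circuit still prepares a state whose amplitude vector is discrete, so the comparison with the continuous Haar distribution again requires the angular argument. Finally, conditioning on preparation success (large $\norm{z}$) biases the vector distribution; the paper must argue that the conditional Gaussian, conditioned on a \emph{uniform} norm lower bound, is still spherically symmetric, and must pay separately for swapping the ``rounded norm is large'' condition for the ``true norm is large'' condition (the thin-layer estimate, Corollary~\ref{cor:balanced_vectors_not_in_layer})---steps absent from your error accounting.
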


We note that in previous works that construct ARS generators \cite{JLS18,BS19} the dependence on $t$ in the bound on the trace distance is quadratic, that is, previous ARS generators are known to achieve a bound of $\frac{t^2}{2^n}$ on the trace distance between $t$-copies of the ARS and a random quantum ($n$-qubit) state, whereas in this work the trace distance bound only scales up linearly with $t$.

As immediate corollaries and similarly to \cite{JLS18,BS19}, we derive the existence of a scalable PRS generator (assuming post-quantum one-way functions) and scalable $t$-design generators (unconditionally).
Unlike scalable PRS generators, scalable state $t$-design generators were known to exist before this work, however their depth was known to scale up linearly with $t$ (and polynomially in $n$), and in our construction the depth scales logarithmically with $t$ (and polynomially in $n, \secp$).

\begin{corollary}\label{cor:introprs}
	If post-quantum one-way functions exist, then scalable PRS generators exist.
\end{corollary}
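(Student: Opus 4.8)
The plan is to derive Corollary~\ref{cor:introprs} from Lemma~\ref{lem:intromain} by following the standard two-step paradigm of \cite{JLS18,BS19}: first we have a \emph{statistically} secure scalable ARS generator that is efficient given oracle access to a random function, and then we \emph{instantiate} the random function with a post-quantum pseudorandom function (PRF). Recall that post-quantum one-way functions are known to imply post-quantum PRFs via the standard Goldreich--Levin / GGM-type constructions (secure against quantum adversaries with classical oracle queries, and in fact against quantum superposition queries by the result of Zhandry). So the first thing I would do is fix such a PRF family $\{\prf_k\}_k$ whose key length is governed by a computational security parameter.

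\medskip

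Next, I would describe the scalable PRS generator explicitly. On input the state length $n$, the statistical security parameter $\secp$, and a PRF key $k$, the generator runs exactly the ARS generator of Lemma~\ref{lem:intromain} (with parameters $n,\secp$) but answers each of its oracle queries using $\prf_k$ instead of a truly random function. Since the ARS generator makes only a fixed number of oracle calls per copy and runs in time $\poly(n,\secp)$, and since $\prf_k$ is efficiently evaluable (including in superposition), the resulting PRS generator is efficient. The key observation is that scalability is inherited: the state length $n$ and the security parameter $\secp$ are independent inputs to the ARS generator, and replacing the oracle by a PRF does not couple them.

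\medskip

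The security argument is a hybrid. Let $\cA$ be any $\poly$-time quantum adversary requesting $t=\poly$ copies. Consider three experiments: (H0) $\cA$ receives $t$ copies of the scalable PRS output (oracle $=\prf_k$ for a random key $k$); (H1) $\cA$ receives $t$ copies of the ARS output where the oracle is a \emph{truly random} function; (H2) $\cA$ receives $t$ copies of a genuinely Haar-random $n$-qubit state. The gap between (H0) and (H1) is bounded by the PRF's quantum distinguishing advantage: any efficient distinguisher separating these two experiments yields an efficient oracle-distinguisher for the PRF (here one must use that the PRF is secure against quantum queries, since the ARS generator may query its oracle in superposition). The gap between (H1) and (H2) is bounded statistically by Lemma~\ref{lem:intromain}, namely $O(t/e^{\secp})$. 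Combining, $\cA$'s total advantage is at most $O(t/e^{\secp})+\negl$, which is negligible, establishing pseudorandomness.

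\medskip

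The main obstacle, and the one point deserving care, is the superposition-query issue in the (H0)--(H1) step: because the ARS generator queries its internal function on superpositions of inputs (as is typical for quantum state generators), the PRF-to-random replacement must be justified under the \emph{quantum-query} security notion rather than the classical-query one. This is exactly the setting handled by Zhandry's result that post-quantum PRFs remain indistinguishable from random under quantum superposition queries, so the replacement goes through; I would invoke that result rather than reprove it. The remaining details---efficiency bookkeeping and the triangle inequality over the three hybrids---are routine.
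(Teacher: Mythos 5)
Your proposal is correct and follows essentially the same route as the paper: the paper also instantiates the ARS generator's random oracle with a quantum-query-secure PRF (which exists from post-quantum one-way functions by Zhandry's result), and combines the PRF's computational indistinguishability with the ARS's statistical guarantee via exactly this hybrid argument. The one bookkeeping point your proposal leaves implicit is that the PRF's security parameter must be coupled to $\secp$ (the paper sets it to $n+\secp$) so that the PRF's negligible distinguishing advantage is in fact negligible in $\secp$ as the PRS definition requires.
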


\begin{corollary}\label{cor:introdesigns}
	For any polynomial $t(\cdot) : \bbN \rightarrow \bbN$, scalable state $t(\secp)$-design generators exist where the circuit depth is $\poly(n, \secp, \log t)$.
\end{corollary}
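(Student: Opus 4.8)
The plan is to obtain Corollary~\ref{cor:introdesigns} directly from the scalable ARS generator of Lemma~\ref{lem:intromain} by instantiating its private random oracle with an explicit, unconditionally secure function family. Recall that an approximate state $t$-design is a sampler whose $t$ output copies are within small trace distance of $t$ copies of a Haar-random $n$-qubit state; this is exactly the guarantee that Lemma~\ref{lem:intromain} provides, with error $O(t/e^{\secp})$, except that the ARS generator assumes oracle access to a truly random function $f$. It therefore suffices to replace $f$ by a function that (i) can be sampled and evaluated by an explicit quantum circuit and (ii) is statistically indistinguishable from random throughout the entire $t$-copy generation experiment.

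First I would bound the number of oracle queries. Each invocation of the ARS generator makes some number $q = q(n,\secp)$ of (quantum) queries to $f$ that is independent of $t$, so generating all $t$ copies --- which must share the same function $f$ in order to be copies of the same state --- is a single quantum procedure making $Q = q\cdot t$ queries in total. By the theorem of Zhandry that any $Q$-query quantum algorithm has identical output distribution under a truly random function and under a $2Q$-wise independent function, sampling $f$ once from a $2Q$-wise independent family and reusing it across all copies produces exactly the same $t$-copy distribution as the random-oracle ARS generator. Hence the resulting oracle-free sampler is an $O(t/e^{\secp})$-approximate $t$-design, and since $2Q$-wise independent families exist unconditionally this needs no computational assumption; scalability (tunability of $\secp$ independently of $n$) is inherited verbatim from Lemma~\ref{lem:intromain}.

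The main obstacle is the depth bound: a naive evaluation of a $2Q$-wise independent function costs depth linear in $Q$ (hence linear in $t$), which is precisely the drawback of earlier $t$-design constructions. To get depth $\poly(n,\secp,\log t)$ I would realize the $2Q$-wise independent function as a degree-$(2Q-1)$ polynomial $h(x)=\sum_{i=0}^{2Q-1} c_i x^i$ over a field $\bbF_{2^m}$ with $m=\poly(n,\secp,\log t)$, and evaluate it coherently in low depth: compute the powers $x,x^2,\dots,x^{2Q-1}$ with a balanced repeated-squaring tree (depth $O(\log Q)$ field multiplications), multiply by the coefficients in parallel, and combine them with a binary summation tree (depth $O(\log Q)$ field additions). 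Each field operation over $\bbF_{2^m}$ runs in depth $\polylog(m)$, so evaluating $h$ has depth $O(\log Q)\cdot\polylog(m)=\poly(\log t,\log n,\log\secp)$. Composing this with the ARS generator circuit of Lemma~\ref{lem:intromain}, whose depth is $\poly(n,\secp)$ and independent of $t$, yields total depth $\poly(n,\secp,\log t)$, as claimed. The two points requiring care are verifying that the per-copy query count $q$ is genuinely independent of $t$ (so that $2Q$-wise independence suffices, rather than a degree growing with $t$ per copy) and that the polynomial can be evaluated reversibly within the stated depth.
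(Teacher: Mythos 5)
Your proposal is correct and follows essentially the same route as the paper: the paper likewise samples a $2Q$-wise independent function with $Q = t\cdot T(n,\secp)$ (the total query count of $t$ ARS executions), invokes Zhandry's result that a $Q$-query quantum algorithm behaves identically with a $2Q$-wise independent function as with a truly random one, and obtains the $\poly(n,\secp,\log t)$ depth from the low-depth (polynomial-evaluation-tree) implementation of $m$-wise independent functions. Your explicit repeated-squaring/summation-tree construction over $\bbF_{2^m}$ is just an unpacking of the depth bound the paper cites from prior work.
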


Our ARS construction requires a random oracle with $n$ bits of input (where $n$ is the length of the generated state) and $\poly(\secp)$ bits of output, it therefore follows that if $n=O(\log \secp)$, then it is possible to instantiate the construction with a completely random string of length $2^n \cdot \poly(\secp) = \poly(\secp)$, and obtain statistically secure PRS. We view this consequence as not very surprising in hindsight.

Recently Alagic, Majenz and Russell \cite{AMR19} proposed the notion of random state simulators. Simulators are stateful, and their local state grows with the number of copies $t$, however, there is no a-priori bound on the number of copies that the simulator can produce, and the guarantee is information-theoretic rather than computational. One can observe that a scalable ARS generator also implies efficient state simulators, by using the random-oracle simulation technique of Zhandry~\cite{Z19}. The state simulators of \cite{AMR19} follow a different approach, which is not known to imply ARS, and achieve simulators with perfect security (and thus straightforwardly scalable), but our ARS provides a different avenue for scalable random quantum state simulators as well.

\subsection{Paper Organization}

We provide a detailed technical overview of our results in Section \ref{sec:techoverview}. Preliminaries appear in Section~\ref{sec:prelim}, and in particular we formally state the derivation of the corollaries from the main theorem (which were implicit in previous work) in Section \ref{sec:consequences}.
Our technical results are presented in the following two sections. In Section \ref{sec:qit_tools} we present quantum information-theoretic tools which are required for our construction but may also find other uses.
Then Section \ref{sec:construction} contains our actual construction.

\newcommand{\halpha}{\hat{\alpha}}
\newcommand{\hv}{\hat{v}}
\newcommand{\ketv}{\ket{v}}
\newcommand{\khv}{\ket{\hv}}

\newcommand{\hw}{\hat{w}}
\newcommand{\ketw}{\ket{w}}
\newcommand{\khw}{\ket{\hw}}

\section{Technical Overview}
\label{sec:techoverview}

We now provide a technical outline of how we achieve our main result in Lemma~\ref{lem:intromain}. Deriving the corollaries is straightforward using known techniques.\footnote{We note that this standard transition from ARS with oracle to PRS and to $t$-designs was not formally stated in its generic form in previous works. In this work we also provide the generic derivations in Section \ref{sec:consequences}.}

As Lemma~\ref{lem:intromain} states, we design an algorithm that has oracle access to a random function $f$, takes as input a bit length $n$ and a security parameter $\secp$, runs in time $\poly(n,\secp)$, and produces a quantum state over $n$-qubits $\ket{\psi_{f,n,\secp}}$ (note that even though our algorithm is randomized, it can either output the state $\ket{\psi_{f,n,\secp}}$ or $\bot$ and will never output the ``wrong'' state). It furthermore holds that the distribution that samples a random function $f$ and outputs $\ket{\psi_{f,n,\secp}}^{\otimes t}$ (i.e.\ $t$ copies of the state $\ket{\psi_{f,n,\secp}}$), is within trace distance at most $\poly(t)/2^\secp$ from the distribution that produces $t$ copies of a truly randomly sampled $n$-qubit state.

We recall the standard Dirac notation for vectors in Hilbert space. An $n$-qubit state is generically denoted by a unit vector in $\bbC^{2^n}$ of the form $\ket{\alpha} = \sum_{x \in \binset^n} \alpha_x \ket{x}$. Throughout this overview we wish to refer to normalized as well as non-normalized vectors. We will use the convention that a vector $\ket{\alpha}$ is not necessarily normalized unless explicitly noted that it represents a quantum state (or a unit vector), and will denote its normalization
\[
\ket{\halpha} = \sum_{x \in \binset^n} \halpha_x \ket{x} := \frac{1}{\sqrt{\braket{\alpha}{\alpha}}} \sum_{x \in \binset^n} \alpha_x \ket{x}~,
\]
where $\braket{\alpha}{\alpha} = \sum_x \abs{\alpha_x}^2$.

As explained above, prior works generated quantum states where in the standard basis all coefficients had the same amplitude, i.e.\ their ARS could be represented by $\ket{\alpha}$ s.t.\ $\abs{\alpha_x}=1$ for all $x$. We abandon this approach, which as we explained cannot lead to a scalable ARS construction. Instead, we will show how to interpret a random function $f$ as an implicit representation of a random unit vector in $\bbC^{2^n}$. Moreover, we want this interpretation to be \emph{locally computable} in the sense that the value $\alpha_x$ only depends on $f(x)$.
Our approach, therefore, is more direct and also more involved than the approach taken in previous works, since we will try to sample from a space that most closely resembles the uniform distribution over quantum states.


\subsection{Our Approach: Implicit Random Gaussian Vector}

Assume that we had an efficiently computable classical function $g(\cdot)$ s.t.\ if we set $v_x = g(f(x))$ and consider the vector $\ketv = \sum_{x} v_x \ket{x}$, then the distribution on $\ketv$ (induced by sampling the function $f$ randomly) is \emph{spherically symmetric}, i.e.\ invariant to unitary transformations (``rotations'' in $\bbC^{2^n}$). In this case, the normalized vector $\khv$ is a uniform unit vector. In other words, we will show how to use the random function $f$ as an implicit representation of a vector $\ket{v}$ such that for all $x$, $v_x$ can be efficiently \emph{locally} computed given $x$ (and oracle access to $f$).

Our solution, therefore, needs to address two challenges. The first is to properly define a locally efficiently computable function $g$ with the desirable properties. The second is to efficiently generate the quantum state $\ket{\hat{v}}$ given oracle access to the values $v_x$. Let us describe how we handle each one of these challenges at a high level, and then expand on the parts that contain the bulk of technical novelty.


%


\paragraph{First Technique: Multivariate Gaussian Sampling.} For the first challenge, we use the multivariate Gaussian distribution, whose spherical symmetry has proven useful for many applications in the literature.
Our function $g$ will simply be a Gaussian sampler (or more accurately, a two-dimensional Gaussian sampler, for the real and imaginary parts of $v_x$).  That is, we use the entries of the random function $f$ as random tape for a Gaussian sampling procedure $g$. Since the Gaussian distribution is spherically symmetric, such a $g$ has the properties that we need.

This approach indeed seems quite suitable but achieving (perfect) spherical symmetry is at odds with achieving computational efficiency, simply because the Gaussian distribution is continuous and has infinite support. Indeed, we will need to show a truncated discretized Gaussian distribution which on one hand can be sampled efficiently, and on the other hand provides approximate spherical symmetry. Note that the notion of approximation we are interested in here is with respect to the trace distance between the quantum state $\khv^{\otimes t}$ and a $t$-repetition of a random unit vector. This requires us to develop tools in order to relate this notion to standard notions such as Euclidean distance. These tools are not particularly complicated but we view them as fundamental and of potential to be used elsewhere.\footnote{We will not be surprised if they were already discovered and used in the literature, but we were unable to find a relevant reference.} We elaborate more on this in Section~\ref{sec:overview:apxgaussian} below, and the full details appear in Section \ref{sec:qit_tools}.


\paragraph{Second Technique: Rejection Sampling.}
The second challenge is addressed using a quantum analog of the \emph{rejection sampling} technique. Recall that in standard probability theory, if it is possible to sample from a distribution $p$ where $\Pr[x]=p_x$, then we can consider the experiment of first sampling from $p$, and then either outputting the sample received $x$ with probability $q_x$, or aborting and restarting the process with probability $1-q_x$. This process constitutes a sampler for the distribution $\frac{p_x q_x}{\sum_{x} p_x q_x}$. The probability of not aborting is $\sum_{x} p_x q_x$, and therefore the expected running time of the new sampler is $\frac{1}{\sum_{x} p_x q_x}$. In the quantum setting, a similar technique can be used for superpositions (Indeed, extensions of these technique were used e.g.\ in \cite{ozols2013quantum}).



In this work we use quantum rejection sampling to generate quantum states from scratch.
To create our state $\ketv$ we will start with the uniform superposition $\ket{u} = \sum_x \ket{x}$, and via a rejection process we can obtain (not necessarily with good pobability), any desired superposition $\ket{v}$.
The probability of success in the quantum case is $\frac{1}{d^2} \cdot \frac{\braket{v}{v}}{\braket{u}{u}}$, where $d$ is an \emph{a-priori} bound on $\max_x \abs{v_x}$ that needs to be given as a parameter to the rejection sampling procedure. (The algorithm and success probability are analogous to the classical version described above, when replacing $q_x$ with $\frac{v_x}{d}$ and considering $\ell_2$ norm instead of $\ell_1$.)

On the face of it, the rejection sampling procedure can work to create any state $\ket{v}$ when a bound $d$ is known. However, the probability of success can still be very small (e.g.\ negligible), so if we wish to use repetition to obtain $\ket{v}$, the expected running time will become very large (e.g.\ super-polynomial).
Fortunately, our vectors $\ket{v}$ are (approximately) Gaussian, which means that they have strong concentration properties that guarantee that with high probability two properties are satisfied.
The first is that all entries $v_x$ have roughly the same magnitude, up to a factor of $\poly(n,\secp)$.\footnote{Note that, e.g.\ tail bounds on the norm of a Gaussians asserts that the probability that its amplitude is beyond $k$ times standard deviation is at most $e^{-c \cdot k^2}$ for some constant $c$. This means that if we want to find a tail bound that applies to all $2^n$ components of the vector $\ket{v}$ at the same time via union bound, it suffices to use $k \approx \sqrt{n+\secp}$.} This allows us to choose the value $d$ in such a way that the rejection sampling algorithm will operate correctly.
The second property is that $\braket{v}{v} \approx 2^n$ (formally, $\braket{v}{v}$ is a constant factor away from $2^n$), this makes the probability of success noticeable (i.e.\ $1/\poly(n,\secp)$).
We informally call a vector that maintains the combination of these two properties "balanced".
By running in time $\poly(n, \secp)$ and repeating the process as needed we can amplify the success probability to $1-2^{-\secp}$.
We generalize these properties and provide a state generator for any oracle $v_x$ which satisfied the balance property, see Section~\ref{sec:qit_tools}.

Lastly, we note that while the first property above (bound on $d$) can be made to hold for any $n$, the second one (lower bound on $\braket{v}{v}$) might not hold with high enough probability. Special care needs to be taken in the case where $n$ is very small, since in that case concentration properties are insufficient to imply that $\braket{v}{v}$ does not fall far below its expected value with small yet significant probability (we wish to succeed with all but $2^{-\secp}$ probability, so anything higher than that is already significant). In such a case, the success probability of the rejection sampler might become negligibly small, which will lead to failure in generating a state.\footnote{We stress again that if the success probability becomes negligible with only negligible probability, e.g.\ $2^{-\sqrt{\secp}}$, this is still a problem since the state generator will simply fail with this probability and therefore we cannot hope to be $2^{-\secp}$ close to uniform.} Luckily, since the dimension of the vector $\ketv$ is $2^n$, good concentration kicks in already at $n \gtrsim \log(\secp)$, so we only need to worry about this issue when $n < \log(\secp)$. For such small $n$, the sampling algorithm can store the vector $\ket{v}$ in its entirety, and check whether the norm $\braket{v}{v}$ is sufficiently close to its expectation (which happens with constant probability). If the norm is not in the required range, we sample a new Gaussian.\footnote{Recall that we think of the values of the function $f(x)$ as the random tape of a Gaussian sampler $g$. We can consider a function $f$ with output length which is $\secp$ times the number of random bits used by the sampler $g$, so that we have sufficient randomness to re-run $g$ as needed.} Repeating this roughly $\secp$ times guarantees that we generate a ``balanced'' vector from a spherically symmetric distribution with all but $2^{-\secp}$ probability.

\subsection{Approximate Gaussians Under Tensored Trace Distance}
\label{sec:overview:apxgaussian}

We wish to do approximate sampling from the continuous Gaussian distribution using an efficiently locally sampleable distribution. If we wish to be fully precise, we need to consider Gaussian distributions over the complex regime. However, for the purpose of sampling, one can think of each complex coordinate just as two real-valued coordinates. For the purpose of this overview we will simplify things even further and assume that we wish to sample from a real-valued Gaussian, i.e.\ a vector in $\bbR^{2^n}$ instead of $\bbC^{2^n}$. Everything we discuss here be extended to the complex regime in a natural manner. From this point and on, our goal is to find an efficient sampler $g$ s.t.\ when sampling $v_x$ i.i.d from the distribution generated by $g$, and sampling $w_x$ from a continuous Gaussian, it holds that the trace distance (quantum optimal distinguishing probability) between the quantum states $\khv^{\otimes t}$ and $\khw^{\otimes t}$, is at most $\poly(t) \cdot 2^{-\secp}$ for all $t$. For any vectors $\ketv, \ketw$, we refer to the trace distance between $\khv^{\otimes t}$ and $\khw^{\otimes t}$ as the ``$t$-tensored trace distance'' between $\ketv$ and $\ketw$.

An efficiently sampleable distribution is necessarily discrete and supported over a finite segment, whereas the Gaussian distribution is continuous and supported over $(-\infty, \infty)$. Indeed, even in the classical setting Gaussian samplers need to handle this discrepancy. Usually, when one says that it is efficient to sample from the Gaussian distribution, they mean that it is possible to sample to within any polynomial precision and from a Gaussian truncated far enough away from the standard deviation that the probability mass that is chopped off is negligible.\footnote{An alternative to chopping the ends of the distribution is to construct a sampler that runs only in expected polynomial time and might run for a very long superpolynomial time with small probability. This approach is less suitable for our purposes.} We adopt a similar approach here. Formally, sampling to within a fixed precision is equivalent to sampling from a \emph{rounded} Gaussian distribution, i.e.\ the distribution obtained by sampling from a continuous Gaussian and then rounding the result to the nearest multiple of $\epsilon$, where $\epsilon$ indicates the required precision. Truncation means that we sample from the distribution obtained by sampling a Gaussian, and if the absolute value of the sampled value $x$ is at most some bound $B$, then return $x$, otherwise return $0$.
Setting $B$ to be sufficiently larger than the standard deviation, say by roughly a factor of $k$, would imply that the resulting distribution only distorts the Gaussian by $e^{-k^2}$ in total variation distance.
We set our sampler $g$ therefore to be a sampler from the $B$-truncated $\varepsilon$-rounded Gaussian distribution.
It is possible to sample from a distribution that's within $\varepsilon$ statistical distance from this distribution in time $\poly(\log(1/\epsilon), \log(B))$ by standard Gaussian sampling techniques, and therefore we can set $1/\epsilon$ to be a sufficiently large exponential function in $\secp, n$ and maintain the efficient sampling property.

The challenge, as already mentioned above, is to translate this intuitive notion of ``approximate Gaussian'' to one that is provable under tensored trace distance. In fact, we present a general analysis of the effects of truncation and rounding on tensored trace distance. We do this using a two-phase proof.

\paragraph{Part I: Tensored Trace Distance Respects Statistical Distance.}
We show that truncating a continuous Gaussian introduces negligible trace distance for \emph{any} number of copies $t$. This follows quite straightforwardly from the classical total variation distance bound between the distributions. In fact, we show a more general claim (Lemma~\ref{lemma:sd_td}): Let $\ketv$ and $\ketw$ be distributions over $n$-qubit states, such that their classical distributions as $2^n$-dimensional vectors are within classical statistical distance (total variation distance) $\delta$. Then their $t$-tensored trace distance is at most $\delta$ for all $t$. The intuition here (which can also be translated to a formal proof), is that even given an infinite number of repetitions, a quantum state does not contain more information than its $2^n$-dimensional coefficient vector. Therefore, a (computationally unbounded) adversary that attempts to distinguish $\khv^{\otimes t}$ and $\khw^{\otimes t}$ as quantum states cannot do better than a classical (computationally unbounded) adversary which receives $\ketv, \ketw$ as explicit vectors.

\paragraph{Part II: Tensored Trace Distance Respects Rounding.} We say that a distribution $\ketv$ is a rounding of a distribution $\ketw$ if $\ketv$ can be described as first sampling an element from $\ketw$ and then applying some mapping $\varphi$ s.t.\ for all $w$, $\norm{\varphi(w)-w}$ is bounded (say be some value $\delta$).\footnote{Note that we call this ``rounding'' but in general this can be applied in other situations.} We wish to show that if $\ketv$ is a rounding of $\ketw$ then these vectors are close under tensored trace distance. 

Let us start by considering the case $t=1$, i.e.\ the distinguisher needs to distinguish between the quantum states $\khv$ and $\khw$. It is well established that if $\khv$ and $\khw$ are close in Euclidean distance, then they are also close in trace distance. However, this does not complete the proof since we only have a bound on the Euclidean distance between the unnormalized vectors $\ketv$ and $\ketw$. Indeed, the notion we care about is the Euclidean distance when projected onto the unit sphere, or in other words the \emph{angular} distance induced by $\varphi$. In our case, our distribution $\ketw$ (the Gaussian) is such that the norm is quite regular with high probability, and this is preserved also for the rounded version (some straightforward yet fairly elaborate calculation is required in order to establish the exact parameters).\footnote{This introduces an additional layer of complication into our proof, as we will need to apply the rounding tool to a restriction of the Gaussian distribution for which the norm is well behaved. Since the ``regular norm'' variant is close in statistical distance to the standard Gaussian, this can be handled by our first technique above.}

Once we formalize the right notion of approximation (i.e.\ angular distance), it is possible to state a general lemma (Lemma~\ref{lemma:angular}) that shows that if $\varphi$ is s.t.\ the angular distance between its input and output (over the support of $\ketv$) is bounded, then the $t$-tensored trace distance degrades moderately with $t$. Therefore, if we start with a short enough angular distance, our trace distance will indeed be bounded by $\poly(t)/2^{\secp}$.

\section{Preliminaries}
\label{sec:prelim}

\subsection{Standard Notions and Notations}
During this paper we use standard notations from the literature.
For $n \in \bbN$,
\begin{itemize}
	\item
	We denote $[n] := \{1, \cdots, n \}$.
	
	\item 
	We denote by $[n]_2$ the $\ceil{\log_2(n)}$-bit binary representation of $n$.
	
	\item
	We denote by $\omega_{n}$ the complex root of unity of order $n$: $\omega_{n} := e^{\frac{2\pi i}{n}}$.
	
	\item
	We denote by ${\cal S}(n)$ the set of $n$-qubit pure quantum states, by ${\cal D}(n)$ the set of $n$-qubit mixed quantum states and by ${\cal U}(n)$ the set of $n$-qubit quantum unitary circuits.
	
	
	\item
	We sometimes denote $2^n$ with $N$, when we do that, we explicitly note it.
	
\end{itemize}

\paragraph{Vectors and Quantum States.}
We use standard Dirac notation throughout this paper, vectors are not assumed to be normalized unless explicitly mentioned.
 Specifically, for a column vector $u \in \bbC^{m}$, we denote $\ket{u} := u$, $\bra{u} := u^{\dagger}$, where $u^{\dagger}$ is the conjugate transposed of $u$. 
We usually let $\hat{u}$ denote the normalized version of the vector $u$, namely: $\hat{u} := \frac{1}{\norm{u}} \cdot u$ (where $u$ is a nonzero complex vector). Vectors that represent quantum states have unit norm and therefore are normalized by default.

We make a distinction between a \emph{vector} in a Hilbert space, and the \emph{quantum state} corresponding to this vector. The two objects are related as a complete characterization of a (pure) quantum state over $n$-qubits is characterized by a vector in a $2^n$-dimensional Hilbert space (up to normalization and global phase). However, the vector is not necessarily (and almost always is not) recoverable given the $n$-qubit state, and quantum states that correspond to different vectors can be indistinguishable (even perfectly).\footnote{Information theoretically, in the general case, one requires an infinite number of copies of a quantum state in order to precisely recover the vector in the Hilbert space that characterizes this state.} In terms of vector notation, the symbol $\ket{u}$ can refer either to the vector in the Hilbert space of to the quantum state that corresponds to this vector, we will explicitly mention which of the two we refer to when using this notation.

%
%

\paragraph{Distributions Over Quantum States as Density Matrices.}
Density matrices are a mathematical tool to describe mixed quantum states, that is, distributions over quantum states.
Formally, let $\mu$ a (possibly continuous) probability distribution over $n$-qubit quantum states, $\mu : {\cal S}(2^n) \rightarrow [0, 1]$, $\int_{\ket{\psi} \in {\cal S}(2^n)} 1 \text{d}\mu(\ket{\psi}) = 1$, then the density matrix induced by $\mu$ is denoted $\rho_\mu$ \znote{Introduced the notation $\rho_\mu$. I think this can be useful in the technical sections as well.} and defined as:
\begin{equation}\label{eq:rhodist}
\rho_\mu = \bbE_{\ket{\psi} \gets \mu}\Big[ \big( \ket{\psi}\bra{\psi} \big) \Big] :=
\int_{\ket{\psi} \in {\cal S}(2^n)} \big( \ket{\psi}\bra{\psi} \big) \text{d}\mu(\ket{\psi}) \enspace .
\end{equation}

\paragraph{Statistical Distance.}
We use basic properties of the statistical distance metric (also known as total variation distance).
Statistical distance can be described in terms of operations, that is, for two (possibly continuous) distributions $D_1$, $D_2$ with corresponding supports $S_1$, $S_2$, the statistical distance between $D_1, D_2$ is the maximal advantage,
$$
\abs{ \Pr_{x \gets D_1}\left[ \A(x) = 1 \right] - \Pr_{x \gets D_2}\left[ \A(x) = 1 \right] }
$$
taken over all functions $\A : S_1 \cup S_2 \rightarrow \{ 0, 1 \}$. We note that we can allow $\A$ to be randomized and obtain an equivalent definition. The statistical distance between two random variables is the statistical distance between their associated distributions.

Additionally, throughout the proof of Theorem \ref{thm:main} we will use the following fact about the statistical distance between a distribution and a conditional version of it.


\znote{Then maybe it is worthwhile to formally state it as fact and refer to it when needed?} \omri{Done.}\znote{This looks over-complicated. What you are stating is a very simple fact which is very simple to state in random variable terminology. How about the following.}\omri{Yeah, looks better, thanks.}
\begin{fact} \label{fact:statistical_condition}
	Let $X$ be a random variable and $E$ some probabilistic event. Denote $Y = X | \bar{E}$, i.e.\ the conditional variable of $X$ conditioned on $E$ \emph{not} happening. Then
	$$
	\sd(X, Y) \le \Pr[E] \enspace .
	$$
\end{fact}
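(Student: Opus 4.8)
The plan is to prove Fact~\ref{fact:statistical_condition} directly from the operational (distinguisher-based) characterization of statistical distance that was just introduced, since this avoids any measure-theoretic bookkeeping and handles the continuous case uniformly. First I would recall that by definition
\[
\sd(X, Y) = \max_{\A} \abs{ \Pr[\A(X) = 1] - \Pr[\A(Y)=1] } \enspace,
\]
where the maximum ranges over all (deterministic, without loss of generality) functions $\A$ into $\{0,1\}$. So it suffices to fix an arbitrary such distinguisher $\A$ and bound $\abs{\Pr[\A(X)=1] - \Pr[\A(Y)=1]}$ by $\Pr[E]$.

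The key step is to decompose $\Pr[\A(X)=1]$ according to whether the event $E$ occurs. Writing $p = \Pr[E]$ and conditioning, I would use
\[
\Pr[\A(X)=1] = p \cdot \Pr[\A(X)=1 \mid E] + (1-p) \cdot \Pr[\A(X)=1 \mid \bar{E}] \enspace.
\]
The crucial observation is that $Y = X \mid \bar{E}$ is defined to have exactly the conditional law of $X$ given $\bar{E}$, so $\Pr[\A(Y)=1] = \Pr[\A(X)=1 \mid \bar{E}]$. Substituting and taking the difference, the $(1-p)$ term cancels part of $\Pr[\A(X)=1]$, leaving
\[
\Pr[\A(X)=1] - \Pr[\A(Y)=1] = p \cdot \big( \Pr[\A(X)=1 \mid E] - \Pr[\A(X)=1\mid\bar E] \big) \enspace.
\]
Since both conditional probabilities lie in $[0,1]$, the bracketed factor has absolute value at most $1$, and hence the whole expression has absolute value at most $p = \Pr[E]$.

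Since this bound holds for every distinguisher $\A$, taking the maximum over $\A$ gives $\sd(X,Y) \le \Pr[E]$, as required. I do not expect any serious obstacle here; the only point that needs a little care is making sure the conditioning is well-defined, i.e.\ that $Y = X \mid \bar E$ only makes sense (and the statement is only nontrivial) when $\Pr[\bar E] > 0$, which is implicitly assumed. An equivalent and perhaps even cleaner route, if one prefers to avoid conditioning explicitly, is the standard coupling argument: one couples $X$ and $Y$ so that they agree whenever $\bar E$ holds, whereby any distinguishing event forces $E$ to occur, so the advantage is bounded by $\Pr[E]$. Either phrasing yields the claim in a couple of lines.
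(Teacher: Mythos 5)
Your proof is correct. The paper states Fact~\ref{fact:statistical_condition} without any proof, treating it as a standard fact, and your argument---conditioning $\Pr[\A(X)=1]$ on $E$, observing that $Y$ has exactly the law of $X$ given $\bar{E}$ so the $(1-p)$ terms cancel, and bounding the residual $p\cdot\bigl(\Pr[\A(X)=1 \mid E]-\Pr[\A(X)=1 \mid \bar{E}]\bigr)$ by $p=\Pr[E]$---is precisely the canonical argument the authors implicitly rely on, phrased via the same distinguisher-based definition of statistical distance that the paper itself adopts in its preliminaries.
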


\paragraph{Trace Distance.}
The trace distance, defined below, is a generalization of statistical distance to the quantum setting and represents the maximal distinguishing probability between distributions over quantum states.

\znote{Why in the classical case you define SD using distinguishing advantage but in the quantum case you define TD using $L_1$ norm? Isn't it better to define both in the same way?}\omri{Right, the norm-based definition is not really used in this paper.}


\begin{definition} [Trace Distance]
    Let $\rho_0, \rho_1 \in {\cal D}(2^n)$ be two density matrices of $n$-qubit mixed states.
	For a projective measurement $\A$ with output in $\{ 0,1 \}$ define
	$$
	\Delta_{\A, \rho_0, \rho_1} := \abs{
		\Pr\Big[ \A\big( \rho_0 \big) = 0 \Big] - 
		\Pr\Big[ \A\big( \rho_1 \big) = 0 \Big]
	} \enspace .
	$$
	The trace distance between $\rho_0$, $\rho_1$ is
	$$
	\td( \rho_0, \rho_1 ) := \max_{\{ 0,1 \}\text{ projective measurement }\A} \Delta_{\A, \rho_0, \rho_1} \enspace.
	$$
\end{definition}
We note that the trace distance is often equivalently defined as $\tfrac{1}{2} \norm{\rho_0 -\rho_1}_1$, where $\norm{\cdot}_1$ refers to the $\ell_1$ norm of the vector of eigenvalues of the operand matrix.

A standard fact about trace distance is the following.
\begin{fact}
    Let $D_0$, $D_1$ be two distributions over $n$-qubit states and let $\rho_0, \rho_1 \in {\cal D}(2^n)$ be the corresponding density matrices.
    For a projective measurement $\A$ with output in $\{ -1,1 \}$ define
	$$
	\Tilde{\Delta}_{\A, \rho_0, \rho_1} := \abs{
		\bbE_{\substack{\ket{\psi} \gets D_0, \\ \text{Measurement}}}
		\Big[ \A\big( \ket{\psi} \big) \Big] - 
		\bbE_{\substack{\ket{\psi} \gets D_1, \\ \text{Measurement}}}
		\Big[ \A\big( \ket{\psi} \big) \Big]
	} \enspace .
	$$
	Then,
	$$
	2\cdot \td( \rho_0, \rho_1 ) = \max_{\{ -1,1 \}\text{ projective measurement }\A} \Tilde{\Delta}_{\A, \rho_0, \rho_1} \enspace.
	$$
\end{fact}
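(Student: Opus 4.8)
The plan is to reduce the maximization over $\pm 1$ projective measurements to the standard variational characterization of the trace norm. First I would make the correspondence between measurements and operators explicit. A projective measurement $\A$ with outcomes in $\{-1,+1\}$ is described by a pair of orthogonal projectors $P_{+1}, P_{-1}$ with $P_{+1}+P_{-1} = I$. On a pure state $\ket{\psi}$ it returns $+1$ with probability $\bra{\psi}P_{+1}\ket{\psi}$ and $-1$ with probability $\bra{\psi}P_{-1}\ket{\psi}$, so its expected $\pm 1$ value is $\bra{\psi}M\ket{\psi}$ with $M := P_{+1}-P_{-1}$. Such an $M$ is precisely a Hermitian operator with $M^2 = I$, i.e.\ one whose eigenvalues all lie in $\{-1,+1\}$, and conversely every such operator arises from a unique $\pm 1$ projective measurement. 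Hence maximizing over $\A$ is the same as maximizing over this class of operators.

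Next, using linearity of expectation and of the trace together with $\rho_b = \bbE_{\ket{\psi}\gets D_b}[\ketbra{\psi}{\psi}]$, I would rewrite
\[
\bbE_{\ket{\psi}\gets D_b,\ \text{Meas}}\big[\A(\ket{\psi})\big] = \bbE_{\ket{\psi}\gets D_b}\big[\bra{\psi}M\ket{\psi}\big] = \tr(M\rho_b) ,
\]
so the quantity in the statement equals $\abs{\tr\big(M(\rho_0-\rho_1)\big)}$. Writing $\Delta := \rho_0 - \rho_1$ (a Hermitian, traceless operator), the problem reduces to evaluating $\max_M \abs{\tr(M\Delta)}$ over admissible $M$.

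Then I would prove matching inequalities. For the upper bound, H\"older's inequality for Schatten norms gives $\abs{\tr(M\Delta)} \le \norm{M}_\infty \norm{\Delta}_1 = \norm{\Delta}_1$, since the eigenvalues of $M$ have absolute value $1$. For the lower bound, I would take the spectral decomposition $\Delta = \Delta_+ - \Delta_-$ into its positive and negative parts and set $M^{\ast}$ to act as $+1$ on the positive eigenspace of $\Delta$ and $-1$ elsewhere; this $M^{\ast}$ is admissible and satisfies $\tr(M^{\ast}\Delta) = \tr(\Delta_+) + \tr(\Delta_-) = \norm{\Delta}_1$. Combining the two bounds yields $\max_M \abs{\tr(M\Delta)} = \norm{\Delta}_1 = \norm{\rho_0-\rho_1}_1$, which equals $2\td(\rho_0,\rho_1)$ by the equivalent formulation $\td(\rho_0,\rho_1) = \tfrac12\norm{\rho_0-\rho_1}_1$ noted just above.

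The only point requiring care is the treatment of $\ker\Delta$ when constructing $M^{\ast}$: to be a genuine $\pm 1$ projective measurement, $M^{\ast}$ must have every eigenvalue in $\{-1,+1\}$ and none equal to $0$, so the eigenvectors spanning the kernel must be assigned to one of the two projectors; this assignment is arbitrary and does not affect $\tr(M^{\ast}\Delta)$ because $\Delta$ vanishes on the kernel. Apart from this bookkeeping the statement is the textbook variational characterization of the trace distance, so I anticipate no substantive obstacle.
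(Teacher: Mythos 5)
Your proof is correct; note that the paper states this Fact without proof (as a ``standard fact''), so there is no in-paper argument to compare against. Your route --- identifying $\pm 1$ projective measurements with Hermitian operators $M$ satisfying $M^2 = I$, rewriting both expectations as $\tr(M\rho_b)$ so that $\Tilde{\Delta}_{\A,\rho_0,\rho_1} = \abs{\tr\left(M(\rho_0-\rho_1)\right)}$, and then computing $\max_M \abs{\tr(M\Delta)} = \norm{\Delta}_1$ via H\"older plus the Jordan decomposition (with the kernel bookkeeping you flag) --- is the standard variational characterization of the trace norm, and it is sound. One observation worth recording: your argument leans on the identity $\td(\rho_0,\rho_1)=\tfrac12\norm{\rho_0-\rho_1}_1$, which the paper asserts but never proves, whereas a slightly more elementary route ties the Fact directly to the paper's own measurement-based definition of $\td$ and needs no Schatten-norm machinery at all. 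Writing $M = P_{+1}-P_{-1} = 2P_{+1}-I$ and using that $\Delta := \rho_0-\rho_1$ is traceless (both states have unit trace), one gets $\tr(M\Delta) = 2\tr(P_{+1}\Delta)$; since $P_{+1}\mapsto M$ is a bijection between projectors and admissible $M$, the maximum of $\abs{\tr(M\Delta)}$ over $\pm 1$ measurements is exactly twice the maximum of $\abs{\tr(Q_0\Delta)}$ over $\{0,1\}$ measurements, which is $2\cdot\td(\rho_0,\rho_1)$ by definition. What your longer route buys is that it simultaneously establishes the $\ell_1$ identity the paper only asserts; what the shorter route buys is transparency about where the factor of $2$ comes from --- it is precisely the passage from $\{0,1\}$-valued to $\{-1,1\}$-valued outcomes, and nothing else.
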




The trace distance between pure states is given by the following expression.
\begin{fact}
    For $n$-qubit pure quantum states $\ket{\psi}, \ket{\phi}$, the trace distance between them is:
    $$
    \td\Big( \ketbra{\psi}{\psi}, \ketbra{\phi}{\phi} \Big) = \sqrt{1 - \abs{\braket{\psi}{\phi}}^2} \enspace .
    $$
\end{fact}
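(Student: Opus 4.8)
The plan is to use the equivalent characterization of trace distance mentioned right after the definition, namely $\td(\rho_0,\rho_1) = \tfrac{1}{2}\norm{\rho_0 - \rho_1}_1$, where $\norm{\cdot}_1$ is the sum of absolute values of the eigenvalues of the (Hermitian) operand. Setting $M := \ketbra{\psi}{\psi} - \ketbra{\phi}{\phi}$, the problem reduces to computing the eigenvalues of $M$. The key structural observation is that $M$ acts as zero on the orthogonal complement of $\text{span}\{\ket{\psi},\ket{\phi}\}$, so its only possibly-nonzero eigenvalues live inside this at-most-two-dimensional subspace. This lets me replace an a-priori $2^n$-dimensional eigenvalue computation with a $2 \times 2$ one.

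Concretely, I would pick a unit vector $\ket{\psi^{\perp}}$ orthogonal to $\ket{\psi}$ inside the relevant subspace and write $\ket{\phi} = \alpha \ket{\psi} + \beta \ket{\psi^{\perp}}$, where $\alpha = \braket{\psi}{\phi}$ and $\abs{\alpha}^2 + \abs{\beta}^2 = 1$ by normalization. In the orthonormal basis $\{\ket{\psi}, \ket{\psi^{\perp}}\}$ the two density matrices become
$$
\ketbra{\psi}{\psi} = \begin{pmatrix} 1 & 0 \\ 0 & 0 \end{pmatrix},
\qquad
\ketbra{\phi}{\phi} = \begin{pmatrix} \abs{\alpha}^2 & \alpha\bar{\beta} \\ \bar{\alpha}\beta & \abs{\beta}^2 \end{pmatrix},
$$
so that, using $1 - \abs{\alpha}^2 = \abs{\beta}^2$, the difference is
$$
M = \begin{pmatrix} \abs{\beta}^2 & -\alpha\bar{\beta} \\ -\bar{\alpha}\beta & -\abs{\beta}^2 \end{pmatrix}.
$$

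The remaining step is the (routine) eigenvalue calculation: $M$ has trace $0$ and determinant $-\abs{\beta}^4 - \abs{\alpha}^2\abs{\beta}^2 = -\abs{\beta}^2$, so its characteristic polynomial is $\lambda^2 - \abs{\beta}^2 = 0$, giving eigenvalues $\pm\abs{\beta}$. Hence $\norm{M}_1 = 2\abs{\beta}$ and $\td = \tfrac{1}{2}\norm{M}_1 = \abs{\beta} = \sqrt{1 - \abs{\alpha}^2} = \sqrt{1 - \abs{\braket{\psi}{\phi}}^2}$, as claimed. I do not expect any genuine obstacle here; the only conceptual point — and the step worth stating carefully — is the reduction to the two-dimensional subspace, after which everything is a direct $2\times 2$ computation. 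One minor edge case to note is that when $\ket{\psi}$ and $\ket{\phi}$ are linearly dependent (i.e. $\abs{\braket{\psi}{\phi}} = 1$) the span is one-dimensional, $\beta = 0$, and both sides vanish, so the formula holds trivially; otherwise $\ket{\psi^{\perp}}$ is well defined and the argument above applies verbatim.
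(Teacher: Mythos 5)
Your proof is correct. Note that the paper itself states this as a \emph{fact} without any proof (it is a standard result in quantum information theory), so there is no in-paper argument to compare against; your derivation is the standard one. Concretely: you reduce $M = \ketbra{\psi}{\psi} - \ketbra{\phi}{\phi}$ to the two-dimensional span of $\ket{\psi}$ and $\ket{\phi}$ (correct, since $M$ annihilates the orthogonal complement and preserves the span), observe the $2\times 2$ block is traceless with determinant $-\abs{\beta}^2$, hence has eigenvalues $\pm\abs{\beta} = \pm\sqrt{1-\abs{\braket{\psi}{\phi}}^2}$, and conclude via $\td = \tfrac{1}{2}\norm{M}_1$. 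All the matrix entries and the determinant computation check out, and the degenerate case $\abs{\braket{\psi}{\phi}}=1$ is handled. The only dependence worth flagging is that you invoke the equivalence between the paper's measurement-based definition of trace distance and the $\ell_1$-norm characterization; the paper asserts this equivalence immediately after its definition, so relying on it is legitimate, though a fully self-contained treatment would also prove that equivalence (Helstrom's theorem).
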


Trace distance is an operator on density matrices.
In this work we will sometimes use it directly on distributions, that is we denote $\td(D_1, D_2)$, where $D_1, D_2$ are distributions over $n$-qubit quantum states. This notation refers to the trace distance between the two density matrices induced by $D_1$ and $D_2$ (as per Eq.~\eqref{eq:rhodist}). That is,
$$
\td(D_1, D_2):= \td(\rho_{D_1}, \rho_{D_2}) =
\td
\bigg( 
\bbE_{\ket{\psi} \gets D_1} \Big[ \big( \ket{\psi}\bra{\psi} \big) \Big] \; , \; 
\bbE_{\ket{\psi} \gets D_2} \Big[ \big( \ket{\psi}\bra{\psi} \big) \Big]
\bigg) \enspace .
$$

\paragraph{Quantum Unitary for a Classical Function.}
Let $f : \{ 0, 1 \}^n \rightarrow \{ 0, 1 \}^m$ be a function. The unitary of $f$ is denoted by $U_f$, it is a unitary over $n + m$ qubits defined as
$$
\forall x \in \{ 0, 1 \}^n, y \in \{ 0, 1 \}^m : U_f\ket{x, y} := \ket{x, y \oplus f(x)} \enspace .
$$

\paragraph{Quantum Rejection Sampling.} \label{par:qrs}
Quantum Rejection Sampling (QRS) is a known efficient procedure for taking one quantum state $\ket{\alpha}$ and outputting with some probability a different quantum state $\ket{\beta}$, given black box access to a circuit that describes their closeness.
Formally, the algorithm $\qrs$ gets as input an $n$-qubit quantum state $\ket{\alpha}$ and quantum oracle access to a unitary $U$ on $n + k$ qubits (where $k$ is related to the binary description length for complex numbers that is being used) and have the following correctness and time complexity guarantees.

\begin{theorem} [Quantum Rejection Sampling] \label{thm:qrs}
    Let $\ket{\alpha}, \ket{\beta}$ be two $n$-qubit quantum states and let $U$ be an $(n + k)$-qubit unitary.
    Assume there exists a positive real number $d$ such that the following hold
    \begin{itemize}
	    \item
	    $d \geq \max_{x \in \{ 0, 1 \}n} \abs{\frac{\beta_x}{\alpha_x}}$.
	    
	    \item
	    $\forall x \in \{ 0, 1 \}^n$, the complex number $\frac{\left( \beta_x / \alpha_x \right)}{d}$ can be described with full precision in $k$ bits.
	
	    \item
	    $U$ is the unitary of the classical function $f : \{ 0, 1 \}^n \rightarrow \{ 0, 1 \}^k$ such that $f(x) :=  \frac{ \left( \beta_x / \alpha_x \right) }{d}$.
    \end{itemize}
    Then $\qrs^{U}(\ket{\alpha})$ outputs $(\success, \ket{\beta})$ with probability at least $\frac{1}{d^2}$ and otherwise outputs $(\fail, \ket{0^n})$.

    The algorithm makes a single query to $U$, and assuming this query takes a single time step, the time complexity of $\qrs^{U}(\ket{\alpha})$ is $\poly(n, k)$.
\end{theorem}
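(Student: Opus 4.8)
The plan is to implement the standard \emph{compute--rotate--uncompute--measure} template for quantum rejection sampling: use $U$ to coherently write $f(x) = (\beta_x/\alpha_x)/d$ into an ancilla, apply an amplitude-encoding rotation that imprints $f(x)$ onto a flag qubit, erase the ancilla with $U$ again, and finally measure the flag. The target branch will then be exactly $\ket{\beta}$, and the success probability will come out to precisely $1/d^2$.

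Concretely, I would initialize three registers, $\ket{\alpha}$ on the input wires, an ancilla $\ket{0^k}$, and a single flag qubit $\ket{0}$, and apply $U$ on the input and ancilla to obtain $\sum_x \alpha_x \ket{x}\ket{f(x)}\ket{0}$, using that $U$ is the unitary of $f$. Writing $\gamma_x := f(x) = (\beta_x/\alpha_x)/d$, the hypothesis $d \ge \max_x \abs{\beta_x/\alpha_x}$ guarantees $\abs{\gamma_x}\le 1$ everywhere on the support of $\alpha$, so the single-qubit map $R_\gamma:\ket{0}\mapsto \gamma\ket{0}+\sqrt{1-\abs{\gamma}^2}\ket{1}$ extends to a genuine unitary (e.g.\ with second column $(-\sqrt{1-\abs{\gamma}^2},\bar\gamma)^{\!\top}$). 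The controlled operation $W=\sum_\gamma \ketbra{\gamma}{\gamma}\otimes R_\gamma$ on the ancilla and flag is therefore unitary, and since each $\gamma$ arrives as its $k$-bit description, its rotation parameters can be computed reversibly and applied in $\poly(k)$ gates. After $W$ the state is $\sum_x \alpha_x \ket{x}\ket{\gamma_x}\big(\gamma_x\ket{0}+\sqrt{1-\abs{\gamma_x}^2}\ket{1}\big)$.

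Next I would erase the ancilla by applying $U$ once more (as an XOR-oracle it is its own inverse), returning the state to $\sum_x \alpha_x \ket{x}\ket{0^k}\big(\gamma_x\ket{0}+\sqrt{1-\abs{\gamma_x}^2}\ket{1}\big)$, and then measure the flag. The outcome $0$ occurs with probability $\sum_x \abs{\alpha_x\gamma_x}^2 = \sum_x \abs{\beta_x/d}^2 = \tfrac{1}{d^2}\braket{\beta}{\beta} = \tfrac{1}{d^2}$, using that $\ket{\beta}$ is a unit vector; conditioned on this outcome the ancilla is clean and the input register collapses to the normalization of $\sum_x (\beta_x/d)\ket{x}$, which is exactly $\ket{\beta}$. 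Hence $\qrs^{U}(\ket{\alpha})$ outputs $(\success,\ket{\beta})$ with probability exactly (and thus at least) $1/d^2$; on outcome $1$ it discards the registers and outputs $(\fail,\ket{0^n})$, matching the claimed behaviour. The cost is the $\poly(n,k)$-size circuit $W$ together with the use of $U$ to write and to erase $f$, giving the stated $\poly(n,k)$ time bound.

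I do not expect a deep obstacle here; the work is essentially careful bookkeeping. The two points that genuinely need verification are (i) that $R_\gamma$ is unitary \emph{because} $\abs{\gamma_x}\le 1$, which is exactly where the assumption on $d$ is used, and (ii) that the uncomputation is \emph{exact}, so that the post-measurement state carries no residual entanglement with the ancilla — without this clean-up the surviving branch would be a mixed state rather than the pure target $\ket{\beta}$. The only edge case, $\alpha_x=0$, is harmless: the norm hypothesis forces $\beta_x=0$ there as well, so such $x$ contribute zero amplitude throughout regardless of the (arbitrary) value of $f(x)$.
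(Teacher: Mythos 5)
The paper never actually proves this theorem: quantum rejection sampling is imported as a known primitive (cf.\ the citation of \cite{ozols2013quantum} in the technical overview), so there is no in-paper proof to compare against. Your write-up supplies the standard compute--rotate--uncompute--measure argument, and it is correct in substance: $R_\gamma$ is unitary precisely because $\abs{\gamma_x}\le 1$, which is where the hypothesis on $d$ enters; the exact uncomputation is what makes the surviving branch the pure, unentangled state $\ket{\beta}$; the success probability is $\sum_x \abs{\alpha_x\gamma_x}^2 = \braket{\beta}{\beta}/d^2 = 1/d^2$; and your handling of the $\alpha_x=0$ edge case is right.

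One discrepancy with the statement as written: the theorem claims the algorithm ``makes a single query to $U$,'' whereas your construction queries $U$ twice (once to write $f(x)$, once to erase it). This is not slack you could remove: skipping the uncomputation leaves the data register entangled with the ancilla, and measuring the ancilla instead would collapse the superposition to $\{x : f(x)=y\}$ for a single outcome $y$, which is only acceptable when $f$ is Boolean. For general $k$-bit-valued $f$ the second query appears essential to reach success probability $1/d^2$, so the ``single query'' phrasing is best read as an inaccuracy in the theorem statement (``a constant number of queries''), which is all the paper relies on --- the query count is simply absorbed into the $\poly(n,k)$ time bound used by $\bvs$ and into the depth bound of the $t$-design corollary. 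A second, minor idealization, shared with the theorem statement itself, is asserting that the controlled rotation $W$ can be applied exactly in $\poly(k)$ gates; exact synthesis of $\sqrt{1-\abs{\gamma}^2}$ and of the rotation over a finite gate set is glossed over, but this is the standard convention in this literature and does not affect how the theorem is used downstream.
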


\subsection{Pseudorandom Functions and $m$-Wise Independent Functions}

We define pseudorandom functions with quantum security (QPRFs).	
\begin{definition} [Quantum-Secure Pseudorandom Function (QPRF)] \label{def:qprf}
	Let $\Keys = \{ \Keys_n \}_{n \in \bbN}$ be an efficiently samplable key distribution, and let $\PRF = \{ \PRF_n \}_{n \in \bbN}$, $\PRF_n : \Keys_n \times \binset^n \rightarrow \binset^{\poly(n)}$ be an efficiently computable function, where $\poly(\cdot)$ is some polynomial.
	We say that $\PRF$ is a quantum-secure pseudorandom function if for every efficient non-uniform quantum algorithm $A = \{ A_n \}_{n \in \bbN}$ (with quantum advice) that can make quantum queries there exists a negligible function $\negl(\cdot)$ s.t. for every $n \in \bbN$,
	$$
	\abs{\Pr_{k \gets \Keys_n}[A_n^{\PRF_k } = 1] - \Pr_{f \gets (\binset^n)^{(\binset^n)}}[A_n^{f} = 1]} \leq \negl(n) \enspace .
	$$
\end{definition}
In \cite{zhandry2012construct}, QPRFs were proved to exist under the assumption that post-quantum one-way functions exist. 

We define $m$-wise independent functions as keyed functions s.t.\ when the key is sampled from the key distribution, then any $m$ different inputs to the function generate $m$-wise independent random variables.

\begin{definition}[$m$-Wise Independent Function] \label{def:t-wise_function}
	Let $n, m, p \in \bbN$, let $\Keys$ be a key distribution, and let $f$, $f : \Keys \times \binset^n \rightarrow \binset^{p}$ a function.
	$(f, \Keys)$ is an $m$-wise independent function if for every distinct $m$ input values $x_1, \cdots, x_{m} \in \binset^{n}$,
	$$
	\forall y_1, \cdots, y_{m} \in \binset^{p} : 
	\Pr_{k \gets \Keys}[f(k, x_1) = y_1 \land \cdots \land f(k, x_{m}) = y_{m}] = 2^{-p \cdot m} \enspace .
	$$
\end{definition}

Based on $m$-wise independent functions we define efficiently samplable $m$-wise independent function families.

\begin{definition}[Efficient $m(n)$-Wise Independent Function] \label{def:t-wise}
	Let $m(n), p(n) : \bbN \rightarrow \bbN$ be functions, let $\Keys = \{ \Keys_n \}_{n \in \bbN}$ be an efficiently samplable key distribution, and let $f = \{ f_n \}_{n \in \bbN}$, $f_n : \Keys_n \times \binset^n \rightarrow \binset^{p(n)}$ be an efficiently computable function.
	Then, if for every $n \in \bbN$, $(f_n, \Keys_n)$ is an $m(n)$-independent function, then $(f, \Keys)$ is an efficient $m(n)$-wise independent function.
\end{definition}


\subsection{Quantum Randomness and Pseudorandomness}

\subsubsection{The Haar Measure}
The Haar measure on quantum states is the quantum analogue of the classical uniform distribution over classical bit strings. That is, it is the uniform (continuous) probability distribution on quantum states.
Recall that an $n$-qubit quantum state can be viewed as a unit vector in $\bbC^{2^n}$, thus the Haar measure on $n$ qubits is the uniform distribution over all unit vectors in $\bbC^{2^n}$. In this work we denote the $n$-qubit Haar distribution with $\mu_n$.
From this point forward we refer to the uniform distribution over quantum states simply as ``random'', and don't mention specifically that it is  with respect to the Haar distribution.


\subsubsection{Scalable Asymptotically Random State Generators}
We propose a scalable variant to the notion of Asymptotically Random State (ARS) generators which was implicitly defined in \cite{JLS18} and explicitly in \cite{BS19}.
Previous works consider an ARS generator to be an efficient quantum algorithm $\Gen$ that gets quantum oracle access to $U_f : \ket{x, y} \rightarrow \ket{x, y \oplus f(x)}$ for a random classical function $f$, along with a parameter $n \in \bbN$ that denotes the number of desired output qubits.
The guarantee of the ARS generator is that for any polynomial $t(n)$ in $n$, $t(n)$ outputs from $\Gen^{U_f}$ (executed with the same function $f$) have negligible trace distance (\emph{in $n$}) from $t(n)$-copies of a random $n$-qubit state
This means that $n$ plays two roles, it denotes the number of qubits in the output state but also the security parameter that determines the quality of randomness (i.e. how indistinguishable it is from random).

A {\it Scalable} ARS generator is one that gets two parameters $n, \secp$ instead of one.
$n$, as before, denotes the number of wanted output qubits, and $\secp$ is a security parameter, thus a scalable ARS generator eliminates the dependence between state size and security.

\begin{definition} [Asymptotically Random State (ARS) Generator] \label{def:ars}
	A quantum polynomial-time algorithm $\Gen$ with input $(1^n, 1^\secp)$ for $n, \secp \in \bbN$ and quantum oracle access to $U_f : \ket{x, y} \rightarrow \ket{x, y \oplus f(x)}$ for $f : \{ 0, 1 \}^n \rightarrow \{ 0, 1 \}^{\poly(n, \secp)}$, is an ARS generator if there exists a negligible function $\negl(\cdot)$ s.t. for every polynomial $t : \bbN \rightarrow \bbN$, for all natural numbers $n, \secp$,
	$$
	\td
	\big( 
	D_1 , D_2 
	\big) \leq
	\negl(\secp) \enspace ,
	$$
	where the distributions $D_1, D_2$ are defined as follows.
	\begin{itemize}
		\item $D_1 : $
		Sample $f \gets \left( \{ 0, 1 \}^{\poly(n, \secp)} \right)^{ \{ 0, 1 \}^n }$, perform $t(\secp)$ independent executions of $\Gen^{U_f}(1^n, 1^\secp)$ and output the $t(\secp)$ output quantum states.
		
		\item $D_2 : $
		Sample $\ket{\psi} \gets \mu_n$ a random $n$-qubit quantum state, and output $t(\secp)$ copies of it: $\ket{\psi}^{\otimes t(\secp)}$. Recall that $\mu_n$ is the Haar measure on $n$ qubits.
	\end{itemize}
\end{definition}


We next define (scalable) quantum state $t$-design generators and (scalable) pseudorandom quantum state (PRS) generators.
After defining these, we briefly describe a general and simple reduction structure that shows how to construct $t$-designs and PRS generators from any ARS generator.

\subsubsection{Approximate Quantum State $t$-Designs}
A quantum state $t$-design \cite{ambainis2007quantum} is a distribution over quantum states that mimics the uniform distribution over quantum states when the number of output copies is restricted to $t$. 
A (scalable, approximate) quantum state $t$-design generator consists of two quantum algorithms $K, G$.
The key sampler algorithm $K$ samples a classical key $k$ given two parameters $1^n, 1^\secp$ where $n$ denotes the number of qubits and $\secp$ denotes the security parameter.
The state generation algorithm $G$ gets a key $k$ and outputs an $n$-qubit state $\ket{\psi}$.
Informally, the randomness gaurantee of a $t$-design generator is that if we sample a key $k$ once from $K(1^n, 1^\secp)$ and then execute $G(k)$ $t$ times and output the $t$ outputs, then this output distribution is going to be indistinguishable from $t$ copies of an $n$-qubit quantum state, for unbounded quantum distinguishers.
The formal definition follows.

\begin{definition} [$\varepsilon(\secp)$-Approximate State $t(\secp)$-Design Generator]
	Let $\varepsilon(\secp): \bbN \rightarrow [0, 1]$, $t(\secp): \bbN \rightarrow \bbN$ be functions.
	We say that a pair of quantum algorithms $(K, G)$ is an $\varepsilon(\secp)$-approximate state $t(\secp)$-design generator if the following holds:
	\begin{itemize}
		\item \textbf{Key Generation.}
		For all $n, \secp \in \bbN$, $K(1^n, 1^\secp)$ always outputs a classical key $k$.
		
		\item \textbf{State Generation.}
		Given $k$ in the support of $K(1^n, 1^\secp)$ the algorithm $G(1^n, 1^\secp, k)$ will always output an $n$-qubit quantum state.
		
		\item \textbf{Approximate Quantum Randomness.}
		For all $n, \secp \in \bbN$, 
		$$
		\td
		\big( 
		D_1 , D_2 
		\big) \leq
		\varepsilon(\secp) \enspace ,
		$$
		where the distributions $D_1, D_2$ are defined as follows.
		\begin{itemize}
			\item $D_1 : $
			Sample $k \gets K(1^n, 1^\secp)$, perform $t(\secp)$ independent executions of $G(1^n, 1^\secp, k)$ and output the $t(\secp)$ output quantum states.
			
			\item $D_2 : $
			Sample $\ket{\psi} \gets \mu_n$ a random $n$-qubit quantum state, and output $t(\secp)$ copies of it: $\ket{\psi}^{\otimes t(\secp)}$.
		\end{itemize}
	\end{itemize}
\end{definition}
It is not part of the standard definition, but it is usually the case that the algorithms $K, G$ execute in time $\poly(n, \secp)$, which is going to be the case in this work as well.

\subsubsection{Pseudorandom Quantum States}
We define scalable Pseudorandom State (PRS) generators.
Compared to $t$-designs, Quantum Pseudorandom State Generators have a slight difference, and formally incomparable randomness guarantee.
Mainly, with a PRS we are guaranteed that the output state is going to be indistinguishable for any polynomial number of copies $t(\secp)$ {\it without} knowing in advance $t(\secp)$, however this indistinguishability is only {\it computational}. That is, it is only guaranteed that computationally bounded distinguishers will be unable to tell the difference between $t(\secp)$ executions of the generator and $t(\secp)$ copies of a random quantum state.
The scalability property maintains the ability to increase security without increasing the state size $n$.
We remind that the notion of scalability in PRS generators was not considered in previous works \cite{JLS18, BS19} and thus the following definition differs a bit from the previous definition of a PRS, we view this as the more proper definition.

\begin{definition} [Scalable Pseudorandom Quantum State (PRS) Generator] \label{def:prs}
	We say that a pair of polynomial-time quantum algorithms $(K, G)$ is a Pseudorandom State (PRS) Generator if the following holds:
	\begin{itemize}
		\item \textbf{Key Generation.}
		For all $n, \secp \in \bbN$, $K(1^n, 1^\secp)$ always outputs a classical key $k$.
		
		\item \textbf{State Generation.}
		Given $k$ in the support of $K(1^n, 1^\secp)$ the algorithm $G(1^n, 1^\secp, k)$ will always output an $n$-qubit quantum state.
		
		\item \textbf{Quantum Pseudorandomness.}
		For any polynomial $t(\cdot)$ and a non-uniform polynomial-time quantum algorithm $A = \{ A_\secp \}_{\secp \in \bbN}$ (with quantum advice) there exists a negligible function $\negl(\cdot)$ such that for all $n, \secp \in \bbN$,
		$$
		\abs{\Pr[A_\secp \big( D_1 \big) = 1] -
			\Pr[A_\secp \big( D_2 \big) = 1]} \leq \negl(\secp) \enspace ,
		$$
		where the distributions $D_1, D_2$ are defined as follows.
		\begin{itemize}
			\item $D_1 : $
			Sample $k \gets K(1^n, 1^\secp)$, perform $t(\secp)$ independent executions of $G(1^n, 1^\secp, k)$ and output the $t(\secp)$ output quantum states.
			
			\item $D_2 : $
			Sample $\ket{\psi} \gets \mu_n$ a random $n$-qubit quantum state, and output $t(\secp)$ copies of it: $\ket{\psi}^{\otimes t(\secp)}$.
		\end{itemize}
	\end{itemize}
\end{definition}

\subsubsection{Scalable PRS and Quantum State $t$-Design Generators from Scalable ARS Generators}
\label{sec:consequences}

We recall a generic transformation from previous works that explain how to construct PRS generators and quantum state $t$-designs from any ARS generator.
We start with the paradigm from \cite{JLS18, BS19} that explains a simple way to turn any ARS generator into a PRS generator.

\begin{lemma}
	If there exists a scalable ARS generator and post-quantum one-way functions exist, then there exists a scalable PRS generator.
\end{lemma}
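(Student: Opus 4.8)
The plan is to instantiate the oracle of the scalable ARS generator with a quantum-secure pseudorandom function, and then argue via a two-step hybrid that the resulting states are computationally indistinguishable from Haar-random ones. Concretely, let $\Gen$ be the scalable ARS generator guaranteed by the hypothesis, which on input $(1^n,1^\secp)$ expects quantum oracle access to $U_f$ for a random $f:\binset^n\to\binset^{\poly(n,\secp)}$. By \cite{zhandry2012construct}, post-quantum one-way functions yield a QPRF $\PRF$; I would invoke it with \emph{security parameter $\secp$}, input length $n$, and output length $\poly(n,\secp)$, so that its distinguishing advantage against quantum-query adversaries is $\negl(\secp)$ \emph{independently} of $n$. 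The PRS generator $(K,G)$ is then defined by letting $K(1^n,1^\secp)$ sample and output a QPRF key $k\gets\Keys$, and letting $G(1^n,1^\secp,k)$ run $\Gen(1^n,1^\secp)$ while answering each (possibly superposition) oracle query with the unitary $U_{\PRF_k}$. Both algorithms run in time $\poly(n,\secp)$ since $\Gen$ and $\PRF$ are efficient, and the key-generation and state-generation syntactic requirements of Definition~\ref{def:prs} are inherited directly from $\Gen$ (a rare $\bot$ output of $\Gen$ may be replaced by a fixed default state such as $\ket{0^n}$, which perturbs the analysis only by the negligible failure probability of the ARS).

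For the pseudorandomness guarantee, fix a polynomial $t(\cdot)$ and an efficient distinguisher $A=\{A_\secp\}$, and consider three distributions over $t(\secp)$-tuples of $n$-qubit states: $H_0$, the PRS distribution $D_1$ (sample $k\gets K$, output $t(\secp)$ executions of $G(1^n,1^\secp,k)$); $H_1$, the ARS distribution $D_1$ (sample a truly random $f$, output $t(\secp)$ executions of $\Gen^{U_f}(1^n,1^\secp)$); and $H_2=D_2$, namely $t(\secp)$ copies of a single Haar-random state $\ket{\psi}\gets\mu_n$. The statement follows from bounding $|\Pr[A_\secp(H_0)=1]-\Pr[A_\secp(H_1)=1]|$ and $|\Pr[A_\secp(H_1)=1]-\Pr[A_\secp(H_2)=1]|$ separately by $\negl(\secp)$ and combining by the triangle inequality.

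The step $H_1\approx H_2$ is immediate from the ARS guarantee of Definition~\ref{def:ars}: we have $\td(H_1,H_2)\le\negl(\secp)$, and since trace distance upper bounds the advantage of \emph{any} distinguisher (in particular the bounded $A_\secp$), the corresponding advantage is $\negl(\secp)$. The step $H_0\approx H_1$ is where the QPRF is used. I would build a reduction $B$ that receives oracle access to a function $O$ which is either $\PRF_k$ (for $k\gets\Keys$) or a uniformly random $f$, simulates the $t(\secp)$ independent executions of $\Gen(1^n,1^\secp)$ by forwarding their oracle queries to $O$, feeds the resulting tuple of states to $A_\secp$, and outputs $A_\secp$'s bit. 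When $O=\PRF_k$ the input to $A_\secp$ is exactly $H_0$, and when $O=f$ it is exactly $H_1$; since $B$ is an efficient quantum algorithm making $t(\secp)\cdot\poly(n,\secp)$ quantum queries to $O$, QPRF security with security parameter $\secp$ bounds its advantage, and hence that of $A_\secp$, by $\negl(\secp)$.

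The main subtlety, and the one point deserving care, is that $\Gen$ issues its oracle calls \emph{in superposition}, so the reduction $B$ queries $O$ quantumly; the argument therefore genuinely relies on the quantum-query security of the QPRF as formalized in Definition~\ref{def:qprf}, rather than on ordinary classical PRF security. The second delicate bookkeeping issue is decoupling security from state size: because the QPRF can be set up with security parameter $\secp$ while accepting $n$-bit inputs for arbitrary $n$, the advantage in both hybrid steps is controlled by $\secp$ alone, which is precisely what makes the resulting PRS \emph{scalable} in the sense of Definition~\ref{def:prs}.
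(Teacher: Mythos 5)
Your proposal is correct and follows essentially the same route as the paper's proof sketch: instantiate the ARS oracle with a QPRF key via $K$, set $G(1^n,1^\secp,k):=\Gen^{U_{\PRF_k}}(1^n,1^\secp)$, and conclude by the two hybrid steps (quantum-query PRF security to swap $\PRF_k$ for a truly random $f$, then the ARS trace-distance guarantee against unbounded distinguishers). The only difference is parameterization: the paper invokes the QPRF with security parameter $n+\secp$, so that Definition~\ref{def:qprf} (where input length is tied to the security parameter) directly provides a large enough domain and security negligible in $\secp$, whereas your choice of security parameter $\secp$ with input length $n$ implicitly assumes a QPRF whose domain is decoupled from its security parameter --- a harmless strengthening, since the GGM-based construction of \cite{zhandry2012construct} supports arbitrary input lengths.
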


\begin{proof}[Proof Sketch.]
The proof follows the same lines as the proof of \cite[Claim 4, Section 3.1]{BS19}, with the additional scalability property.
The key generator $K(1^n, 1^\secp)$ of the PRS is the key generator of some quantum-secure pseudorandom function $\PRF$ with security parameter $n + \secp$.
For a sampled PRF key $k$, the state generator algorithm $G$ simply executes the ARS generator with the pseudorandom function instead of the truly random function, $G(1^n, 1^\secp, k) := \Gen^{U_{\PRF_k}}(1^n, 1^\secp)$.
For a polynomial $t(\cdot)$, $t(\secp)$ copies of the generated distribution are computationally indistinguishable (by quantum adversaries) from $t(\secp)$ copies of the standard output distribution of the ARS generator, by the security guarantee of the PRF.
Additionally, $t(\secp)$ copies of the output distribution of the ARS is already known to be indistinguishable (by unbounded distinguishers) from $t(\secp)$ copies of a random quantum state, and our proof is concluded.
\end{proof}

Also, we follow the observation from \cite{BS19} that explains how an ARS generator implies the existence of $t$-designs (with depth that has logarithmic dependence on $t$).

\begin{lemma}
	Assume there exists a scalable ARS generator with the following properties:
	\begin{itemize}
		\item The generator is implemented by a circuit of depth $T(n, \secp)$.
		\item For all $n, \secp, t$ its output is $\varepsilon(n, \secp, t)$-indistinguishable from a $t$-tensor of a random $n$-qubit state.
	\end{itemize}
	 Then there exists an $\varepsilon(n, \secp, t)$-approximate scalable $t$-design generator, which is implementable by circuits of depth
	$$
	T(n, \secp)\cdot \log(n) \cdot \log (2\cdot t \cdot T(n, \secp)) \enspace .
	$$
\end{lemma}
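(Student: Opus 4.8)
The plan is to follow the derandomization paradigm of \cite{BS19}: since an ARS generator uses only \emph{oracle} access to the truly random function $f$, I can obtain a stateless $t$-design by sampling, once and for all, a key $k$ for an $m$-wise independent function $f_k : \binset^n \to \binset^p$ (where $p = \poly(n,\secp)$ is the oracle output length of $\Gen$) and answering every oracle query of $\Gen$ with the unitary $U_{f_k}$. Concretely, the key generator $K(1^n,1^\secp)$ samples such a key with $m := 2\cdot t\cdot T(n,\secp)$, and the state generator is $G(1^n,1^\secp,k) := \Gen^{U_{f_k}}(1^n,1^\secp)$. The only real degree of freedom is the choice of $m$, which must be large enough that $f_k$ is indistinguishable from a random oracle across \emph{all} queries issued in the entire $t$-copy experiment.

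For correctness I would first bound the total number of quantum queries. Since each execution of $\Gen$ has depth $T(n,\secp)$ it issues at most $T(n,\secp)$ queries, so running it $t$ times makes at most $t\,T(n,\secp)=m/2$ queries in total. The key point, and the main obstacle, is that these queries are \emph{quantum}: a single query already touches the whole truth table of $f_k$ in superposition, so one cannot naively argue that only $m$ points of $f_k$ are ever ``read''. This is exactly what Zhandry's polynomial method resolves \cite{zhandry2012construct,Z19}: the density matrix produced by any quantum algorithm making $q$ queries is a polynomial of degree at most $2q$ in the indicator variables of the function's truth table, so its expectation over the choice of function depends only on the $2q$-wise marginals. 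Because a $2q$-wise independent $f_k$ reproduces those marginals exactly, the $t$-design distribution $D_1$ is \emph{identical} to the distribution obtained by running $\Gen$ with a genuinely random oracle. Invoking the assumed ARS guarantee, the latter is within trace distance $\varepsilon(n,\secp,t)$ of $t$ copies of a Haar-random state, which yields the claimed $\varepsilon(n,\secp,t)$-approximation.

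It then remains to bound the circuit depth, where the only nontrivial ingredient is realizing $U_{f_k}$ in small depth. I would instantiate the $m$-wise independent family by a uniformly random degree-$(m-1)$ polynomial over a field $\bbF_{2^s}$ with $s=O(n)$ (embedding $\binset^n$ into the field, and producing the $p$-bit output by evaluating $\lceil p/s\rceil$ such polynomials \emph{in parallel}, which grows the width but not the depth). The crucial choice is to evaluate the polynomial in parallel rather than by Horner's rule: the powers $x,x^2,x^4,\dots$ are obtained by repeated squaring in $O(\log m)$ sequential field multiplications, each power $x^i$ is a product of $O(\log m)$ of these, and the sum $\sum_i a_i x^i$ is a depth-$O(\log m)$ tree of characteristic-two (i.e.\ XOR) additions. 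As a multiplication in $\bbF_{2^s}$ with $s=O(n)$ has depth $O(\log n)$, a single evaluation of $f_k$, hence each simulated oracle layer, has depth $O\big(\log m\cdot \log n\big)=O\big(\log(2\,t\,T(n,\secp))\cdot \log n\big)$. Composing with the $T(n,\secp)$ query-layers of $\Gen$ gives total depth $O\big(T(n,\secp)\cdot \log(n)\cdot \log(2\,t\,T(n,\secp))\big)$, matching the stated bound up to the absorbed constant.

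I expect the two delicate points to be: (i) the quantum-query argument above, which is the entire reason $m$-wise independence suffices despite superposition queries and which I would cite rather than reprove; and (ii) the bookkeeping that the per-execution query count is controlled by the depth $T$, so that $m=2\,t\,T$ is a valid choice. The parallel polynomial-evaluation bound is standard, but it must be stated with care to land a $\log(n)$ factor (rather than $\log(\poly(n,\secp))$), which is why the output length is handled by parallel evaluations over a field of size $2^{O(n)}$.
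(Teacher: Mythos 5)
Your proposal is correct and follows essentially the same route as the paper's proof: sample a key for a $2tT(n,\secp)$-wise independent function, run $\Gen$ with it in place of the random oracle, invoke Zhandry's result (\cite[Fact 2]{zhandry2012construct}) that $q$ quantum queries to a $2q$-wise independent function are perfectly indistinguishable from queries to a random function, and then account for the depth of evaluating the function. The only difference is that you spell out the low-depth implementation of the $m$-wise independent family (parallel polynomial evaluation over $\bbF_{2^{O(n)}}$), which the paper simply cites from \cite{BS19}.
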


\begin{proof}[Proof Sketch.]
The proof is similar to the explanation in \cite[Section 3.2]{BS19}, with slight differences and an additional consideration of the scalability property.
The key generator $K(1^n, 1^\secp)$ of the $t$-design samples an efficient $m$-wise independent function $\tilde{f}$, where $m := 2t\cdot T(n, \secp)$. The state generator algorithm $G$ executes the ARS generator with the function $\tilde{f}$ instead with the truly random function, $G(1^n, 1^\secp, \tilde{f}) := \Gen^{U_{\tilde{f}}}(1^n, 1^\secp)$.
By \cite[Fact 2]{zhandry2012construct}, The behavior of any quantum algorithm making at most $m$ quantum queries to a $2m$-wise independent function is identical to its behavior when the queries are made to a random function. Therefore if we make $t$ executions of $G(1^n, 1^\secp, \tilde{f})$, each of which makes at most $T(n, \secp)$ queries to $U_{\tilde{f}}$, then the output distribution of the algorithm $G(1^n, 1^\secp, \tilde{f})$ is the same as that produced by the ARS generator (when it uses a truly random function).
Since the classical depth of an $m$-wise independent function on $n$ bits is $\log(n) \cdot \log (m)$, the proof follows (see elaboration on the classical depth of $m$-wise independent functions in \cite[Section 3.2]{BS19}).
\end{proof}

\subsection{The Continuous Gaussian and Rounded Gaussian Distributions}

In this work we will work with distributions related to the Gaussian distribution over $\bbR$ denoted $\normalDist$, also known as the normal distribution having a mean of 0 and variance of 1.
More specifically we will consider the complex Gaussian distribution over $\bbC$, denoted $\normalCompDist$, where both real and imaginary parts of a complex number are sampled independently from $\normalDist$.

\paragraph{Rounded Gaussian Distribution.}
The true Gaussian distribution is continuous and we cannot {\it exactly} sample from it. Instead, we will use a discrete distribution that we can efficiently sample from.
There are quite a few versions of distributions that are discretezations of the Gaussian distribution. In this work we use the rounded Gaussian distribution, which we denote by $\SingleDistribution{\varepsilon, B}$. This distribution is parameterized by $\varepsilon = 2^{-m} > 0$ (for some $m \in \bbN$) and by $B \in \bbN$, where $B$ is some integer multiple of $\varepsilon$.


To define the distribution $\SingleDistribution{\varepsilon, B}$ we first define the rounding function $R_{(\varepsilon, B)}(\cdot)$.
For a number $x \in \bbR$, if $|x| > B$ then $R_{(\varepsilon, B)}(x) := 0$, and otherwise $R_{(\varepsilon, B)}(x)$ rounds $x$ up (in absolute value) to the nearest multiple of $\varepsilon$.
Formally, if $|x| \leq B$ then $R_{(\varepsilon, B)}(x)$ is the number $y \in \bbR$ that has minimal absolute value and s.t. both $|x| \leq |y|$, $\exists k \in \bbZ : y = k\cdot \varepsilon$.
For a complex number $z \in \bbC$, $R_{(\varepsilon, B)}(z)$ is just applying $R_{(\varepsilon, B)}(\cdot)$ to both real and imaginary parts of $z$.

We define $\SingleDistribution{\varepsilon, B}$ to be the output distribution of the following process: Sample $z \gets \normalCompDist$ and output $R_{(\varepsilon, B)}(z)$.
The output of $\SingleDistribution{\varepsilon, B}$ is specified by a number between $0$ and $B$ with precision $\varepsilon = 2^{-m}$, thus the output length in bits is bounded by $m + \ceil{\log_2(B)}$.

We use the following standard fact about (classical) Gaussian sampling.
\begin{fact} [Efficient Rounded Gaussian Sampling] \label{fact:gsample}
    There is a sampling algorithm $\gauss{}$ that takes $1^m, B$ (and random tape) as input, runs in polynomial time, i.e.\ $\poly(m, \log B)$, and samples from a distribution that has statistical distance at most $2^{-m}$ from the rounded Gaussian distribution $\SingleDistribution{2^{-m}, B}$.
\end{fact}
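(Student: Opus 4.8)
The plan is to use inverse-CDF (quantile) sampling, reducing the complex case to two independent one-dimensional draws and exploiting that the rounded-Gaussian CDF coincides with the continuous Gaussian CDF on the rounding grid (away from the truncated tails).

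First I would reduce to a single real coordinate. Since the real and imaginary parts of $\normalCompDist$ are independent and $R_{(\varepsilon,B)}$ acts separately on each part, the target $\SingleDistribution{\varepsilon, B}$ is \emph{exactly} the product of two copies of the one-dimensional rounded Gaussian $D$ (sample $x\gets\normalDist$, output $R_{(\varepsilon,B)}(x)$). Because the statistical distance between products of independent distributions is at most the sum of the marginal statistical distances, it suffices to build a one-dimensional sampler $\tilde D$ with $\sd(\tilde D, D)\le 2^{-(m+1)}$ and output an independent pair; then $\sd(\tilde D\times\tilde D,\, D\times D)\le 2^{-m}$ as required.

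For the one-dimensional sampler I would use inverse-CDF sampling. Write $\Phi$ for the standard normal CDF and let $\varepsilon=2^{-m}$. Because $R_{(\varepsilon,B)}$ rounds up in absolute value, the mass $D$ places on a positive grid point $k\varepsilon$ with $0<k\varepsilon\le B$ is exactly $\Phi(k\varepsilon)-\Phi((k-1)\varepsilon)$ (symmetrically for negative grid points), while the two tail events $\abs{x}>B$ are both mapped to $0$. Thus within the bulk region the CDF of $D$ agrees with $\Phi$ at every grid point, and the sampler is: draw $U$ uniform in $[0,1)$ truncated to $p$ bits; if $U<\tilde\Phi(-B)$ or $U\ge\tilde\Phi(B)$ output $0$; otherwise binary-search the grid for the unique $k$ with $\tilde\Phi((k-1)\varepsilon)\le U<\tilde\Phi(k\varepsilon)$ (and the analogous branch on the negative side) and output $\pm k\varepsilon$. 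Here $\tilde\Phi$ approximates $\Phi$ to additive error $2^{-p}$; it is standard that $\Phi(y)=\tfrac12(1+\mathrm{erf}(y/\sqrt2))$ is computable to additive error $2^{-p}$ for $\abs{y}\le B$ in time $\poly(p,\log B)$ via its rapidly converging series. The grid has at most $2B\cdot 2^m+1$ points, so the binary search terminates in $O(m+\log B)$ steps, each evaluating $\tilde\Phi$ once, and never materializes the grid; the total running time is $\poly(p,m,\log B)$. For the error analysis I would couple the idealized sampler (continuous $U$, exact $\Phi$, which reproduces $D$ exactly) with the implemented one: discretizing $U$ to $p$ bits and replacing $\Phi$ by $\tilde\Phi$ moves each bracket boundary by at most $2\cdot 2^{-p}$, changing the probability of each atom by $O(2^{-p})$. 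Summing over the at most $2B\cdot 2^m+1$ atoms gives $\sd(\tilde D,D)\le O(B\cdot 2^m\cdot 2^{-p})$, so taking $p=2m+\log B+O(1)=\poly(m,\log B)$ makes this at most $2^{-(m+1)}$.

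The main obstacle to flag is precisely this last accumulation: the grid is exponentially large (about $2B\cdot 2^m$ atoms), so a naive per-atom error bound might suggest that the required precision scales with the support size. The point to verify carefully is that the number of atoms enters only through its \emph{logarithm} in the needed precision $p$, so $\poly(m,\log B)$ bits of randomness in $U$ together with $\poly(m,\log B)$-precision evaluation of $\mathrm{erf}$ suffice; granting the (standard) efficient high-precision computability of $\mathrm{erf}$, everything else is routine bookkeeping.
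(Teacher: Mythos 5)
Your proof is correct, but there is nothing in the paper to compare it against: the paper states this as Fact~\ref{fact:gsample} and invokes it as a \emph{standard} fact about classical Gaussian sampling, offering no proof, so your writeup fills a gap rather than mirrors an existing argument. The structure is sound: the reduction to one real coordinate (the target is exactly a product of two i.i.d.\ one-dimensional rounded Gaussians, and statistical distance is subadditive over independent components), the inverse-CDF sampler with binary search over the implicit grid, and---the point you rightly flag as the crux---the accounting showing that the roughly $2B\cdot 2^m$ atoms contribute $O(2^{-p})$ error each, so precision $p = 2m + \log B + O(1) = \poly(m,\log B)$ suffices and the support size enters only through its logarithm. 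The one caveat is your justification for evaluating $\Phi$: the Maclaurin series of $\mathrm{erf}$ is \emph{not} rapidly converging for arguments of size up to $B$ (its terms grow until index $\approx y^2$, so a literal series evaluation would cost $\poly(B)$, not $\poly(\log B)$). The standard fix is to clamp: for $y \ge \sqrt{2p\ln 2}$ set $\tilde\Phi(y) := 1$ (and symmetrically $\tilde\Phi(y) := 0$ for $y \le -\sqrt{2p\ln 2}$), which is within $2^{-p}$ of the truth since $1 - \Phi(y) \le e^{-y^2/2}$; the series then only ever needs to be evaluated at $|y| = O(\sqrt{p})$, where $O(p)$ terms suffice. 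With that repair (or by black-boxing efficient high-precision computability of $\mathrm{erf}$, as your closing sentence does), the argument is complete and runs in time $\poly(m, \log B)$ as required.
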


\section{General Tools for Quantum Information}
\label{sec:qit_tools}

\subsection{State Generation of Balanced Vectors}
In this subsection we describe a simple procedure that given quantum oracle access to the entries of some general, not necessarily normalized vector $v \in \bbC^{2^n}$, generates the $n$-qubit quantum state $\ket{v}$ that corresponds to (the normalization of) $v$.
More formally, the procedure gets two pieces of information about $v$:
\begin{itemize}
	\item 
	Quantum oracle access to $U_v$, the unitary of the classical function $v : \{ 0, 1 \}^n \rightarrow \{ 0, 1 \}^k$ (where $k$ is the description size in bits of each entry of $v$) that describes the vector $v$ and maps $v(x) := v_x$.
	
	\item 
	An upper bound $M \in \bbN$ on any entry of $v$, that is, $\max_{x \in [N]}|v_x| \leq M$.
\end{itemize}
The procedure runs in time $\poly(n, k, \log M)$ and outputs the quantum state $\ket{v}$ as follows.

\paragraph{$\bvs^{U_v} \left( M \right):$} \znote{Maybe use some LaTeX environment in order to introduce}\omri{Not sure what do you mean here.}
\begin{enumerate}	
	\item
	Define the quantum unitary $U_{\tilde{v}}$ which is the unitary of the classical function $\tilde{v} : \{ 0, 1 \}^n \rightarrow \{ 0, 1 \}^{k + \log(M)}$ that maps $\tilde{v}(x) := v_x\cdot \frac{1}{M}$.
	
	It's trivial to simulate $U_{\tilde{v}}$ given $U_v$: Given a query $\ket{x, y}$, we concatenate an ancilla of zeros and apply $U_v$ to get $\ket{x, y, v_x}$, then apply a simple unitary that multiplies by $\frac{1}{M}$ (on the last register as input and on the second register as output) to obtain $\ket{x, y \oplus v_x \cdot \frac{1}{M}, v_x}$, and then use $U_v$ again to uncompute the last register.
	
	\item 
	Execute quantum rejection sampling, $\left( b \, , \, \ket{\hat{v}} \right) \gets \qrs^{U_{\tilde{v}}}\left( \ket{+}^{\otimes n} \right)$ (see specification of $\qrs$ in Theorem~\ref{thm:qrs}).
	If $b = \fail$ then output $\fail$, otherwise output $\ket{\hat{v}}$. \znote{What is QRS? If it was defined before provide reference to the definition.}\omri{done.}\znote{There reference is unclear. You cannot just refer to ``3.1'' without saying what it is. Also why are you referring to the section and not to the theorem statement?}\omri{Now?}
\end{enumerate}

\begin{cclaim}[$\bvs$ Success Probability]
	If $\max_{i \in [N]}|v_i| \leq M$ then the execution $\bvs^{U_v}(M)$ always outputs either $\fail$ or the quantum state $v \cdot \frac{1}{\norm{v}} = \ket{\hat{v}}$, furthermore the execution succeeds and outputs the quantum state $\ket{\hat{v}}$ with probability at least $\frac{\norm{v}^2}{M^2\cdot N}$. 
\end{cclaim}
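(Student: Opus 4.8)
The plan is to derive the claim as a direct instantiation of the Quantum Rejection Sampling theorem (Theorem~\ref{thm:qrs}), the only real work being the choice of the analysis parameter $d$ and the verification that the oracle built in Step~1 of $\bvs$ is exactly the one the theorem demands. The input state passed to $\qrs$ is $\ket{\alpha} = \ket{+}^{\otimes n} = \frac{1}{\sqrt{N}} \sum_{x} \ket{x}$, so $\alpha_x = 1/\sqrt{N}$ for every $x$, and the target state is $\ket{\beta} = \ket{\hat{v}}$, so $\beta_x = v_x/\norm{v}$. The ``always outputs $\fail$ or $\ket{\hat{v}}$'' half of the claim is then immediate from the output format guaranteed by Theorem~\ref{thm:qrs}: on $\success$ it returns $(\success, \ket{\beta}) = (\success, \ket{\hat{v}})$, and on $\fail$ it returns $(\fail, \ket{0^n})$.

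The key step is to set $d := \frac{M \sqrt{N}}{\norm{v}}$. With this choice one computes
$$
\frac{(\beta_x/\alpha_x)}{d} = \frac{v_x \sqrt{N}/\norm{v}}{M\sqrt{N}/\norm{v}} = \frac{v_x}{M} = \tilde{v}(x) \enspace ,
$$
so the oracle $U_{\tilde{v}}$ constructed in Step~1 is \emph{exactly} the unitary of the function $f(x) = (\beta_x/\alpha_x)/d$ required by the theorem, and the finite-precision requirement holds by construction of $\tilde{v}$ as a $(k+\log M)$-bit value. It remains to check the bound $d \ge \max_x \abs{\beta_x/\alpha_x}$: since $\abs{\beta_x/\alpha_x} = \frac{\sqrt{N}}{\norm{v}}\abs{v_x}$, the hypothesis $\max_x\abs{v_x}\le M$ gives $\max_x \abs{\beta_x/\alpha_x} \le \frac{M\sqrt{N}}{\norm{v}} = d$, as needed. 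Having verified all three hypotheses, Theorem~\ref{thm:qrs} yields that $\qrs^{U_{\tilde{v}}}(\ket{+}^{\otimes n})$ succeeds with probability at least $1/d^2 = \frac{\norm{v}^2}{M^2 N}$, which is precisely the claimed lower bound.

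There is no substantial obstacle here beyond correctly identifying $d$; the single point to be careful about is that $d$ is defined through the \emph{norm} $\norm{v}$, which the algorithm $\bvs$ never computes and only has the upper bound $M$ for. This is harmless because $d$ is purely an analysis parameter internal to the QRS argument and is never needed by $\bvs$ at runtime — the algorithm only ever invokes $U_{\tilde{v}}$, which depends on $M$ alone. I would also flag that using the normalized $\ket{+}^{\otimes n}$ (so that $\alpha_x = 1/\sqrt{N}$) rather than the unnormalized uniform vector is exactly what produces the $N$ factor in the final success probability, matching the statement of the claim.
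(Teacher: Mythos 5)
Your proposal is correct and follows essentially the same route as the paper's own proof: both instantiate Theorem~\ref{thm:qrs} with $\ket{\alpha}=\ket{+}^{\otimes n}$, $\ket{\beta}=\ket{\hat{v}}$, the oracle $U_{\tilde{v}}$ for $\tilde{v}(x)=v_x/M$, and the analysis parameter $d=\frac{M\sqrt{N}}{\norm{v}}$, verifying the same three hypotheses and reading off the success probability $1/d^2=\frac{\norm{v}^2}{M^2 N}$. Your closing remark that $d$ depends on $\norm{v}$ only in the analysis, never at runtime, is a nice clarification that the paper leaves implicit.
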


\begin{proof}
	We need to make sure that we execute the quantum rejection sampling algorithm $\qrs$ with correct parameters (specified in Theorem~\ref{thm:qrs}), and also understand what exactly are the parameters for $\qrs$.
	As the starting state $\ket{\alpha}$ we input $\ket{+}^{\otimes n}$, our target state $\ket{\beta}$ is $\ket{\hat{v}}=\tfrac{\ket{v}}{\norm{v}}$.
	As the state transformation unitary $U$ we use $U_{\tilde{v}}$, that is, the unitary of the classical function $f(x) := v_x \cdot \frac{1}{M}$.

	It follows that there exists an upper bound $d \geq \max_{x \in \{ 0, 1 \}n} \abs{\frac{\beta_x}{\alpha_x}}$ s.t. $\forall x \in \{ 0, 1 \}^n : f(x) :=  f(x) := v_x \cdot \frac{1}{M} = \frac{\beta_x / \alpha_x}{d}$, by taking $d := \frac{M \cdot \sqrt{N}}{\norm{v}}$.
	\begin{itemize}
		\item $d$ is indeed an upper bound:
		$$
		\forall x \in \{ 0, 1 \}^n :
		\abs{\frac{\beta_x}{\alpha_x}} =
		\abs{\frac{v_x / \norm{v}}{1 / \sqrt{N}}} = 
		\abs{v_x \cdot \frac{\sqrt{N}}{\norm{v}}} \leq 
		\frac{M \cdot \sqrt{N}}{\norm{v}} \enspace .
		$$
		
		\item $f(\cdot)$ indeed computes $\frac{\beta_x}{\alpha_x}/d$:
		\begin{align*}
		\forall x \in \{ 0, 1 \}^n :
		f(x)
		&:= v_x \cdot \frac{1}{M} \\
		&= v_x \cdot \frac{1}{M} \cdot \frac{\sqrt{N}}{\sqrt{N}} \cdot \frac{\norm{v}}{\norm{v}} \\
		&= \left( \frac{v_x}{\norm{v}} \cdot \sqrt{N} \right) \cdot \left( \frac{1}{M} \cdot \frac{\norm{v}}{\sqrt{N}} \right) \\
		&= \frac{\left( \frac{v_x / \norm{v}}{1 / \sqrt{N}} \right)}{\left( \frac{M \cdot \sqrt{N}}{\norm{v}} \right)} \\
		&= \frac{\beta_x / \alpha_x}{d} \enspace .
		\end{align*}
	\end{itemize}
	
	The conditions for $\qrs$ hold, \znote{Refer to rejection sampling lemma.}\omri{done.}\znote{Did you ever see a paper that refers to ``3.5'' without saying what it is?}\omri{Thinking about it again, I moved the QRS reference to the beginning of the proof. Is it ok?} and thus from the correctness guarantee of quantum rejection sampling we can be sure that the algorithm $\bvs$ will always output either $\fail$ or $\ket{\beta} := \ket{\hat{v}}$.
	As for the probability of success in outputting $\ket{\hat{v}}$, again from the success guarantees of $\qrs$ this probability is at least $\frac{1}{d^2} = \frac{\norm{v}^2}{M^2\cdot N}$.
\end{proof}

The above procedure tries to generate $\ket{\hat{v}}$ once and it will be convenient to have an amplified version of this algorithm as a black box, this is an option because we can always re-generate the state $\ket{+}^{\otimes n}$ efficiently and retry.
The amplified version of the algorithm is with the same name and have one more parameter $k \in \bbN$ (amplification parameter), that is, $\bvs^{U_v}(M, k)$.

The amplified version of $\bvs$ executes $k$ (parallel) repetitions of $\bvs^{U_{v}}(M)$, if all fail it outputs $\fail$, and if either succeeds it outputs the generated state $\ket{\hat{v}}$.
The probability of $\bvs^{U_v}(M, k)$ to succeed in generating the state $\ket{\hat{v}}$ follows.

\begin{cclaim}[Amplified $\bvs$ Success Probability] \label{claim:amp_bvs_success}
	If $\max_{i \in [N]}|v_i| \leq M$ then the algorithm $\bvs^{U_v}(M, k)$ always outputs either $\fail$ or the quantum state $v \cdot \frac{1}{\norm{v}} = \ket{\hat{v}}$, furthermore the algorithm succeeds and outputs the quantum state $\ket{\hat{v}}$ with probability at least $1 - e^{-\frac{k \cdot \norm{v}^2}{M^2\cdot N}}$.
\end{cclaim}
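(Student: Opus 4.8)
The plan is to reduce directly to the single-shot claim established just above, which guarantees two things about each individual execution of $\bvs^{U_v}(M)$: first, that it always outputs either $\fail$ or the correct normalized state $\ket{\hat{v}}$, and second, that it outputs $\ket{\hat{v}}$ with probability at least $p := \frac{\norm{v}^2}{M^2 \cdot N}$. Since the amplified algorithm $\bvs^{U_v}(M, k)$ simply runs $k$ independent copies of this single-shot procedure (each regenerating a fresh $\ket{+}^{\otimes n}$ and drawing independent internal randomness for the rejection sampling step), and outputs $\fail$ precisely when all $k$ copies fail, the ``correctness of output'' part of the claim is immediate: every copy that succeeds produces exactly $\ket{\hat{v}}$, so whenever the amplified algorithm does not output $\fail$ it outputs $\ket{\hat{v}}$, as required.

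For the success-probability bound, I would observe that the $k$ runs are mutually independent, so the probability that all of them fail equals the product of the individual failure probabilities, each of which is at most $1 - p$. Hence the probability that the amplified algorithm outputs $\fail$ is at most $(1-p)^k$, and the probability that it succeeds in generating $\ket{\hat{v}}$ is at least $1 - (1-p)^k$. To finish, I would apply the standard inequality $1 - x \le e^{-x}$ (valid for all real $x$) with $x = p$ to get $(1-p)^k \le e^{-pk}$, and then substitute $p = \frac{\norm{v}^2}{M^2 \cdot N}$ to obtain the stated lower bound $1 - e^{-\frac{k \cdot \norm{v}^2}{M^2 \cdot N}}$ on the success probability.

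I do not anticipate any genuine obstacle here, as this is essentially a probability-amplification argument layered on top of the previous claim. The only point that warrants a sentence of care is the independence of the $k$ repetitions: this holds because each repetition prepares its own input state $\ket{+}^{\otimes n}$ and uses independent randomness in $\qrs$, so the failure events across repetitions are independent and the product bound on the joint failure probability is justified. Everything else is the elementary manipulation $1-(1-p)^k \ge 1 - e^{-pk}$.
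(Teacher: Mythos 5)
Your proposal is correct and matches the paper's own proof essentially verbatim: the paper likewise notes that $\bvs^{U_v}(M,k)$ fails exactly when all $k$ independent single-shot runs fail, bounds this by $\left(1 - \frac{\norm{v}^2}{M^2\cdot N}\right)^k$ using the single-shot claim, and then applies $1-x \le e^{-x}$ to obtain $e^{-\frac{k\cdot\norm{v}^2}{M^2\cdot N}}$. Your additional sentence justifying independence (fresh $\ket{+}^{\otimes n}$ and fresh randomness per repetition) and the explicit handling of the correctness-of-output part are fine elaborations of steps the paper leaves implicit.
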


\begin{proof}
	\begin{align*}
	\Pr\left[ \bvs^{U_v}(M, k) = \fail \right] 
	&= \Pr\left[ \bvs^{U_v}(M) \text{ failed $k$ times in a row} \right] \\
	&= \left( \Pr\left[ \bvs^{U_v}(M) = \fail \right] \right)^{k} \\
	&\leq \left( 1 - \frac{\norm{v}^2}{M^2 \cdot N} \right)^{k} \\
	&\leq e^{-\frac{k \cdot \norm{v}^2}{M^2\cdot N}} \enspace .
	\end{align*}
\end{proof}

\subsection{Analytic Tools for Distributions}
In this subsection we describe some analytic tools for bounding trace norm between two distributions, for multiple output copies.
We start with an elementary property of trace distance and classical statistical distance that we will use in our construction.

\begin{lemma}[Classical Statistical Distance Implies Trace Distance] \label{lemma:sd_td}
	Let $n \in \bbN$ and let $D_1$, $D_2$ be two distributions over unit vectors in $\bbC^{2^n}$.
	Let $\tilde{D_1}$, $\tilde{D_2}$ be the quantum-state distributions of $D_1$, $D_2$, that is, for $b \in \{ 0, 1 \}$, a sample from $\tilde{D_b}$ is generated by sampling a vector $v$ from $D_b$, and outputting an $n$-qubit register in the state described by $v$.
	
	Then, if $\sd(D_1, D_2) \leq \varepsilon$, then for every number of copies $t \in \bbN$, 
	$$
	\td\left(
	\bbE_{\ket{v} \gets \tilde{D_1}}\left[ \left( \ket{v}\bra{v} \right)^{\otimes t} \right],
	\bbE_{\ket{v} \gets \tilde{D_2}}\left[ \left( \ket{v}\bra{v} \right)^{\otimes t} \right]
	\right) \leq \varepsilon \enspace .
	$$
\end{lemma}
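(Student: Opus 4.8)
The plan is to reduce the quantum distinguishing task directly to a classical one, using the operational definition of trace distance (maximization over binary projective measurements) together with the operational definition of statistical distance (maximization over binary, possibly randomized, tests) from the preliminaries. Write $\rho_b = \bbE_{\ket{v} \gets \tilde{D_b}}\big[ (\ketbra{v}{v})^{\otimes t} \big]$ for $b \in \{1,2\}$; the goal is to bound $\td(\rho_1,\rho_2)$, which by definition equals $\max_{\A} \Delta_{\A,\rho_1,\rho_2}$ over binary projective measurements $\A$. So I would fix an arbitrary such $\A$ and let $P_0$ be the projector corresponding to outcome $0$, so that $\Pr[\A(\rho)=0] = \tr(P_0 \rho)$ for every density matrix $\rho$.

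The key step is to recognize that on our states $\A$ acts as a classical test on the sampled vector. For a unit vector $v \in \bbC^{2^n}$, define $g(v) := \tr\big( P_0\, (\ketbra{v}{v})^{\otimes t} \big) = \bra{v}^{\otimes t} P_0 \ket{v}^{\otimes t}$. Since $v$ is a unit vector, $\ket{v}^{\otimes t}$ is a unit vector, and since $0 \preceq P_0 \preceq I$ we have $g(v) \in [0,1]$; thus $g$ is a deterministic $[0,1]$-valued function of $v$, equivalently the acceptance probability of a randomized binary test applied to the explicit classical description of $v$. By linearity of the trace and the definition of $\rho_b$ as an expectation,
\[
\Pr[\A(\rho_b)=0] = \tr(P_0 \rho_b) = \bbE_{\ket{v}\gets \tilde{D_b}}\big[ \tr(P_0 (\ketbra{v}{v})^{\otimes t}) \big] = \bbE_{v \gets D_b}[g(v)] \enspace.
\]

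It then remains to bound the difference of these two expectations, and this is exactly where the hypothesis enters: $\Delta_{\A,\rho_1,\rho_2} = \abs{\bbE_{v\gets D_1}[g(v)] - \bbE_{v \gets D_2}[g(v)]}$, where $g$ is a $[0,1]$-valued test that sees only the classical vector $v$. Since the statistical distance is the maximal distinguishing advantage over all such tests — and the paper's remark that allowing $\A$ to be randomized yields an equivalent definition is precisely the statement that $[0,1]$-valued tests do no better than $\{0,1\}$-valued ones — we obtain $\Delta_{\A,\rho_1,\rho_2} \le \sd(D_1,D_2) \le \varepsilon$. As $\A$ was arbitrary, taking the maximum over measurements gives $\td(\rho_1,\rho_2) \le \varepsilon$, as desired.

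I expect the only genuine subtlety (the mild ``main obstacle'') to be the passage from $\{0,1\}$-valued to $[0,1]$-valued tests, together with the fact that $D_1,D_2$ may be continuous, so the relevant objects are integrable functions rather than finite-support ones; both are handled cleanly by the operational characterization of statistical distance. An equivalent and perhaps more transparent route avoids this point entirely: invoke the maximal coupling lemma to obtain a joint distribution on pairs $(v_1,v_2)$ with marginals $D_1,D_2$ and $\Pr[v_1\ne v_2]=\sd(D_1,D_2)$, then apply the joint convexity of trace distance to $\bbE_{(v_1,v_2)}[\,\cdot\,]$, using that the per-sample quantity $\td\big((\ketbra{v_1}{v_1})^{\otimes t},(\ketbra{v_2}{v_2})^{\otimes t}\big)$ is $0$ when $v_1=v_2$ and at most $1$ otherwise; this yields $\td(\rho_1,\rho_2)\le \Pr[v_1\ne v_2]=\sd(D_1,D_2)\le\varepsilon$ directly.
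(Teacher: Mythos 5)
Your main argument is correct and is essentially the paper's own proof: the paper likewise converts the projective measurement $\A$ into a randomized classical distinguisher that, given the explicit vector $v$, computes the quantum acceptance probability $p = \bra{v}^{\otimes t} A \ket{v}^{\otimes t}$ and outputs $1$ with probability $p$ — your function $g$ is exactly this object, the only difference being that you argue directly (fix $\A$, bound its advantage by $\sd(D_1,D_2)$) while the paper phrases the same reduction as a proof by contradiction. Your closing alternative via maximal coupling and joint convexity of trace distance is also sound, but it is a different (and arguably cleaner) route than the one the paper takes.
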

\begin{proof}
	Intuitively, the proof follows from the fact that a computationally unbounded mapping can always capture the computation of an (even unbounded) quantum process, along with the fact that when the classical description of a state is available then there is no advantage in having more than a single copy.
	Formally, we assume towards contradiction there is a projective measurement $\A$ (with output in $\{ 0, 1 \}$) that distinguishes between a $t$-tensor of $\tilde{D_1}$ and a $t$-tensor of $\tilde{D_2}$ with advantage bigger than $\varepsilon$, and describe a (randomized) distinguisher $\A' : \bbC^{2^n} \rightarrow \{ 0, 1 \}$ that distinguishes between $D_1, D_2$ with advantage bigger than $\varepsilon$. Let $A$ denote the Hermitian matrix that corresponds to the projective measurement $\A$.
	
	The distinguisher $\A'$ is defined as follows. Given an input $v \in \bbC^{2^n}$, consider the vector $\ket{v'} = \ket{v}^{\otimes t}$, and compute the value $p = \bra{v'} A \ket{v'}$. We note that this value is exactly the probability that $\A$ outputs $1$ when input the quantum state $\ket{v}^{\otimes t}$. The distinguisher $\A'$ then outputs $1$ with probability $p$ and $0$ with probability $1-p$. By definition, the advantage of $\A'$ in distinguishing $D_1$ and $D_2$ is identical to the advantage of $\A$ in distinguishing the $t$-tensored $\tilde{D_1}$ and $\tilde{D_2}$. \znote{I think the previous version was not well defined. I rewrote. Please check that I didn't make mistakes, I hope I remember the properties of projective measurements correctly.}
	\omri{I see. The above looks good to me, thanks.}
%
%
\end{proof}

\paragraph{Robustness to Small Shifts.} Lemma~\ref{lemma:sd_td} asserts that distributions on quantum states are indistinguishable if they are induced by indistinguishable distributions over vectors in the respective Hilbert space. This is a very strict condition and in fact in many cases distributions on quantum states can be indistinguishable even if the the respective distributions over vectors are highly distinguishable. 

This will be useful in the context of this work since we wish to show indistinguishability between the Haar random distribution, which corresponds to a continuous distribution over the sphere, and an efficiently samplable distribution (with oracle access to a random function), which necessarily produces a discrete distribution over vectors. Hence, the two distributions over vectors are necessarily distinguishable (with advantage $1$), and yet we will be able to bound the distinguishing gap between the quantum states.

Technically, we rely on the well known property that quantum states that correspond to vectors with inner product close to $1$ are indistinguishable. This is formalized in Lemma \ref{lemma:angular} below, which considers a distribution over vectors, and a small perturbation of this distribution, that does not shift the vector by too much. We show that such perturbation, which in particular captures the case of rounding a continuous distribution into some discrete domain, would be indistinguishable in terms of the resulting quantum state.

We start with the following simple auxiliary lemma.

\begin{lemma}[Trace Distance and Diameter of Supports Union] \label{claim:trace_diameter}
	Let $\varepsilon \in [0,1]$, and let $D_1, D_2$ be two distributions over unit vectors in $\bbC^{2^n}$ and denote their respective supports by $S_1$, $S_2$.
	Assume that for every pair $\ket{u} \in S_1, \ket{v} \in S_2$ we have $\abs{\braket{u}{v}} \geq 1 - \varepsilon$, then for all $t \in \bbN$,
	$$
	\td\Big(
	\bbE_{\ket{v} \gets D_1}\left[ \left( \ket{v}\bra{v} \right)^{\otimes t} \right],
	\bbE_{\ket{v} \gets D_2}\left[ \left( \ket{v}\bra{v} \right)^{\otimes t} \right]
	\Big) \leq \sqrt{1 - \left( 1 - \varepsilon \right)^{2\cdot t}} \enspace .
	$$
\end{lemma}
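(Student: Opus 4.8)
The plan is to reduce the claim to the pure-state trace distance formula, exploiting that \emph{every} pair of states drawn from $S_1$ and $S_2$ is close, and combining the pairwise bounds by nothing more than the triangle inequality for the trace norm. The two ingredients I would use are both already available above: the identity $\td(\rho_0,\rho_1)=\tfrac12\norm{\rho_0-\rho_1}_1$, and the closed form $\td\big(\ketbra{\psi}{\psi},\ketbra{\phi}{\phi}\big)=\sqrt{1-\abs{\braket{\psi}{\phi}}^2}$ for pure states.

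The key move is to express both density matrices as integrals over the \emph{same} product probability space. Writing $\rho_b=\bbE_{\ket{v}\gets D_b}\left[(\ketbra{v}{v})^{\otimes t}\right]$, and using that $D_2$ (resp.\ $D_1$) integrates to $1$, I would rewrite
$$
\rho_1=\bbE_{\ket{u}\gets D_1}\,\bbE_{\ket{v}\gets D_2}\left[(\ketbra{u}{u})^{\otimes t}\right],\qquad
\rho_2=\bbE_{\ket{u}\gets D_1}\,\bbE_{\ket{v}\gets D_2}\left[(\ketbra{v}{v})^{\otimes t}\right].
$$
Then $\rho_1-\rho_2=\bbE_{(\ket{u},\ket{v})\gets D_1\times D_2}\left[(\ketbra{u}{u})^{\otimes t}-(\ketbra{v}{v})^{\otimes t}\right]$, and since the trace norm is a norm, Jensen's inequality (the triangle inequality for integrals) yields
$$
\td(\rho_1,\rho_2)=\tfrac12\norm{\rho_1-\rho_2}_1\le \bbE_{(\ket{u},\ket{v})\gets D_1\times D_2}\left[\td\big((\ketbra{u}{u})^{\otimes t},(\ketbra{v}{v})^{\otimes t}\big)\right].
$$

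It remains to bound each pairwise term uniformly. For fixed $\ket{u}\in S_1$, $\ket{v}\in S_2$, the tensors $\ket{u}^{\otimes t}$ and $\ket{v}^{\otimes t}$ are pure states whose overlap is $\braket{u}{v}^{t}$, so the pure-state formula gives $\td\big((\ketbra{u}{u})^{\otimes t},(\ketbra{v}{v})^{\otimes t}\big)=\sqrt{1-\abs{\braket{u}{v}}^{2t}}$. By hypothesis $1-\varepsilon\le\abs{\braket{u}{v}}\le 1$, so $\abs{\braket{u}{v}}^{2t}\ge(1-\varepsilon)^{2t}$ and each term is at most the constant $\sqrt{1-(1-\varepsilon)^{2t}}$. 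Since the outer expectation is over a probability measure, the bound passes through unchanged, giving $\td(\rho_1,\rho_2)\le\sqrt{1-(1-\varepsilon)^{2t}}$.

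There is no serious obstacle here; the proof is essentially three lines once the product-coupling trick is spotted. The only points requiring a little care are that the triangle-inequality step is valid for continuous distributions (it is, since it is just $\norm{\int(\cdot)\,d\mu}_1\le\int\norm{\cdot}_1\,d\mu$, not any deep joint-convexity statement), and the elementary monotonicity $\abs{\braket{u}{v}}^{2t}\ge(1-\varepsilon)^{2t}$, which uses that both quantities lie in $[0,1]$.
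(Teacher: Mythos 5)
Your proof is correct and is essentially the same argument as the paper's: the paper's one-line ``by averaging'' step is exactly the product-coupling plus triangle-inequality reduction you spell out (the paper bounds by the maximum over pairs, you by the expectation over $D_1\times D_2$, which is the same bound), followed by the identical pure-state computation $\sqrt{1-\abs{\braket{u}{v}}^{2t}}\le\sqrt{1-(1-\varepsilon)^{2t}}$. Your write-up just makes the averaging step explicit, which is a fine (arguably clearer) presentation of the same proof.
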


\begin{proof}
	By averaging,
	$$
	\td\Big(
	\bbE_{\ket{v} \gets D_1}\left[ \left( \ket{v}\bra{v} \right)^{\otimes t} \right],
	\bbE_{\ket{v} \gets D_2}\left[ \left( \ket{v}\bra{v} \right)^{\otimes t} \right]
	\Big) \leq
	\max_{\ket{u} \in S_1, \ket{v} \in S_2}
	\td\Big(
	\left( \ket{u}\bra{u} \right)^{\otimes t},
	\left( \ket{v}\bra{v} \right)^{\otimes t}
	\Big) \enspace ,
	$$
	and due to the simple trace distance formula for pure states, the last expression is exaclty the following:
	$$
	\sqrt{1 - \abs{\braket{u^{\otimes t}}{v^{\otimes t}}}^2} =
	\sqrt{1 - \abs{\braket{u}{v}}^{2\cdot t}} \leq
	\sqrt{1 - \left( 1 - \varepsilon \right)^{2\cdot t}}\enspace .
	$$
\end{proof}

Our main lemma now follows.

\begin{lemma}[Angular Indistinguishability] \label{lemma:angular}
	Let $n \in \bbN$, $\varepsilon \in [0, 1]$, let $D$ be a distribution over (not necessarily normalized) vectors in $V \subseteq \bbC^{2^n}$, let $\varphi: V \rightarrow \bbC^{2^n}$ be a function and let $\hat{\varphi} : V \rightarrow \bbC^{2^n}$ be the normalized version of $\varphi$, $\hat{\varphi}(v) := \frac{\varphi(v)}{\norm{\varphi(v)}}$.
	Assume that for every $v \in V$, the normalization of $v$ and its $\hatphi$-image are close on the unit sphere, that is,
	$$
	\abs{\braket{\hat{v}}{\hat{\varphi}(v)}} \geq 1 - \varepsilon,
	$$
	then for all $t \in \bbN$, 
	\begin{align}\label{eq:angular}
	\td\Big(
	\bbE_{v \gets D} \big[ \left( \ket{\hat{v}}\bra{\hat{v}} \right)^{\otimes t}\big] \, , \,
	\bbE_{v \gets D} \big[ \left( \ket{\hat{\varphi}(v)}\bra{\hat{\varphi}(v)} \right)^{\otimes t} \big]
	\Big) \leq 
	\sqrt{2 t \varepsilon} \enspace .
	\end{align}
\end{lemma}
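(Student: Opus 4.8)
The plan is to exploit the fact that both density matrices appearing in \eqref{eq:angular} are generated by the \emph{same} underlying distribution $D$: for each sampled $v$, the left-hand ensemble uses $\ket{\hat{v}}$ while the right-hand ensemble uses $\ket{\hat{\varphi}(v)}$. This correlation is essential, because the hypothesis only controls $\abs{\braket{\hat{v}}{\hat{\varphi}(v)}}$ for a \emph{matched} pair $(v, \varphi(v))$, and says nothing about the inner product of $\hat{v}$ with $\hat{\varphi}(v')$ for an independently sampled $v'$. For this reason the diameter estimate of Lemma~\ref{claim:trace_diameter} cannot be applied directly (its uniform-closeness hypothesis would fail in general), and the correct tool is instead the joint convexity of the trace distance.

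First I would invoke joint convexity of $\td$: since the trace distance is jointly convex in its two arguments, averaging over $v \gets D$ with the \emph{same} $v$ appearing on both sides yields
$$
\td\Big(\bbE_{v \gets D}\big[(\ket{\hat{v}}\bra{\hat{v}})^{\otimes t}\big], \bbE_{v \gets D}\big[(\ket{\hat{\varphi}(v)}\bra{\hat{\varphi}(v)})^{\otimes t}\big]\Big) \leq \bbE_{v \gets D}\Big[\td\big((\ket{\hat{v}}\bra{\hat{v}})^{\otimes t}, (\ket{\hat{\varphi}(v)}\bra{\hat{\varphi}(v)})^{\otimes t}\big)\Big].
$$
This reduces the problem to bounding, for each fixed $v$, the trace distance between two \emph{pure} states, $\ket{\hat{v}}^{\otimes t}$ and $\ket{\hat{\varphi}(v)}^{\otimes t}$.

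For each fixed $v$ I would apply the pure-state trace distance formula together with the multiplicativity of inner products under tensoring, namely $\braket{\hat{v}^{\otimes t}}{\hat{\varphi}(v)^{\otimes t}} = \braket{\hat{v}}{\hat{\varphi}(v)}^{t}$, to obtain
$$
\td\big((\ket{\hat{v}}\bra{\hat{v}})^{\otimes t}, (\ket{\hat{\varphi}(v)}\bra{\hat{\varphi}(v)})^{\otimes t}\big) = \sqrt{1 - \abs{\braket{\hat{v}}{\hat{\varphi}(v)}}^{2t}} \leq \sqrt{1 - (1-\varepsilon)^{2t}},
$$
where the inequality uses the hypothesis $\abs{\braket{\hat{v}}{\hat{\varphi}(v)}} \geq 1 - \varepsilon$. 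Since this bound is uniform over $v$, it also bounds the expectation from the previous step. Finally, to pass from $\sqrt{1 - (1-\varepsilon)^{2t}}$ to the claimed $\sqrt{2t\varepsilon}$, I would apply Bernoulli's inequality $(1-\varepsilon)^{2t} \geq 1 - 2t\varepsilon$ (valid since $\varepsilon \in [0,1]$ and $2t \geq 1$), which gives $1 - (1-\varepsilon)^{2t} \leq 2t\varepsilon$ and hence the stated bound.

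I do not expect a serious technical obstacle. The one conceptual point is recognizing that the correlation between the two ensembles must be used, so that joint convexity (rather than the pairwise/diameter estimate of Lemma~\ref{claim:trace_diameter}) is the appropriate route; everything downstream is a short, uniform per-sample calculation followed by an elementary Bernoulli estimate.
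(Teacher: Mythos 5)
Your proof is correct, but it takes a genuinely different route from the paper's. The paper never invokes joint convexity of the trace distance; instead it fixes a $\pmset$-valued projective measurement, decomposes the sampling $v \gets D$ into first sampling an image point $w \gets \Img(\varphi)_D$ and then $v \gets \PreImg(w)_D$, bounds the resulting expectation by the worst image point $w'$, and only then applies the diameter estimate of Lemma~\ref{claim:trace_diameter} --- which becomes applicable precisely because, conditioned on $w'$, the second ensemble collapses to the point mass $\ket{\hat{w'}}$, so every pair of support elements satisfies $\abs{\braket{\hat{v}}{\hat{w'}}} = \abs{\braket{\hat{v}}{\hatphi(v)}} \geq 1 - \varepsilon$. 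Your diagnosis that Lemma~\ref{claim:trace_diameter} cannot be applied directly (since $\hat{v}$ and $\hat{\varphi}(v')$ for independent samples need not be close) is exactly right, and the paper's image/preimage conditioning is their workaround for this same obstacle; your workaround is instead the identity coupling inside joint convexity, i.e.\ $\td\big(\bbE_v[\rho_v], \bbE_v[\sigma_v]\big) \leq \bbE_v\big[\td(\rho_v, \sigma_v)\big]$, which follows at once from the triangle inequality for the trace norm (or, staying with the paper's measurement-based definition, from the fact that the distinguishing advantage of any fixed measurement is affine in the state). After that, both arguments converge: the pure-state fidelity formula, multiplicativity of inner products under tensoring, and Bernoulli's inequality give the identical bound $\sqrt{1-(1-\varepsilon)^{2t}} \leq \sqrt{2t\varepsilon}$. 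What your route buys is brevity and transparency --- it makes the coupling explicit and skips the image/preimage machinery entirely; what the paper's route buys is self-containedness relative to its own toolkit, reusing Lemma~\ref{claim:trace_diameter} and never needing joint convexity as an external fact.
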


%
%
%

\begin{proof}
	We will show that for every projective measurement $\A$ (with output in $\{ -1, 1 \}$), it holds that
	\begin{equation}\label{eq:expdiff}
	\underbrace{\abs{
		\bbE_{\substack{v \gets D, \\ \text{Measurement}}}
		\Big[ 
		\A\big( \left( \ket{\hatv}\bra{\hatv} \right)^{\otimes t} \big)
		\Big] - 
		\bbE_{\substack{v \gets D, \\ \text{Measurement}}}
		\Big[ 
		\A\big( \left( \ket{\hatphi(v)}\bra{\hatphi(v)} \right)^{\otimes t} \big)
		\Big]
	}}_{\text{Denote this $\Delta$}} \leq
	2\cdot \sqrt{1 - \left( 1 - \varepsilon \right)^{2\cdot t}} \enspace ,
	\end{equation}
	where the expectation is over sampling a vector from the distribution $D$, and over the measurement outcome.
	Note that the trace distance in the lemma statement (Eq.~\eqref{eq:angular}) is equal to $\Delta/2$ (for the measurement $\A$ that maximizes this expression, the factor of $2$ is because we are now considering $\pmset$ distinguishers instead of $\binset$). Therefore, showing that Eq.~\eqref{eq:expdiff} holds will show that the trace distance in Eq.~\eqref{eq:angular} is bounded by $\sqrt{1 - \left( 1 - \varepsilon \right)^{2\cdot t}}$.
	By Bernoulli's inequality we have $\left( 1 - \varepsilon \right)^{2\cdot t} \geq 1 - 2t\varepsilon$ and therefore proving that Eq.~\eqref{eq:expdiff} holds will conclude the proof of this lemma.
	
	We use the following notation:
	\begin{itemize}
		\item 
		$\Img(\varphi)_D$ is the distribution over the image of $\varphi$, such that the probability for the element $w \in \Img(\varphi)$ is the probability to get $w = \varphi(v)$ when choosing $v \gets D$.
		
		\item 
		For $w \in \Img(\varphi)$, $\PreImg(w)_D$ is the conditional distribution of $D$, conditioned that $\varphi(v) = w$.
	\end{itemize}
	Observe that sampling $v\gets D$ is exactly like sampling $w \gets \Img(\varphi)_D$ and then the output of the process is a sample $v \gets \PreImg(w)_D$. It follows that
	\begin{align*}
    \Delta
	&
	= \bigg|
		\bbE_{w \gets \Img(\varphi)_D}\left[
		\bbE_{\substack{v \gets \PreImg(w)_D, \\ \text{Measurement}}}
		 \Big[ 
		\A\big( \left( \ket{\hatv}\bra{\hatv} \right)^{\otimes t} \big)
		\Big]
		\right] 
		\\
		& 
		\qquad - \bbE_{w \gets \Img(\varphi)_D}\left[
		\bbE_{\substack{v \gets \PreImg(w)_D, \\ \text{Measurement}}}
		\Big[ 
		\A\big( \left( \ket{\hatphi(v)}\bra{\hatphi(v)} \right)^{\otimes t} \big)
		\Big]
		\right]
	\bigg| \\
	&= \bigg|
		\bbE_{w \gets \Img(\varphi)_D}\bigg[
		\bbE_{\substack{v \gets \PreImg(w)_D, \\ \text{Measurement}}}
		\Big[ 
		\A\big( \left( \ket{\hatv}\bra{\hatv} \right)^{\otimes t} \big)
		\Big] \\
		& \qquad - \bbE_{\substack{v \gets \PreImg(w)_D, \\ \text{Measurement}}}
		\Big[ 
		\A\big( \left( \ket{\hatphi(v)}\bra{\hatphi(v)} \right)^{\otimes t} \big)
		\Big]
		\bigg]
	\bigg| \enspace.
	\end{align*}

	It is clear that the above expectation of a difference is bounded by the difference for the element $w' \in \Img(\varphi)$ that maximizes it:
	$$
	\leq \abs{
		\bbE_{\substack{v \gets \PreImg(w')_D, \\ \text{Measurement}}}
		\Big[ 
		\A\big( \left( \ket{\hatv}\bra{\hatv} \right)^{\otimes t} \big)
		\Big] - 
		\bbE_{\substack{v \gets \PreImg(w')_D, \\ \text{Measurement}}}
		\Big[ 
		\A\big( \left( \ket{\hatphi(v)}\bra{\hatphi(v)} \right)^{\otimes t} \big)
		\Big]
	} \enspace ,
	$$
	which in turn is bounded by twice the trace distance,
	$$
	\leq 2\cdot \td\Big( 
	\bbE_{v \gets \PreImg(w')_D}\big[\left( \ket{\hatv}\bra{\hatv} \right)^{\otimes t} \big],
	\bbE_{v \gets \PreImg(w')_D}\big[\left( \ket{\hatphi(v)}\bra{\hatphi(v)} \right)^{\otimes t}\big]
	\Big) \enspace .
	$$
	
	Finally, consider the following two distributions:
	\begin{itemize}
		\item 
		$D_1$: Sample $v \gets \PreImg(w')_D$ and output the normalization $\ket{\hatv}$.
		
		\item 
		$D_2$: The constant, zero-entropy distribution that always outputs $\ket{\hat{w'}}$.
	\end{itemize}
	Note that $D_1$, $D_2$ satisfy the condition that for every $\ket{\hatv}$ in the support of $D_1$ (which is the set of normalizations of $\PreImg(w')$), and every $\ket{\hatu}$ in the support of $D_2$ (which is simply $\{ \ket{\hat{w'}} \}$), we have:
	$$
	\abs{\braket{\hatv}{\hatu}} = \abs{\braket{\hatv}{\hat{w'}}} =  \abs{\braket{\hatv}{\hatphi(v)}} \geq 1 - \varepsilon \enspace .  
	$$
	This is a strong condition on the distributions $D_1$, $D_2$, which basically says that the diameter of their unified supports is small.
	This condition is formalized in Lemma \ref{claim:trace_diameter}, which in our setting implies that the trace distance between $t$ copies of $D_1$ and $t$ copies of $D_2$ is bounded by $\sqrt{1 - \left( 1 - \varepsilon \right)^{2\cdot t}}$, that is,
	$$
	2\cdot \td\Big( 
	\bbE_{v \gets \PreImg(w')_D}\big[\left( \ket{\hatv}\bra{\hatv} \right)^{\otimes t} \big],
	\bbE_{v \gets \PreImg(w')_D}\big[\left( \ket{\hatphi(v)}\bra{\hatphi(v)} \right)^{\otimes t}\big]
	\Big) \leq
	2\cdot \sqrt{1 - \left( 1 - \varepsilon \right)^{2\cdot t}} \enspace .
	$$
\end{proof}

\section{Scalable Asymptotically Random State (ARS) Generator}
\label{sec:construction}

In this section we describe a procedure that given quantum oracle access to a random classical function, efficiently samples random quantum states that are arbitrarily random (i.e. we can scale up the randomness of our sampled state and make it increasingly harder to distinguish from a random quantum state, for an increasing number of output copies, without increasing the number of qubits in the state) and can generate multiple copies of a state when executed multiple times with oracle access to the same function.
More formally, we describe a sampling procedure with the following inputs:

\begin{itemize}
	\item 
	$1^n$: Number of wanted qubits in the output state.
	
	\item 
	$1^\secp$: Security parameter that measures "how random" the output state is going to be (i.e. how hard will it be to distinguish $t$ copies of the sampled state from $t$ copies of a random quantum state, as a function of $\secp, t$).
	
	\item 
	Quantum oracle access to $U_f$: For a function $f: \{ 0, 1 \}^{n} \rightarrow \{ 0, 1 \}^{\poly(n, \secp)}$ (for some polynomial $\poly(\cdot)$, specified later), the sampling procedure gets oracle access to the unitary mapping $U_f$ of $f$.
	
\end{itemize}

The formal statement that explains how to construct a scalable ARS generator follows.

\begin{theorem} [Scalable ARS Generator Construction] \label{thm:main}
	There exists a scalable ARS generator $\Gen$ that for every $n \in \bbN$ number of qubits, $5 \leq \secp \in \bbN$ security parameter and $t \in \bbN$ number of copies, satisfies the following trace distance bound,
	$$
	\td\Big(
	D_1  , 
	D_2
	\Big) \leq \left( t + 8 \right)\cdot e^{-\secp} + \left( 5\sqrt{t} + \secp + 1 \right)\cdot 2^{-\secp} + 2\cdot \left( \frac{8}{10} \right)^{\secp} \enspace ,
	$$
	where the distributions $D_1, D_2$ are defined as follows:
	\begin{itemize}
		\item $D_1 : $
		Sample $\tilde{f} \gets \left( \{ 0, 1 \}^{\poly(n, \secp)} \right)^{\{ 0, 1 \}^{n}}$, execute $t$ times the generation algorithm $\Gen^{U_{\tilde{f}}}(1^{n}, 1^{\secp})$ and output the $t$ output states.
		
		\item $D_2 : $
		Sample $\ket{\psi}$ a random $n$-qubit state and output $\ket{\psi}^{\otimes t}$.
	\end{itemize}
\end{theorem}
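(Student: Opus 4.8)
The plan is to realize the informal strategy of the technical overview through a hybrid argument that transforms the ideal distribution $D_2$ into the real output distribution $D_1$, accounting for each source of error separately. The generator $\Gen^{U_f}(1^n,1^\secp)$ I would construct reads the value $f(x)$ as the random tape of the rounded--truncated complex Gaussian sampler $\gauss{}$ of Fact~\ref{fact:gsample} to define the entry $v_x$, and then invokes the amplified balanced-vector generator $\bvs^{U_v}(M,k)$ of Claim~\ref{claim:amp_bvs_success} with a per-entry bound $M$ derived from the truncation parameter $B$ and an amplification parameter $k=\poly(n,\secp)$. The conceptual anchor is that the \emph{continuous} complex Gaussian vector is spherically symmetric, so its normalization is \emph{exactly} Haar-distributed; hence $D_2$ is precisely the $t$-th tensor of the normalized continuous Gaussian, and the task reduces to quantifying how far each discretization/efficiency step moves us away from it.

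First I would introduce the event that the sampled vector is \emph{balanced}, meaning $\braket{v}{v}$ lies within a constant factor of its expectation $\Theta(N)$ and every entry satisfies $\abs{v_x}\le M$, and pass to the continuous Gaussian \emph{conditioned} on this event. By Fact~\ref{fact:statistical_condition} the statistical distance introduced is at most $\Pr[\neg\text{balanced}]$, which by Lemma~\ref{lemma:sd_td} upper-bounds the $t$-tensored trace distance by the same quantity for all $t$. For $n\gtrsim\log\secp$ the $2^n$ coordinates give Gaussian concentration of the norm together with a union bound on the entries, yielding a contribution of the form $\secp\cdot 2^{-\secp}$ and $e^{-\secp}$; in the regime $n<\log\secp$, where concentration is too weak, the generator instead stores $v$ explicitly and resamples (from surplus randomness in $f$) until it is balanced, so the residual failure is geometric in the number of attempts and produces the $2\cdot(8/10)^{\secp}$ term. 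Next I would move from the continuous conditioned Gaussian to its rounded version: since every balanced $v$ has $\norm{v}=\Omega(\sqrt{N})$ while rounding to precision $\varepsilon_{\mathrm{round}}$ perturbs it in Euclidean norm by at most $\varepsilon_{\mathrm{round}}\sqrt{2N}$, the induced \emph{angular} distance is $O(\varepsilon_{\mathrm{round}})$ uniformly over the support. Applying Lemma~\ref{lemma:angular} with this angular bound $\varepsilon$ costs $\sqrt{2t\varepsilon}$, and choosing $\varepsilon_{\mathrm{round}}=\Theta(2^{-2\secp})$ (so that $m=\Theta(\secp)$ in Fact~\ref{fact:gsample}) yields the $5\sqrt{t}\cdot 2^{-\secp}$ term.

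I would then close the remaining two gaps. Replacing the exact rounded--truncated Gaussian by the efficiently samplable $\gauss{}$ costs at most its per-coordinate statistical distance $2^{-m}$ (Fact~\ref{fact:gsample}), again lifted to all $t$ by Lemma~\ref{lemma:sd_td} and folded into the $2^{-\secp}$ budget; truncation itself is essentially a no-op on the balanced support once $B\ge M$, and is otherwise absorbed into the same statistical-distance budget via a Gaussian tail bound. Finally I would account for the generator's own failures: conditioned on $v$ being balanced, Claim~\ref{claim:amp_bvs_success} guarantees that each of the $t$ independent $\bvs^{U_v}(M,k)$ invocations outputs exactly $\ket{\hat v}$ except with probability $e^{-k\braket{v}{v}/(M^2 N)}\le e^{-\secp}$, so a union bound over the $t$ copies contributes the $(t+8)\cdot e^{-\secp}$ term; crucially, on the all-success event every register carries the \emph{identical} state $\ket{\hat v}$ (because $f$, hence $v$, is fixed across the $t$ runs), exactly matching the $t$-tensor structure of $D_2$. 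Summing the contributions of all hybrids then gives the stated bound.

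The step I expect to be the main obstacle is the rounding hybrid, because it is the only place where an $\ell_2$ perturbation of the \emph{unnormalized} vector must be converted into a trace-distance bound on the \emph{normalized} state, and this conversion is sound only after the norm has been controlled. Getting the constants right requires threading the balanced-norm lower bound, the truncation radius $B$, and the precision $\varepsilon_{\mathrm{round}}$ together so that Lemma~\ref{lemma:angular} delivers precisely the $\sqrt{t}\cdot 2^{-\secp}$ scaling. The small-$n$ regime is a secondary obstacle: there concentration fails, and one must argue separately that the explicit-resampling fallback both preserves the spherical symmetry of the conditioned distribution and succeeds except with probability $(8/10)^{\secp}$.
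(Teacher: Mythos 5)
Your construction and most of your hybrid steps match the paper's, but there is a genuine gap at the very first (and most important) hybrid: the passage from the Haar state to the normalization of the Gaussian \emph{conditioned on being balanced}. You bound this step by $\Pr[\neg\text{balanced}]$ via Fact~\ref{fact:statistical_condition} and Lemma~\ref{lemma:sd_td}, but under the continuous Gaussian this probability is dominated by the norm-concentration failure term $e^{-N/4}$, which for small $n$ (say $n=1$, $N=2$) is a \emph{constant}, not a function of $\secp$. Generator-side resampling cannot repair this: resampling until balanced only makes the \emph{generator's} output close to the conditioned distribution (that is where the $(8/10)^{\secp}$ term legitimately comes from), but it does nothing to shrink the distance between the conditioned distribution and the unconditioned Haar distribution, which is a fixed quantity determined by the measure of the bad event. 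Your closing remark that one must check the resampling fallback ``preserves the spherical symmetry of the conditioned distribution'' points at the right worry but resolves it incorrectly: the balanced-conditioned distribution is \emph{not} spherically symmetric, since both the per-coordinate bound $\abs{v_i}\le B$ and the condition $\norm{R_{(\varepsilon,B)}(v)}\ge \sqrt{N}/2$ on the \emph{rounded} vector are basis-dependent conditions.

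The paper closes this gap with a trick your proposal is missing. Conditioning the continuous Gaussian on a \emph{uniform} lower bound on the true norm, $\norm{v'}\ge \sqrt{N}/2$, preserves spherical symmetry exactly, so the normalization of that conditional distribution is \emph{exactly} Haar --- this conditioning step costs zero, for every $n$. The remaining discrepancies between that distribution and the balanced-conditioned one are then paid for as \emph{conditional} probabilities of rare events, each negligible in $\secp$ uniformly in $n$: adding or removing the per-coordinate bound costs $4e^{-\secp}$ (Corollaries~\ref{cor:conditional_vectors_coordinates_bounded} and~\ref{cor:2nd_conditional_vectors_coordinates_bounded}), and swapping the condition on $\norm{R_{(\varepsilon,B)}(v')}$ for the condition on $\norm{v'}$ costs $4\cdot 2^{-\secp}6^{-N}$ via the thin-shell estimate of Lemma~\ref{lemma:layer_mass_small} (Corollary~\ref{cor:balanced_vectors_not_in_layer}) --- a step your proposal does not mention at all, even though the generator can only ever test the rounded vector's norm. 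Without these two ingredients, your bound degrades to a constant in the small-$n$ regime, which is precisely the regime scalability is meant to address; the rest of your hybrids (sampler accuracy, angular/rounding bound via Lemma~\ref{lemma:angular}, and the $\bvs$ failure accounting via Claim~\ref{claim:amp_bvs_success}) do track the paper's proof correctly.
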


\begin{proof}
	We start with describing the procedure of $\Gen^{U_{\tilde{f}}}(1^n, 1^\secp)$.
	First, we denote $\varepsilon := 2^{-n-\secp}$, $B := \ceil{2\sqrt{n + \secp}}$ and set the polynomial $\poly(n, \secp)$ that denotes the output size of $\tilde{f}$ to be $\secp \cdot r(\varepsilon, B)$, where $r(\varepsilon, B)$ is the randomness complexity of the rounded Gaussian sampler $\gauss{\varepsilon, B}$.
	Given the oracle access to $\tilde{f} \in \left( \{ 0, 1 \}^{\secp \cdot r(\varepsilon, B)} \right)^{\{ 0, 1 \}^{n}}$, the algorithm starts with deciding on a different function $f \in \left( \{ 0, 1 \}^{r(\varepsilon, B)} \right)^{\{ 0, 1 \}^{n}}$ that it is going to use.
	
	In what follows, denote $N := 2^n$, for a function $h \in \left( \{ 0, 1 \}^{r(\varepsilon, B)} \right)^{\{ 0, 1 \}^{n}}$ denote by $v^h$ the vector that is created by rounded Gaussian sampling with $h$, that is, $\forall x \in \{ 0, 1 \}^n, v^h_x := \gauss{\varepsilon, B}(h(x))$.
	We think of $\tilde{f}$, that has an output length of $\secp \cdot r(\varepsilon, B)$, as $\secp$ different functions, each having an output length of $r(\varepsilon, B)$.
	Specifically, for $i \in [\secp]$ define the function $f_i \in \left( \{ 0, 1 \}^{r(\varepsilon, B)} \right)^{\{ 0, 1 \}^{n}}$ as the function that for input $x \in \{ 0, 1 \}^n$ outputs the $i$-th packet of $r(\varepsilon, B)$ bits from $\tilde{f}(x)$.
	
	\noindent The procedure of $\Gen$ follows.
	\begin{enumerate}
		\item 
		Decide on a function $f \in \left( \{ 0, 1 \}^{r(\varepsilon, B)} \right)^{\{ 0, 1 \}^{n}}$:
		\begin{itemize}
			\item 
			If $N > \secp$, we actually use only the first $r(\varepsilon, B)$ bits of the output of $\tilde{f}$.
			That is, $f$ is simply $f_1$.
			
			\item
			If $N \leq \secp$, iterate for $i \in [\secp]$:
			\begin{itemize}
				\item 
				Compute the vector $v^{f_{i}}$ by applying $\gauss{\varepsilon, B}$ to each of the $N$ outputs of $f_i$.
				If $\norm{v^{f_i}} \geq \frac{\sqrt{N}}{2}$, denote $f := f_i$ and halt the loop.\footnote{Note that all steps here can be done efficiently in $\secp$, because $\secp \geq N = 2^n$.} 
			\end{itemize}
			
			If you executed all iterations and did not get a function $f_i$ s.t. $\norm{v^{f_i}} \geq \frac{\sqrt{N}}{2}$, halt and output $\ket{0^{n}}$ (as a sign of failure).
		\end{itemize}
		
		\item
		Given $f$ execute $\bvs^{U_{v^f}}(\sqrt{2}\cdot B, 8\cdot \secp\cdot B^2)$ and output the $n$-qubit quantum state generated by $\bvs$.
	\end{enumerate}
	
	
	We now need to show that the distributions $D_1$ and $D_2$ are close in trace, this will be done by a hybrid argument, that is, we will consider hybrid distributions, starting from the distribution generated by our construction, and approaching the distribution over truly random quantum states (that is, $t$ copies of a random state).
	We will explain why each pair of consecutive distributions are close and at the end use the triangle inequality of trace distance to bound everything together.
	
	In what follows we will use the fact that process $D_1$ is exactly the following:
	\begin{enumerate}
	    \item A first step where we sample $v^f$. 
	    \begin{itemize}
	        \item
	        In the case $N > \secp$, we sample $f \gets \left( \{ 0, 1 \}^{r(\varepsilon, B)} \right)^{\{ 0, 1 \}^{n}}$ once and $v^f$ is determined.
	        
	        \item
	        In the case $N \leq \secp$, we execute $t$ tries. In each try, we sample $f \gets \left( \{ 0, 1 \}^{r(\varepsilon, B)} \right)^{\{ 0, 1 \}^{n}}$, compute the vector $v^f$ and check whether $\norm{v^{f}} \geq \frac{\sqrt{N}}{2}$.
	        If none of the trials succeed, the output is $\ket{0^n}$.
	    \end{itemize}
	    
	    \item A second step where we (try to) generate $t$ copies of the quantum state $\hatv^f$, by executing $\bvs^{U_{v^f}}(\sqrt{2}\cdot B, 8\cdot \secp\cdot B^2)$, $t$ times.
	\end{enumerate}
	Consider the following sampling processes.
	
	\begin{itemize}
		\item $P_1: $ We start with the process $D_1$.
		
		\item $P_2: $ 
		We would like to think about a more natural distribution for the vector $v^f$, and move to the output distribution of the Gaussian sampler algorithm.
		In this process every sampling of $v^f$ is swapped with a sampling $v \gets \left( \gauss{\varepsilon, B} \right)^{N}$, the (multidimensional) output distribution of the Gaussian sampler.
		This means in particular that for the case $N \leq \secp$ we repeatedly sample $v \gets \left( \gauss{\varepsilon, B} \right)^{N}$ and check that $\norm{v} \geq \frac{\sqrt{N}}{2}$ (instead of sampling $f$ and check that $\norm{v^f} \geq \frac{\sqrt{N}}{2}$), and if we didn't get $\norm{v} \geq \frac{\sqrt{N}}{2}$ in any of our tries then we also output $\ket{0^{n}}$.
		After obtaining $v$, we carry on regularly (that is, we perform $t$ executions of $\bvs^{U_{v}}(\sqrt{2}\cdot B, 8\cdot \secp\cdot B^2)$).
		This process yields exactly the same distribution as $P_1$.
		
		\item $P_3: $ In this step we will move to the actual rounded Gaussian distribution, rather than the output distribution of the Gaussian sampler (which doesn't necessarily sample exactly from the rounded Gaussian distribution).
		In this process everything is identical to the last process, with the exception that every sampling $v \gets \left( \gauss{\varepsilon, B} \right)^{N}$ is swapped to sampling $v \gets \distribution{N}{\varepsilon, B}$. 
		To move to this distribution we will use the correctness guarantee of the sampler algorithm $\gauss{\varepsilon, B}$.
		
		\item $P_4: $ In order to use the correctness guarantee of the algorithm $\bvs$ and argue that it generates copies of the quantum state $\ket{\hatv}$ correctly, we will need our sampled vector $v$ to have large norm.
		In this point we would like to sample a vector that's always long.
		Therefore, instead of performing the first step where we sample $v \gets \distribution{N}{\varepsilon, B}$ (once, in case $N > \secp$, or $\secp$ times until $v$ is long enough, in the case $N \leq \secp$), we sample {\it once}, $v \gets \widetilde{\distribution{N}{\varepsilon, B}}$ from the {\it conditional} distribution of $\distribution{N}{\varepsilon, B}$, with the condition that $v$ is long enough: $\norm{v} \geq \frac{\sqrt{N}}{2}$.
		After the sampling of $v$ we carry on regularly to execute $\bvs$ $t$ executions.
		
		\item $P_5: $ We now move to a process where after the vector is sampled, it just outputs $t$ copies of its corresponding quantum state.
		This process is identical to the last one, with the exception that after sampling $v \gets \widetilde{\distribution{N}{\varepsilon, B}}$, instead of executing $\bvs$ $t$ times, the process simply outputs the quantum state $\ket{\hatv}^{\otimes t}$.
		To move to the process we'll use the success probability of the algorithm $\bvs$.
		
		\item $P_6: $ We currently manage to sample vectors and generate a $t$-tensor of their quantum state, and we would like to start smoothing the distribution that our vector $v$ is coming from, for it to get closer to a random unit vector.
		In particular, the distribution over random quantum states is continuous, and our processes so far yields only a discrete distribution.
		This step is a preparation for the next ones, and is intended so that we will consider the same output distribution as before, but generated by a different process: The sampling $v \gets \widetilde{\distribution{N}{\varepsilon, B}}$ is swapped to sampling $v'$ from the conditional distribution of the standard (and continuous) Gaussian distribution $v' \gets \normalCompDist^{N}$ with the same condition that the rounding $R_{(\varepsilon,  B)}(v')$ of $v'$ satisfies the norm condition $\norm{R_{(\varepsilon,  B)}(v')} \geq \frac{\sqrt{N}}{2}$.
		We then continue as before and output a $t$-tensor of the quantum state of the rounded vector, that is, $\ket{\widehat{R_{(\varepsilon,  B)}}(v')}^{\otimes t}$.
		Note that this process samples from a continuous distribution but still \emph{outputs} a discrete distribution.
		These two processes yield identical distributions due to the definition of the rounded Gaussian distribution.
		
		\item $P_7: $ In order move to a continuous distribution we will use Lemma \ref{lemma:angular}, and for this we will need that $v'$ will not only yield a long rounding $R_{(\varepsilon,  B)}(v')$, but we'll need $v'$ to be physically close to $R_{(\varepsilon,  B)}(v')$.
		In this process, instead of sampling from $\normalCompDist^{N}$ with the only condition $\norm{R_{(\varepsilon,  B)}(v')} \geq \frac{\sqrt{N}}{2}$ we add a second condition: $\forall i \in [N] : |v'_i| \leq B$ (recall that for complex numbers with $B$-bounded absolute value, their rounding by $R_{(\varepsilon, B)}(\cdot)$ will be $\sqrt{2}\cdot\varepsilon$-close).
		
		\item $P_8: $ We finally move to a continuous output distribution.
		$v'$ is sampled as before (from the conditional distribution of the continuous Gaussian distribution, with the conditions that $v'$ is bounded per-coordinate and that the rounding of $v'$ is long enough), but this time we output the quantum state without rounding it, that is, $\ket{\hatv'}^{\otimes t}$ (rather then $\ket{\widehat{R_{(\varepsilon,  B)}}(v')}^{\otimes t}$). 
		For the pass between these two distributions we use  Lemma \ref{lemma:angular}.
		
		\item $P_9: $ Our last process yielded a continuous distribution over quantum states, but the uniform distribution over quantum states is not only continuous, it is also {\bf spherically symmetric} (a property that our current distribution does not possess).
		To move to a spherically symmetric distribution it's tempting to simply cancel our conditions on $v'$; this will indeed yield a spherically symmetric distribution (even the most obvious one - we know that a normalized random Gaussian vector is exactly a random unit vector) but it is not immediately clear how to show that such distribution is close to ours\footnote{Note that the probability mass of the vectors that do not satisfy the two conditions is negligible, but in $N$. Our main challenge in this construction is to show negligible statistical distance in $\secp$.}.
		Our trick will be to observe that the conditional distribution of the Gaussian distribution where the condition is a \emph{uniform} lower bound (as a side note, can also be upper bound) on the sampled vectors is also a spherically symmetric distribution, and we'll move to this spherically symmetric distribution in two steps.
		
		In this step we'll tweak our condition on the length of $v'$: we will ask that $\norm{v'} \geq \frac{\sqrt{N}}{2}$ rather than the condition that the {\it rounded vector} is long enough $\norm{R_{(\varepsilon, B)}(v')} \geq \frac{\sqrt{N}}{2}$ (the other condition about the per-cordinate bound stays the same).
		
		\item $P_{10}: $ 
		In this step we'll drop the per-coordinate-bound condition from the sampling of $v'$, that is, $v'$ is sampled from the conditional distribution of $\normalCompDist^{N}$ with the only condition being $\norm{v'} \geq \frac{\sqrt{N}}{2}$.
		The output is then again the quantum state $\ket{\hatv'}^{\otimes t}$.
		
		\item $P_{11}: $ Our last step will be the distribution $D_2$.
		By the spherical symmetry of the last distribution, it is actually identical to this one.
	\end{itemize}
	
	We now explain why each pair of consecutive distributions are close in trace, and recall that we need to show this closeness in trace as a {\it negligible function of $\secp$}, rather than of $n$ (or even $N$, as we always keep in mind the case where $n = 1, N = 2$).
	A useful notion we will repeatedly use is, for distributions over quantum states, we can always consider their "vector distributions", which are simply distributions over the vectors that describe the output quantum states (with the same corresponding probabilities).
	By giving an upper bound on the statistical distance between a pair of such vector distributions, by Lemma \ref{lemma:sd_td} we also obtain the same upper bound on the trace distance between the two original quantum-state distributions, regardless of the number of output copies of the sampled quantum state.
	
	\begin{itemize}
		\item $\td\left( P_1, P_2 \right) = 0 :$
		Sampling a vector $v^f$ by sampling a random function $f$ and then applying the sampler $\gauss{\varepsilon, B}$ to each of its outputs is by definition exactly like sampling a vector $v \gets \left( \gauss{\varepsilon, B} \right)^{N}$, thus the vector distributions of the quantum-state distributions $P_1, P_2$ have statistical distance 0 and thus in particular the above trace distance follows.
		
		\item  $\td\left( P_2, P_3 \right) \leq \secp \cdot 2^{-\secp}:$
		Consider the vector distributions of $P_2, P_3$.
		In the process $P_2$, in any of the cases (either $N > \secp$ or $N \leq \secp$) we sample at most $N \cdot \secp$ times from the sampler $\gauss{\varepsilon, B}$, and in $P_3$ we perform exactly the same sampling pattern but from the distribution $\SingleDistribution{\varepsilon, B}$.
		By the correctness of the sampling algorithm $\gauss{\varepsilon, B}$, the statistical distance between the output of $\gauss{\varepsilon, B}$ and $\SingleDistribution{\varepsilon, B}$ is bounded by $\varepsilon:=2^{-n-\secp}$, and thus the statistical distance of sampling $N\cdot \secp$ times from these two is bounded by $N\cdot \secp \cdot \varepsilon = \secp \cdot 2^{-\secp}$, which is a bound on the statistical distance between the vector distributions of $P_2, P_3$, and thus we get the above bound.
		
		\item $\td\left( P_3, P_4 \right) \leq 2\cdot\left( \frac{78}{100} \right)^{\secp}:$
		Consider the vector distributions of $P_3, P_4$, and denote them by $\tilde{P_3}, \tilde{P_4}$, accordingly.
		Observe that $\tilde{P_4}$ is a conditional distribution of $\tilde{P_3}$: In the case $N > \secp$, it is conditioned on that the sampled $v$ satisfies the norm condition, and in the case $N \leq \secp$ is it conditioned that in one of the $\secp$ samplings, the sampled vector $v$ satisfies the norm condition.
		This implies that the statistical distance between the two vector distributions is bounded by the probability that the condition does not hold (Fact \ref{fact:statistical_condition}).
		
		By Lemma \ref{lemma:rounded_gaussian_vectors_long}, the probability that a rounded Gaussian vector with $N$ coordinates and rounding parameters $\varepsilon := 2^{-n-\secp}, B := \ceil{2\sqrt{n + \secp}}$ does not satisfy the balance condition, is bounded by $e^{-\frac{N}{4}} + e^{-\secp}$.
		In the case where $N > \secp$ this probability is bounded by,
		$$
		e^{-\frac{\secp}{4}} + e^{-\secp} \leq
		2\cdot e^{-\frac{\secp}{4}} <
		2\cdot \left( \frac{78}{100} \right)^{\secp} \enspace .
		$$
		Also, in the case where $N \leq \secp$, we perform $\secp$ independent samplings of $v$ and thus the probability that we fail (in all $\secp$ times) is bounded by,
		$$
		\left( e^{-\frac{N}{4}} + e^{-\secp} \right)^\secp \underset{\big( N = 2^n \geq 2 \; , \; \secp \geq 3 \big)}{\leq}
		\left( e^{-\frac{2}{4}} + e^{-3} \right)^\secp <
		\left( \frac{7}{10} \right)^\secp \enspace .
		$$
		In any of the cases, the probability is bounded by $2\cdot\left( \frac{78}{100} \right)^{\secp}$.
		
		\item $\td\left( P_4, P_5 \right) \leq t\cdot e^{-\secp}:$
		Let $\tilde{P_4}, \tilde{P_5}$ be the vector distributions of $P_4, P_5$.
		Observe that $\tilde{P_5}$ is a conditional distribution of $\tilde{P_4}$, conditioned on the algorithm $\bvs$ succeeding $t$ times to generate the state $\ket{\hatv}$, which implies that the statistical distance between them is bounded by the probability that in one of $t$ executions, $\bvs$ failed at least once (Fact \ref{fact:statistical_condition}).
		By a union bound this probability is in turn bounded by $t$ times the probability that $\bvs^{U_v}(\sqrt{2}\cdot B, 8\cdot \secp\cdot B^2)$ failed (once).
		In both distributions, $v$ is sampled from the conditional distribution $\widetilde{\distribution{N}{\varepsilon, B}}$ and thus satisfies $\norm{v} \geq \frac{\sqrt{N}}{2}$ always (also, by the definition of the $(\varepsilon, B)$-rounded distribution, $\forall i \in [N] : |v_i| \leq \sqrt{2} \cdot B$).
		By Claim \ref{claim:amp_bvs_success}, the probability that $\bvs^{U_v}(\sqrt{2}\cdot B, 8\cdot \secp\cdot B^2)$ fails to generate $\ket{\hatv}$ is bounded by $e^{-\secp}$, and it follows that the statistical distance between $\tilde{P_4}, \tilde{P_5}$ is bounded by $t\cdot e^{-\secp}$ and thus the same bound applies for the original distributions $P_4, P_5$.
		
		\item $\td\left( P_5, P_6 \right) = 0:$
		By definition of the rounded Gaussian distribution, sampling $w$ from $\distribution{N}{\varepsilon, B}$ is like sampling $w' \gets \normalCompDist^{N}$ and outputting $R_{(\varepsilon, B)}(w')$.
		Note that it follows that sampling $v \gets \widetilde{\distribution{N}{\varepsilon, B}}$ (and outputting $v$) is identical to sampling $v' \gets \normalCompDist^{N}$ conditioned on $\norm{R_{(\varepsilon,  B)}(v')} \geq \frac{\sqrt{N}}{2}$ (and outputting $R_{(\varepsilon,  B)}(v')$).
		
		\item $\td\left( P_6, P_7 \right) \leq 4\cdot e^{-\secp}:$
		The vector distribution of $P_7$ is by its definition a conditional distribution of (the vector distribution of) $P_6$, with the condition that the sampled $v'$ satisfies the per-coordinate bound condition: $\forall i \in [N] : |v'_i| \leq B$.
		By Corollary \ref{cor:conditional_vectors_coordinates_bounded}, the probability that the condition does not hold is bounded by $4\cdot e^{-\secp}$, and thus the bound on the statistical distance between the vector distributions (and also the quantum-state distributions) follows.
		
		\item $\td\left( P_7, P_8 \right) \leq 6\sqrt{2}\cdot \sqrt{t} \cdot N^{-1}\cdot 2^{-\secp}:$
		We now use lemma \ref{lemma:angular}, but we will first understand why we can use it.
		Denote by $D$ the distribution that $v'$ is sampled from and by $V$ the support of $D$.
		Let $v' \in V$, and by the definition of $V$ we have $\forall i \in [N] : |v'_i| \leq B$, and thus in every entry of $v'$ both real and imaginary parts are $B$-bounded, and by the definition of the rounding function both are rounded upwards when $R_{(\varepsilon, B)}(\cdot )$ is applied, and furthermore are rounded upwards by at most $\varepsilon$.
		This in turn implies that $\forall i \in [N] : |v'_i - R_{(\varepsilon, B)}(v')_i| \leq \sqrt{2} \cdot\varepsilon$ (note that we have a factor of $\sqrt{2}$ because we are dealing with complex numbers).
		Also by the definition of $V$, we have $\norm{R_{(\varepsilon, B)}(v')} \geq \frac{\sqrt{N}}{2}$.
		The above implies that by Lemma \ref{lemma:long_vectors}, $\abs{\braket{\widehat{R_{(\varepsilon,  B)}}(v')}{\hatv'}} \geq 1 - 36\cdot \varepsilon^2$.
		
		Finally, denote by $\varphi(\cdot) : V \rightarrow \bbC^{N}$ the rounding function $R_{(\varepsilon, B)}(\cdot)$, and observe that the distribution $P_7$ is exactly sampling from $D$, applying $\varphi$, normalizing and outputting $t$ copies of the corresponding quantum state, and the distribution $P_8$ is the same, except that we don't apply $\varphi$.
		Because of the above and because we have $\forall v' \in V : \abs{\braket{\widehat{R_{(\varepsilon,  B)}}(v')}{\hatv'}} \geq 1 - 36\cdot \varepsilon^2$, by Lemma \ref{lemma:angular}
		the trace distance between $P_7$, $P_8$ is bounded by $\sqrt{2 \cdot t \cdot 36\cdot \varepsilon^2} = 6\sqrt{2}\cdot \sqrt{t} \cdot 2^{-n-\secp}$.
		
		\item $\td\left( P_8, P_9 \right) \leq 4\cdot 2^{-\secp} \cdot 6^{-N}:$
		Observe that the vector distribution of $P_{9}$ is a conditional distribution of the vector distribution of $P_{8}$, with the condition $\norm{v'} \geq \frac{\sqrt{N}}{2}$.
		As usual, the statistical distance between the distributions is bounded by the probability that the condition does not hold, and by Corollary \ref{cor:balanced_vectors_not_in_layer} it follows that this probability is bounded by $4\cdot 2^{-\secp} \cdot 6^{-N}$ and thus so is the statistical distance.
		
		\item $\td\left( P_9, P_{10} \right) \leq 4\cdot e^{-\secp}:$
		This trace closeness is by very similar reasoning as in the explanation for the upper bound on $\td\left( P_6, P_7 \right)$.
		Specifically, $P_9$ is a conditional distribution of $P_{10}$ with the condition that the vector $v'$ is per-coordinate-bounded.
		The probability that this condition does not hold is bounded by $4\cdot e^{-\secp}$ according to Corollary \ref{cor:2nd_conditional_vectors_coordinates_bounded}.
		
		\item$\td\left( P_{10}, P_{11} \right) = 0:$
		Recall that $P_{11}$ is defined by sampling a random unit vector $v$ from $\bbC^{N}$ and outputting a $t$-tensor of its corresponding quantum state $\ket{\hatv}$.
		Also, it is a known property of the Gaussian distribution that a normalized random Gaussian vector distributes exactly like a random unit vector, this follows from a known property of the Gaussian distribution, which is that it is spherically symmetric i.e. for any unitary transformation $U$, sampling a Gaussian vector and applying $U$ distributes like sampling a Gaussian vector without applying $U$.
		The spherical symmetry property in fact also implies that any conditional distribution of the Gaussian distribution, with the condition that the sampled vector has norm at least $k$ (for any $k \in \bbR$), is also spherically symmetric.
		It follows that a normalization of such conditional distribution of the Gaussian distribution also distributes like a random unit vector.
		
		The distribution $P_{10}$ is captured by the description of this conditional distribution, with $k$ being $\frac{\sqrt{N}}{2}$.
		It follows that the vector distributions of $P_{10}$ and $P_{11}$ are identical and have statistical distance of 0, and thus by Lemma \ref{lemma:sd_td} the trace distance between $P_{10}$ and $P_{11}$ follows.
		\qedhere
	\end{itemize}
\end{proof}

\subsection{Statistical Properties of Multidimensional Gaussian Distribution} \label{subsec:gaussian}

In this subsection we state some useful properties of the Gaussian distribution, relevant to our construction and that support the proof of the main Theorem \ref{thm:main}. We start with quoting \cite[Corollary 2.3]{barvinok2005math}, and follow with a sequence of lemmas and corollaries that are technically involved but conceptually straightforward.

\begin{lemma}[Gaussian Vectors are Almost Always Long {\cite[Corollary 2.3]{barvinok2005math}}] \label{lemma:gaussian_vectors_long}
	For $m \in \bbN$, $\varepsilon > 0$ we have,
	$$
	\Pr_{u \gets \normalDist^m}\left[ \norm{u}^2 \geq \left( 1 - \varepsilon \right)\cdot m \right] \geq 1 - e^{-\frac{\varepsilon^2 \cdot m}{4}} \enspace .
	$$
\end{lemma}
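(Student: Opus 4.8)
The plan is to recognize $\norm{u}^2 = \sum_{i=1}^m u_i^2$ as a chi-squared random variable with $m$ degrees of freedom (and mean $m$), and to establish the claimed lower-tail concentration by the standard Chernoff / exponential-Markov method. Concretely, I would rewrite the complementary event and bound it by introducing a parameter $t > 0$ and applying Markov's inequality to $e^{-t\norm{u}^2}$:
$$
\Pr_{u \gets \normalDist^m}\left[ \norm{u}^2 \leq (1-\varepsilon)\cdot m \right] \leq e^{t(1-\varepsilon)m}\cdot \bbE\!\left[ e^{-t\norm{u}^2} \right] \enspace .
$$

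First I would exploit independence of the coordinates to factor the moment generating function, reducing matters to the one-dimensional Gaussian integral $\bbE_{x \gets \normalDist}[e^{-t x^2}] = (1+2t)^{-1/2}$, valid for $t > -\tfrac12$. This gives $\bbE[e^{-t\norm{u}^2}] = (1+2t)^{-m/2}$, and hence the upper bound $\exp\big( m[\, t(1-\varepsilon) - \tfrac12\log(1+2t)\,] \big)$. Next I would optimize the exponent $f(t) = t(1-\varepsilon) - \tfrac12\log(1+2t)$ over $t>0$: differentiating and solving $f'(t)=0$ yields the minimizer $t^{\ast} = \frac{\varepsilon}{2(1-\varepsilon)}$, at which $f(t^{\ast}) = \tfrac{\varepsilon}{2} + \tfrac12\log(1-\varepsilon)$.

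The final step is the elementary inequality $\log(1-\varepsilon) \leq -\varepsilon - \tfrac{\varepsilon^2}{2}$, which holds for all $\varepsilon \in [0,1)$ since every omitted term $-\varepsilon^k/k$ in the series of $\log(1-\varepsilon)$ is nonpositive. Substituting it collapses the exponent to $f(t^{\ast}) \leq -\tfrac{\varepsilon^2}{4}$, yielding exactly the claimed bound $e^{-\varepsilon^2 m/4}$ on the lower-tail probability and therefore the stated $1 - e^{-\varepsilon^2 m/4}$ lower bound on the event in the lemma.

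Since this statement is quoted directly from \cite{barvinok2005math}, in the final write-up I would simply cite it; the route above is the self-contained alternative. The only mildly delicate point is verifying the Taylor-type inequality $\log(1-\varepsilon) \leq -\varepsilon - \varepsilon^2/2$ on the whole relevant range of $\varepsilon$ rather than merely for small $\varepsilon$ — but this is immediate from the sign of the series terms. Everything else (the Gaussian integral for the MGF and the single-variable optimization of $f$) is routine calculus, so I do not anticipate a genuine obstacle here.
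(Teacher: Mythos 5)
Your proof is correct. The paper itself gives no proof of this lemma --- it is quoted verbatim from \cite[Corollary 2.3]{barvinok2005math} --- so there is nothing to compare against line by line; but your self-contained argument is the standard Chernoff/Laplace-transform bound for the lower tail of a chi-squared variable, which is essentially how the cited source establishes it. The computation checks out: $\bbE[e^{-t\norm{u}^2}] = (1+2t)^{-m/2}$, the optimizer $t^{\ast} = \frac{\varepsilon}{2(1-\varepsilon)}$, the value $f(t^{\ast}) = \frac{\varepsilon}{2} + \frac{1}{2}\log(1-\varepsilon)$, and the series inequality $\log(1-\varepsilon) \leq -\varepsilon - \frac{\varepsilon^2}{2}$ together give exactly $e^{-\varepsilon^2 m/4}$. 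The only cosmetic point is that your optimization implicitly assumes $0 < \varepsilon < 1$ (otherwise $t^{\ast}$ is undefined or negative); for $\varepsilon \geq 1$ the event $\norm{u}^2 \geq (1-\varepsilon)m$ holds with probability $1$, so the lemma is trivial there --- worth one sentence in a final write-up, but not a gap.
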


We'll see that the probability for a sampled vector to be bounded per-coordinate is overwhelming, and thus it can be shown to be balanced with high probability.

\begin{lemma}[Gaussian Vectors are Almost Always Bounded Per-Coordinate] \label{lemma:gaussian_vectors_coordinate_bounded}
	Let $n \in \bbN$, $\secp \in \bbN$ and $B \geq 2\sqrt{n + \secp}$.
	Then,
	$$
	\Pr_{v \gets \normalCompDist^{2^n}}\Big[ \exists i \in [2^n] : |v_i| > B \Big] < e^{-\secp} \enspace .
	$$
\end{lemma}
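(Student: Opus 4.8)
The plan is a standard union bound over the $2^n$ coordinates, preceded by an exact tail bound for a single complex Gaussian coordinate.

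First I would analyze one coordinate. Since $v \gets \normalCompDist^{2^n}$, each entry $v_i$ has real and imaginary parts drawn independently from $\normalDist$, so $\abs{v_i}^2$ is a sum of two independent squared standard Gaussians, i.e.\ a $\chi^2$ random variable with two degrees of freedom. Such a variable is exponentially distributed with density $\tfrac{1}{2} e^{-s/2}$ on $[0,\infty)$, which yields the clean closed form
\[
\Pr_{v_i \gets \normalCompDist}\big[ \abs{v_i} > B \big] = \Pr\big[ \abs{v_i}^2 > B^2 \big] = e^{-B^2/2} \enspace .
\]

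Next I would apply a union bound over all $2^n$ coordinates,
\[
\Pr_{v \gets \normalCompDist^{2^n}}\big[ \exists i \in [2^n] : \abs{v_i} > B \big] \leq 2^n \cdot e^{-B^2/2} \enspace ,
\]
and finally substitute the hypothesis $B \geq 2\sqrt{n+\secp}$, which gives $B^2/2 \geq 2(n+\secp)$ and hence
\[
2^n \cdot e^{-B^2/2} \leq 2^n \cdot e^{-2(n+\secp)} = \left( \tfrac{2}{e^2} \right)^{n} \cdot e^{-2\secp} \leq e^{-2\secp} < e^{-\secp} \enspace ,
\]
where the penultimate inequality uses $2 < e^2$ so that $(2/e^2)^n \leq 1$ for every $n$, and the last (strict) inequality holds for every $\secp \geq 1$.

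There is no serious obstacle here; the argument is elementary concentration. The only points requiring care are the following. First, one must correctly identify $\abs{v_i}^2$ as a two-degree-of-freedom $\chi^2$: the factor $2$ in the exponent $e^{-B^2/2}$ comes precisely from the \emph{complex}, rather than real, Gaussian, and getting this wrong would change the calibration. Second, one should check that the prescribed constant $B \geq 2\sqrt{n+\secp}$ is exactly what is needed, so that the $2^n$ factor from the union bound is absorbed by $e^{-2n}$ (using $2/e^2 < 1$) while still leaving the required $e^{-\secp}$ slack; a smaller constant in front of $\sqrt{n+\secp}$ would not suffice to kill the $2^n$ term.
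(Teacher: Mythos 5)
Your proof is correct, and its key step differs from the paper's. Both arguments open with a union bound over the $2^n$ coordinates, but you handle the single-coordinate tail \emph{exactly}: $\abs{v_i}^2$ is $\chi^2$ with two degrees of freedom, i.e.\ exponential with density $\tfrac{1}{2}e^{-s/2}$, giving $\Pr[\abs{v_i}>B]=e^{-B^2/2}$ precisely. The paper instead applies a second union bound over the real and imaginary parts of $v_i$ and invokes the Gaussian Q-function bound $\Pr_{x\gets\normalDist}[x>t]<\tfrac{1}{t\sqrt{2\pi}}e^{-t^2/2}$ at $t=B/\sqrt{2}$, which yields the lossier estimate $\tfrac{4}{B\sqrt{\pi}}\cdot 2^n\cdot e^{-B^2/4}$; the factor of $2$ lost in the exponent is then absorbed by the constant $2$ in $B\ge 2\sqrt{n+\secp}$ together with the prefactor $\tfrac{4}{B\sqrt{\pi}}(2/e)^n<1$. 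Your route is more elementary and quantitatively stronger: you finish with $e^{-2\secp}$, a full factor $e^{-\secp}$ of slack, whereas the paper lands only just below $e^{-\secp}$. One inaccuracy in your closing commentary: the claim that a smaller constant in front of $\sqrt{n+\secp}$ ``would not suffice'' is true only for the paper's weaker per-coordinate estimate, not for yours. Under your sharp tail bound, any constant $c\ge\sqrt{2}$ in $B\ge c\sqrt{n+\secp}$ already gives $2^n e^{-c^2(n+\secp)/2}=(2e^{-c^2/2})^n e^{-c^2\secp/2}<e^{-\secp}$ for $n,\secp\ge 1$, since $2e^{-c^2/2}\le 2/e<1$ and $c^2/2\ge 1$. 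This misstatement is tangential and does not affect the validity of your proof.
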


\begin{proof}
	The proof follows by a calculation and by known properties of the Gaussian distribution.
	
    \begin{align}
	\Pr_{v \gets \normalCompDist^{2^n}}\Big[ \exists i \in [2^n] : |v_i| > B \Big] &\underset{(\text{union bound})}{\leq}
	2^n \cdot \Pr_{z \gets \normalCompDist}\Big[ |z| > B \Big] \nonumber\\
	&= 2^n \cdot \Pr_{y \gets \normalDist^2}\Big[ \sqrt{y^2_1 + y^2_2} > B \Big] \nonumber\\
	&\underset{(\text{union bound})}{\leq} 2^n \cdot 2 \cdot \Pr_{x \gets \normalDist}\bigg[ |x| > \frac{B}{\sqrt{2}} \bigg] \nonumber\\
	&= 2^n \cdot 4\cdot \Pr_{x \gets \normalDist}\bigg[ x > \frac{B}{\sqrt{2}} \bigg] \nonumber\\
	&< 2^{n} \cdot 4\cdot \frac{1}{B\sqrt{\pi}} \cdot e^{-\frac{B^2}{2}\cdot \frac{1}{2}} \label{eq:gaussian}\\
	&= \frac{4}{B \sqrt{\pi}} \cdot 2^{n} \cdot e^{-\frac{B^2}{4}} \nonumber\\
	&\underset{\left( B \geq 2\cdot \sqrt{n + \secp} \right)}{<} e^{- \secp} \nonumber \enspace ,
	\end{align}

	where \ref{eq:gaussian} follows from a well known upper bound on the q-function of the Gaussian distribution:
	$$
	\forall t \geq 0 : \Pr_{x \gets \normalDist}\big[ x > t \big] < \frac{1}{t\sqrt{2\cdot \pi}} \cdot e^{-\frac{t^2}{2}} \enspace .
	$$
\end{proof}

The corollary about Gaussian vectors being balanced follows.

\begin{corollary}[Gaussian Vectors are Almost Always Balanced] \label{lemma:gaussian_vectors_balanced}
	Let $n \in \bbN$, $\secp \in \bbN$ and $B \geq 2\sqrt{n + \secp}$.
	Then we have,
	$$
	\Pr_{v \gets \normalCompDist^{2^n}}\left[ \left( \norm{v} \geq \frac{\sqrt{2^n}}{2} \right) \land \left( \forall i \in [2^n] : |v_i| \leq B \right) \right] > 1 - \left( e^{-\frac{2^n}{4}} + e^{-\secp} \right) \enspace .
	$$
\end{corollary}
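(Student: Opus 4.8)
The plan is to decompose the complementary (bad) event via a union bound into two pieces, and to control each piece using one of the two preceding results. Concretely, I would write the failure of the balance condition as the union of the \emph{short} event $\norm{v} < \frac{\sqrt{2^n}}{2}$ and the \emph{unbounded} event $\exists i \in [2^n] : |v_i| > B$. The unbounded event is handled immediately by Lemma~\ref{lemma:gaussian_vectors_coordinate_bounded}: under the hypothesis $B \geq 2\sqrt{n+\secp}$ it occurs with probability strictly less than $e^{-\secp}$, with no further work required.

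For the short event, the key step is to reinterpret the complex Gaussian. A vector $v \gets \normalCompDist^{2^n}$ can be viewed as a real Gaussian vector of dimension $m = 2\cdot 2^n = 2^{n+1}$, since each complex coordinate splits into its real and imaginary parts, each an independent $\normalDist$ sample, and $\norm{v}^2$ equals the squared Euclidean norm of this $2^{n+1}$-dimensional real Gaussian (so in particular $\bbE[\norm{v}^2] = 2\cdot 2^n$). I would then apply Lemma~\ref{lemma:gaussian_vectors_long} with $m = 2^{n+1}$ and slack parameter $\varepsilon = \tfrac{7}{8}$. This gives $\norm{v}^2 \geq \left(1 - \tfrac{7}{8}\right)\cdot 2^{n+1} = \tfrac{2^n}{4}$, i.e.\ $\norm{v} \geq \frac{\sqrt{2^n}}{2}$, except with probability at most $e^{-\frac{(7/8)^2\cdot 2^{n+1}}{4}} = e^{-\frac{49\cdot 2^n}{128}}$. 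Since $\frac{49}{128} > \frac{1}{4}$, this failure probability is bounded by $e^{-\frac{2^n}{4}}$, which is exactly the target.

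Finally, combining the two tail bounds via the union bound shows that the complementary event occurs with probability at most $e^{-\frac{2^n}{4}} + e^{-\secp}$, and taking the complement yields the claimed lower bound $1 - \left(e^{-\frac{2^n}{4}} + e^{-\secp}\right)$. I expect the only genuine point of care to be the bookkeeping between the complex and real Gaussians: doubling the ambient dimension to $2^{n+1}$, tracking that this is what feeds into Lemma~\ref{lemma:gaussian_vectors_long}, and choosing the slack $\varepsilon$ large enough that $(1-\varepsilon)$ times the doubled dimension clears the modest threshold $\tfrac{2^n}{4}$ while still keeping the resulting exponent above $\tfrac{2^n}{4}$. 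Everything else is a direct invocation of the two lemmas together with a union bound.
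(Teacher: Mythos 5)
Your proposal is correct and is essentially the paper's own argument: the paper likewise invokes Lemma~\ref{lemma:gaussian_vectors_long} on the real Gaussian of dimension $2\cdot 2^n$ with slack $\varepsilon = \frac{7}{8}$ (yielding the exponent $\frac{49}{128}\cdot 2^n \geq \frac{2^n}{4}$) and Lemma~\ref{lemma:gaussian_vectors_coordinate_bounded} for the per-coordinate bound, combining them via the same additive decomposition of the failure events. The only difference is cosmetic --- the paper splits $\Pr\left[\norm{v} \geq \frac{\sqrt{2^n}}{2}\right]$ into bounded/unbounded pieces and rearranges, whereas you phrase it as a union bound on the complement, which is the same computation.
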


\begin{proof}
	\begin{align*}
	1 - e^{-\frac{2^n}{4}} &\leq 
	1 - e^{-\left( \frac{7}{8} \right)^2 \cdot \frac{2\cdot 2^n}{4}} \\
	&\underset{(\text{Lemma \ref{lemma:gaussian_vectors_long}})}{\leq}
	\Pr_{u \gets \normalDist^{2\cdot 2^n}}\left[ \norm{u}^2 \geq \left( 1 - \frac{7}{8} \right)\cdot \left( 2\cdot 2^n \right) \right] \\
	&= \Pr_{u \gets \normalDist^{2\cdot 2^n}}\left[ \norm{u} \geq \frac{\sqrt{2^n}}{2} \right] \\
	&= \Pr_{v \gets \normalCompDist^{2^n}}\left[ \norm{v} \geq \frac{\sqrt{2^n}}{2} \right] \\
	&= \Pr_{v \gets \normalCompDist^{2^n}}\left[ \left( \norm{v} \geq \frac{\sqrt{2^n}}{2} \right) \land \left( \exists i \in [2^n] : |v_i| > B \right) \right] \\
	&+ \Pr_{v \gets \normalCompDist^{2^n}}\left[ \left( \norm{v} \geq \frac{\sqrt{2^n}}{2} \right) \land \left( \forall i \in [2^n] : |v_i| \leq B \right) \right] \\
	&\leq \Pr_{v \gets \normalCompDist^{2^n}}\Big[ \exists i \in [2^n] : |v_i| > B \Big] \\
	&+ \Pr_{v \gets \normalCompDist^{2^n}}\left[ \left( \norm{v} \geq \frac{\sqrt{2^n}}{2} \right) \land \left( \forall i \in [2^n] : |v_i| \leq B \right) \right] \\
	&\underset{(\text{Lemma }\ref{lemma:gaussian_vectors_coordinate_bounded})}{<} e^{-\secp} +
	\Pr_{v \gets \normalCompDist^{2^n}}\left[ \left( \norm{v} \geq \frac{\sqrt{2^n}}{2} \right) \land \left( \forall i \in [2^n] : |v_i| \leq B \right) \right] \enspace ,
	\end{align*}
	
	and the above inequalities imply our wanted inequality,
	$$
	1 - \left( e^{-\frac{2^n}{4}} + e^{-\secp} \right) <
	\Pr_{v \gets \normalCompDist^{2^n}}\left[ \left( \norm{v} \geq \frac{\sqrt{2^n}}{2} \right) \land \left( \forall i \in [2^n] : |v_i| \leq B \right) \right] \enspace .
	$$
\end{proof}

The last corollary implies that vectors which are sampled from rounded Gaussian distribution are also long, if the tail-cut parameter is sufficiently large.

\begin{corollary}[Rounded Gaussian Vectors are Almost Always Long] \label{lemma:rounded_gaussian_vectors_long}
	Let $n \in \bbN$, $\secp \in \bbN$, $\varepsilon > 0$ and $B \geq 2\sqrt{n + \secp}$.
	Then we have,
	$$
	\Pr_{v \gets \distribution{2^n}{\varepsilon, B}}\left[ \norm{v} \geq \frac{\sqrt{2^n}}{2} \right] > 1 - \left( e^{-\frac{2^n}{4}} + e^{-\secp} \right) \enspace .
	$$
\end{corollary}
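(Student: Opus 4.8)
The plan is to derive this directly from Corollary~\ref{lemma:gaussian_vectors_balanced} by exploiting a structural feature of the rounding operation: within the truncation bound $B$, the map $R_{(\varepsilon,B)}$ only ever \emph{increases} the magnitude of a coordinate, since it rounds up in absolute value rather than to the nearest multiple of $\varepsilon$. The only way rounding can shrink a vector is through truncation (setting a coordinate to $0$ when it exceeds $B$), and the per-coordinate boundedness guaranteed by the balanced event rules this out. Recall that a sample $v \gets \distribution{2^n}{\varepsilon, B}$ is exactly $v = R_{(\varepsilon, B)}(v')$ for $v' \gets \normalCompDist^{2^n}$, so it suffices to lower bound $\Pr_{v' \gets \normalCompDist^{2^n}}\big[\norm{R_{(\varepsilon, B)}(v')} \geq \tfrac{\sqrt{2^n}}{2}\big]$.

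The key step I would carry out is to show that the balanced event from Corollary~\ref{lemma:gaussian_vectors_balanced}, namely $\norm{v'} \geq \tfrac{\sqrt{2^n}}{2}$ together with $\forall i : |v'_i| \leq B$, \emph{implies} $\norm{R_{(\varepsilon, B)}(v')} \geq \tfrac{\sqrt{2^n}}{2}$. To see this, fix a coordinate $i$ with $|v'_i| \leq B$. Since $|\mathrm{Re}(v'_i)|, |\mathrm{Im}(v'_i)| \leq |v'_i| \leq B$, neither the real nor the imaginary part exceeds $B$ in absolute value, so (recalling that $R_{(\varepsilon,B)}$ is applied to each part separately) neither is truncated to zero and each is merely rounded up in absolute value. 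Hence $|R_{(\varepsilon, B)}(v'_i)| \geq |v'_i|$ for every such $i$, and summing squares over all coordinates gives $\norm{R_{(\varepsilon, B)}(v')}^2 \geq \norm{v'}^2 \geq \tfrac{2^n}{4}$.

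Given this implication the result is immediate. Because the balanced event is contained in the event $\big\{\norm{R_{(\varepsilon, B)}(v')} \geq \tfrac{\sqrt{2^n}}{2}\big\}$, the latter has probability at least that of the balanced event, which by Corollary~\ref{lemma:gaussian_vectors_balanced} (applicable since $B \geq 2\sqrt{n+\secp}$) exceeds $1 - \big(e^{-2^n/4} + e^{-\secp}\big)$. Rewriting in terms of $v \gets \distribution{2^n}{\varepsilon, B}$ yields the claimed bound.

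I do not expect any genuine obstacle here: the content of the lemma is entirely the per-coordinate observation ``$|v'_i| \leq B \Rightarrow |R_{(\varepsilon,B)}(v'_i)| \geq |v'_i|$'', which follows immediately from the definition of the rounding function (round-up-in-absolute-value, applied coordinatewise to real and imaginary parts). The mild subtlety worth flagging explicitly is that one must invoke the \emph{boundedness} clause of the balanced event, not merely its length clause, precisely to guarantee that no coordinate is truncated; once that is noted, the statement is a clean corollary of the balanced-vector bound.
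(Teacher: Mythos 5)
Your proposal is correct and follows essentially the same route as the paper's own proof: both reduce the claim to the containment of the balanced event (length plus per-coordinate boundedness, Corollary~\ref{lemma:gaussian_vectors_balanced}) inside the event that the rounded vector is long, using exactly the observation that $|u_i| \leq B$ forces $|\mathrm{Re}(u_i)|, |\mathrm{Im}(u_i)| \leq B$, so no truncation occurs and rounding can only increase each coordinate's magnitude, hence $\norm{R_{(\varepsilon,B)}(u)} \geq \norm{u}$. Your explicit flagging of why the boundedness clause (and not just the length clause) is needed matches the paper's justification of its key inequality.
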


\begin{proof}
	The calculation of the probability follows.
	\begin{align}
	\Pr_{v \gets \distribution{2^n}{\varepsilon, B}}\left[ \norm{v} \geq \frac{\sqrt{2^n}}{2} \right]
	&= \Pr_{u \gets \normalCompDist^{2^n}}\left[ \norm{R_{(\varepsilon, B)}(u)} \geq \frac{\sqrt{2^n}}{2} \right] \nonumber\\
	& \geq \Pr_{u \gets \normalCompDist^{2^n}}\left[ \left( \norm{u} \geq \frac{\sqrt{2^n}}{2} \right) \land \big( \forall i \in \left[ 2^n \right] : |u_i| \leq B \big) \right] \label{eq:rounding}\\ &\underset{(\text{Corollary }\ref{lemma:gaussian_vectors_balanced})}{>} 
	1 - \left( e^{-\frac{2^n}{4}} + e^{-\secp} \right) \enspace \nonumber,
	\end{align}
	
	where \ref{eq:rounding} is due to properties of the rounding function; if $|u_i| \leq B$ then in particular $\abs{\text{Re}(u_i)} \leq B$, $\abs{\text{Im}(u_i)} \leq B$, and then it is necessarily the case that $\forall i \in [2^n] : |u_i| \leq |R_{(\varepsilon, B)}(u_i)|$ which then implies $\norm{u} \leq \norm{R_{(\varepsilon, B)}(u)}$, and thus $\frac{\sqrt{2^n}}{2} \leq \norm{R_{(\varepsilon, B)}(u)}$ follows from $\frac{\sqrt{2^n}}{2} \leq \norm{u}$.
\end{proof}

\paragraph{Statistical Properties of Conditional Distributions.}
During the proof of Theorem \ref{thm:main} we also argue about statistical properties of conditional distributions that are related to the Gaussian distribution.
For these arguments we derive a few corollaries.

\begin{corollary} \label{cor:conditional_vectors_coordinates_bounded}
	Let $n \in \bbN$, $\secp \in \bbN$ s.t. $\secp \geq 3$, $\varepsilon > 0$, $B \geq 2\sqrt{n + \secp}$ and let $D$ be the conditional distribution of $v \gets \normalCompDist^{2^n}$ with the condition that $\norm{R_{(\varepsilon, B)}(v)} \geq \frac{\sqrt{2^n}}{2}$.
	Then,
	$$
	\Pr_{v \gets D}\Big[ \exists i \in [2^n] : |v_i| > B \Big] < 4 \cdot e^{-\secp} \enspace .
	$$
\end{corollary}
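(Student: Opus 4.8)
The plan is to treat this as a standard conditional-probability estimate: write the probability over $D$ as the ratio $\Pr[A\land C]/\Pr[C]$ for a background Gaussian $v \gets \normalCompDist^{2^n}$, and then plug in the two tail bounds already established in this subsection. Concretely, let $A$ denote the bad event $\{\exists i \in [2^n] : |v_i| > B\}$ and let $C$ denote the conditioning event $\{\norm{R_{(\varepsilon, B)}(v)} \geq \frac{\sqrt{2^n}}{2}\}$. Since $D$ is by definition the law of $v \gets \normalCompDist^{2^n}$ conditioned on $C$,
$$
\Pr_{v \gets D}[A] = \Pr[A \mid C] = \frac{\Pr[A \land C]}{\Pr[C]} \leq \frac{\Pr[A]}{\Pr[C]} \enspace .
$$

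First I would bound the numerator. The hypothesis $B \geq 2\sqrt{n+\secp}$ is exactly what Lemma \ref{lemma:gaussian_vectors_coordinate_bounded} requires, so $\Pr[A] < e^{-\secp}$. Next, for the denominator I would invoke Corollary \ref{lemma:rounded_gaussian_vectors_long}, which (again under $B \geq 2\sqrt{n+\secp}$) gives $\Pr[C] > 1 - (e^{-2^n/4} + e^{-\secp})$. Combining the two yields
$$
\Pr_{v \gets D}[A] < \frac{e^{-\secp}}{1 - \left( e^{-2^n/4} + e^{-\secp} \right)} \enspace .
$$

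The only remaining step is a numerical check that the denominator stays above $\tfrac14$, so that the prefactor is below $4$. This is where the hypotheses $\secp \geq 3$ and $n \geq 1$ (so that $2^n \geq 2$) are used: they give $e^{-2^n/4} \leq e^{-1/2}$ and $e^{-\secp} \leq e^{-3}$, whence the denominator is at least $1 - e^{-1/2} - e^{-3} > 0.34$. Therefore $\Pr_{v \gets D}[A] < \frac{1}{0.34}\, e^{-\secp} < 4\, e^{-\secp}$, as claimed. I do not expect any genuine obstacle here: the conditional decomposition $\Pr[A\mid C] \leq \Pr[A]/\Pr[C]$ reduces everything to the two cited tail bounds, and the worst case for the constant is the smallest admissible dimension $n=1$, which the slack between $2.91$ and $4$ comfortably absorbs.
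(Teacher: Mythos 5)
Your proposal is correct and is essentially the paper's own proof: the paper also writes $\Pr[A\land C] = \Pr[C]\cdot\Pr_{v\gets D}[A]$, bounds $\Pr[A] < e^{-\secp}$ via Lemma~\ref{lemma:gaussian_vectors_coordinate_bounded}, bounds $\Pr[C] > 1-(e^{-2^n/4}+e^{-\secp}) \geq 1-(e^{-1/2}+e^{-3}) > \tfrac{3}{10}$ via Corollary~\ref{lemma:rounded_gaussian_vectors_long} and the hypotheses $2^n\geq 2$, $\secp\geq 3$, and concludes $\Pr_{v\gets D}[A] < \tfrac{10}{3}\,e^{-\secp} < 4\,e^{-\secp}$. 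The only difference is presentational (the paper runs the chain of inequalities starting from $e^{-\secp}$ rather than writing the conditional probability as a ratio), so there is nothing to fix.
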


\begin{proof}
	We have,
	\begin{align*}
	e^{-\secp}
	&\underset{(\text{Lemma }\ref{lemma:gaussian_vectors_coordinate_bounded})}{>}
	\Pr_{v \gets \normalCompDist^{2^n}}\Big[ \exists i \in [2^n] : |v_i| > B \Big] \\
	&> \Pr_{v \gets \normalCompDist^{2^n}}\bigg[ \bigg( \norm{R_{(\varepsilon, B)}(v)} \geq \frac{\sqrt{2^n}}{2} \bigg) \land \big( \exists i \in [2^n] : |v_i| > B \big) \bigg] \\
	&= \Pr_{v \gets \normalCompDist^{2^n}}\bigg[ \norm{R_{(\varepsilon, B)}(v)} \geq \frac{\sqrt{2^n}}{2} \bigg] \cdot
	\Pr_{v \gets D}\Big[ \exists i \in [2^n] : |v_i| > B \Big] \\
	&= \Pr_{u \gets \distribution{2^n}{\varepsilon, B}}\bigg[ \norm{u} \geq \frac{\sqrt{2^n}}{2} \bigg] \cdot
	\Pr_{v \gets D}\Big[ \exists i \in [2^n] : |v_i| > B \Big] \\ 
	&\underset{(\text{Corollary }\ref{lemma:rounded_gaussian_vectors_long})}{>}
	\Big( 1 - \left( e^{-\frac{2^n}{4}} + e^{-\secp} \right) \Big) \cdot \Pr_{v \gets D}\Big[ \exists i \in [2^n] : |v_i| > B \Big] \\
	&\underset{(2^n \geq 2 \; , \; \secp \geq 3)}{\geq} \Big( 1 - \left( e^{-\frac{1}{2}} + e^{-3} \right) \Big) \cdot \Pr_{v \gets D}\Big[ \exists i \in [2^n] : |v_i| > B \Big] \\
	&> \left( \frac{3}{10} \right) \cdot \Pr_{v \gets D}\Big[ \exists i \in [2^n] : |v_i| > B \Big] \enspace .
	\end{align*}
	
	The above implies in particular,
	$$
	e^{-\secp} \cdot \left( \frac{10}{3} \right) > \Pr_{v \gets D}\Big[ \exists i \in [2^n] : |v_i| > B \Big] \enspace ,
	$$
	which implies our wanted inequality.
\end{proof}

\begin{corollary} \label{cor:2nd_conditional_vectors_coordinates_bounded}
	Let $n \in \bbN$, $\secp \in \bbN$, $\varepsilon > 0$, $B \geq 2\sqrt{n + \secp}$ and let $D$ be the conditional distribution of $v \gets \normalCompDist^{2^n}$ with the condition that $\norm{v} \geq \frac{\sqrt{2^n}}{2}$ (unlike the condition $\norm{R_{(\varepsilon, B)}(v)} \geq \frac{\sqrt{2^n}}{2}$ from Corollary \ref{cor:conditional_vectors_coordinates_bounded}).
	Then,
	$$
	\Pr_{v \gets D}\Big[ \exists i \in [N] : |v_i| > B \Big] < 4\cdot e^{-\secp} \enspace .
	$$
\end{corollary}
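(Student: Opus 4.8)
The plan is to mirror the proof of Corollary~\ref{cor:conditional_vectors_coordinates_bounded} almost verbatim; the only structural change is that the conditioning event is now the ``raw'' norm bound $\norm{v} \geq \frac{\sqrt{2^n}}{2}$ rather than the rounded one, which (perhaps surprisingly) makes the argument slightly \emph{cleaner}. The starting point is the unconditional tail bound of Lemma~\ref{lemma:gaussian_vectors_coordinate_bounded}: since $B \geq 2\sqrt{n+\secp}$,
$$
\Pr_{v \gets \normalCompDist^{2^n}}\Big[ \exists i \in [2^n] : |v_i| > B \Big] < e^{-\secp} \enspace .
$$

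Next I would note that the event on the right dominates its intersection with the conditioning event, and then factor that joint probability through the definition of $D$ as a conditional distribution:
$$
e^{-\secp} \;>\;
\Pr_{v \gets \normalCompDist^{2^n}}\Big[ \big(\norm{v} \geq \tfrac{\sqrt{2^n}}{2}\big) \land \big(\exists i : |v_i| > B\big) \Big]
\;=\;
\Pr_{v \gets \normalCompDist^{2^n}}\Big[ \norm{v} \geq \tfrac{\sqrt{2^n}}{2} \Big]
\cdot
\Pr_{v \gets D}\Big[ \exists i : |v_i| > B \Big] \enspace .
$$

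It then remains to lower-bound the conditioning probability $\Pr[\norm{v} \geq \sqrt{2^n}/2]$ by an absolute constant. Whereas Corollary~\ref{cor:conditional_vectors_coordinates_bounded} had to pass through Corollary~\ref{lemma:rounded_gaussian_vectors_long} — whose bound carries an extra $e^{-\secp}$ term and therefore forced the hypothesis $\secp \geq 3$ — here the condition is placed directly on $\norm{v}$, so I can invoke the cleaner estimate $\Pr_{v \gets \normalCompDist^{2^n}}[\norm{v} \geq \sqrt{2^n}/2] \geq 1 - e^{-2^n/4}$ that is established inside the proof of Corollary~\ref{lemma:gaussian_vectors_balanced} (itself an application of Lemma~\ref{lemma:gaussian_vectors_long} to the $2\cdot 2^n$ real coordinates). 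Because $2^n \geq 2$ always holds, this is at least $1 - e^{-1/2} > \frac{3}{10}$, uniformly in $\secp$.

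Combining the two displays gives $e^{-\secp} > \frac{3}{10}\cdot \Pr_{v \gets D}[\exists i : |v_i| > B]$, and rearranging yields $\Pr_{v \gets D}[\exists i : |v_i| > B] < \frac{10}{3}\,e^{-\secp} < 4\,e^{-\secp}$, as required. There is no genuine obstacle in this argument — it is a routine conditional-probability computation — so the only point needing a moment's care is verifying that the constant from the norm lower bound is large enough to absorb the $\frac{10}{3}$ factor for every $n \geq 1$. This is exactly why the weaker, $\secp$-free bound $1 - e^{-2^n/4}$ suffices, and why the assumption $\secp \geq 3$ appearing in Corollary~\ref{cor:conditional_vectors_coordinates_bounded} can be dropped in the present statement.
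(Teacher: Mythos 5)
Your proposal is correct and follows essentially the same route as the paper's own proof: the unconditional per-coordinate tail bound of Lemma~\ref{lemma:gaussian_vectors_coordinate_bounded}, factoring the joint event through the conditional distribution $D$, and lower-bounding $\Pr[\norm{v} \geq \sqrt{2^n}/2]$ by $1 - e^{-2^n/4} \geq 1 - e^{-1/2} > \tfrac{3}{10}$ via the estimate inside the proof of Corollary~\ref{lemma:gaussian_vectors_balanced}, giving the final factor $\tfrac{10}{3} < 4$. Your observation that conditioning on the raw norm (rather than the rounded one) lets the argument bypass Corollary~\ref{lemma:rounded_gaussian_vectors_long} and drop the $\secp \geq 3$ hypothesis is exactly what the paper's proof does as well.
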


\begin{proof}
	We have,
	\begin{align*}
	e^{-\secp}
	&\underset{(\text{Lemma }\ref{lemma:gaussian_vectors_coordinate_bounded})}{>}
	\Pr_{v \gets \normalCompDist^{2^n}}\Big[ \exists i \in [2^n] : |v_i| > B \Big] \\
	&> \Pr_{v \gets \normalCompDist^{2^n}}\bigg[ \bigg( \norm{v} \geq \frac{\sqrt{2^n}}{2} \bigg) \land \big( \exists i \in [2^n] : |v_i| > B \big) \bigg] \\
	&= \Pr_{v \gets \normalCompDist^{2^n}}\bigg[ \norm{v} \geq \frac{\sqrt{2^n}}{2} \bigg] \cdot
	\Pr_{v \gets D}\Big[ \exists i \in [2^n] : |v_i| > B \Big] \\
	&\underset{\left( \substack{ \text{see proof of } \\ \text{Corollary \ref{lemma:gaussian_vectors_balanced}}} \right)}{\geq} \Big( 1 - e^{-\frac{2^n}{4}} \Big) \cdot \Pr_{v \gets D}\Big[ \exists i \in [2^n] : |v_i| > B \Big] \\
	&\underset{(2^n \geq 2)}{\geq} \Big( 1 - e^{-\frac{1}{2}} \Big) \cdot \Pr_{v \gets D}\Big[ \exists i \in [2^n] : |v_i| > B \Big] \\
	&> \left( \frac{3}{10} \right) \cdot \Pr_{v \gets D}\Big[ \exists i \in [2^n] : |v_i| > B \Big] \enspace .
	\end{align*}
	
	The above implies in particular,
	$$
	e^{-\secp} \cdot \left( \frac{10}{3} \right) > \Pr_{v \gets D}\Big[ \exists i \in [2^n] : |v_i| > B \Big] \enspace ,
	$$
	which implies our wanted inequality.
\end{proof}

\paragraph{Geometric Properties of a Vector and its Rounding.}
Besides statistical Properties of the Gaussian distribution, the below is a useful Lemma that will come in handy when we will want to use Lemma \ref{lemma:angular}.
Specifically we will use the below lemma when we'll want to argue that a vector and its rounding are close on the unit sphere (even in high dimensions), as long as the vector had sufficiently-large norm in the first place.

\begin{lemma} [Long Vectors with Coordinate-Bounded Difference are Close when Normalized] \label{lemma:long_vectors}
	Let $n \in \bbN$, $0 \leq \varepsilon \leq \frac{1}{5}$ and $u, v \in \bbC^{2^n}$ s.t.
	\begin{itemize}
		\item 
		$\norm{u} \geq \frac{\sqrt{2^n}}{2}$.
		
		\item 
		$\forall i \in [2^n]: \abs{u_i - v_i} \leq \varepsilon$.
	\end{itemize}
	Then we have the following lower bound on the inner product of the normalized vectors,
	$$
	\abs{\braket{ \hat{u} }{ \hat{v} }} \geq 1 - 18\cdot \varepsilon^2 \enspace .
	$$
\end{lemma}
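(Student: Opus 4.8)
The plan is to reduce the claim about angular (inner-product) closeness to a statement about Euclidean distance between the two \emph{normalized} vectors, and to bound that in turn by the Euclidean distance between the \emph{unnormalized} vectors. Write $N = 2^n$. The coordinate-wise hypothesis $\abs{u_i - v_i} \le \varepsilon$ immediately gives $\norm{u-v}^2 = \sum_i \abs{u_i - v_i}^2 \le N\varepsilon^2$, i.e.\ $\norm{u-v} \le \sqrt{N}\,\varepsilon$. Combined with $\norm{u} \ge \sqrt{N}/2$ (equivalently $\sqrt{N} \le 2\norm{u}$) this yields $\norm{u-v} \le 2\varepsilon\,\norm{u}$. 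The reverse triangle inequality also gives $\norm{v} \ge \norm{u} - \norm{u-v} \ge \sqrt{N}\bigl(\tfrac12 - \varepsilon\bigr) > 0$, where positivity uses $\varepsilon \le \tfrac15$; this is precisely where that hypothesis enters, since it guarantees $v \ne 0$ so that $\hatv$ is well defined.

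The key step I would carry out next is the standard bound
\[
\norm{\hatu - \hatv} \le \frac{2\,\norm{u-v}}{\norm{u}}.
\]
I would prove it from the decomposition $\hatu - \hatv = \frac{u-v}{\norm{u}} + v\bigl(\frac{1}{\norm{u}} - \frac{1}{\norm{v}}\bigr)$, taking norms and using $\norm{v}\cdot\abs{\frac{1}{\norm{u}} - \frac{1}{\norm{v}}} = \frac{\abs{\norm{v}-\norm{u}}}{\norm{u}} \le \frac{\norm{u-v}}{\norm{u}}$, again by the reverse triangle inequality applied to $\abs{\norm{v}-\norm{u}}$. Plugging in $\norm{u-v} \le 2\varepsilon\,\norm{u}$ then gives $\norm{\hatu - \hatv} \le 4\varepsilon$.

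Finally I would convert this Euclidean bound on the unit sphere into the desired inner-product bound. For unit vectors one has the identity $\norm{\hatu - \hatv}^2 = 2 - 2\,\mathrm{Re}\,\braket{\hatu}{\hatv}$, hence $\mathrm{Re}\,\braket{\hatu}{\hatv} = 1 - \tfrac12\norm{\hatu - \hatv}^2 \ge 1 - 8\varepsilon^2$, and since $\abs{\braket{\hatu}{\hatv}} \ge \mathrm{Re}\,\braket{\hatu}{\hatv}$ we obtain $\abs{\braket{\hatu}{\hatv}} \ge 1 - 8\varepsilon^2 \ge 1 - 18\varepsilon^2$, which is in fact stronger than claimed.

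The only genuinely delicate point is the middle step: controlling how far the \emph{direction} of a vector can swing under a perturbation whose size is comparable to the vector's norm. The decomposition above handles it cleanly, but one must take care that $v$ does not vanish (ensured by $\varepsilon \le \tfrac15$) and that the reverse triangle inequality is applied to the difference of norms rather than to the norm of the difference. Everything else is routine algebra; the loose constant in the statement ($18$ instead of $8$) suggests the authors may follow a slightly less tight route, but the same outline applies.
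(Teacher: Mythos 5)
Your proof is correct, and at the top level it follows the same route as the paper's: first bound the Euclidean distance $\norm{\hatu - \hatv}$ between the normalized vectors, then convert to an inner-product bound via $\norm{\hatu - \hatv}^2 = 2 - 2\,\mathrm{Re}\braket{\hatu}{\hatv}$. In fact your decomposition $\hatu - \hatv = \frac{u-v}{\norm{u}} + v\bigl(\frac{1}{\norm{u}} - \frac{1}{\norm{v}}\bigr)$ is algebraically identical to the paper's intermediate expression $\frac{1}{\norm{u}}\bigl(\norm{v - \frac{\norm{u}}{\norm{v}}v} + \norm{r}\bigr)$ with $r = u - v$. Where you genuinely diverge is in how the renormalization term is estimated. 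The paper assumes w.l.o.g.\ that $\norm{u} \ge \norm{v}$, uses the lower bound $\norm{v} \ge \bigl(\tfrac12 - \varepsilon\bigr)\sqrt{2^n}$ to derive the ratio bound $1 \le \norm{u}/\norm{v} \le 1 + 4\varepsilon$, and arrives at $\norm{\hatu - \hatv} \le 6\varepsilon$, hence $1 - 18\varepsilon^2$. You instead observe that $\norm{v}\cdot\abs{\tfrac{1}{\norm{u}} - \tfrac{1}{\norm{v}}} = \abs{\norm{v}-\norm{u}}/\norm{u} \le \norm{u-v}/\norm{u}$ by the reverse triangle inequality applied to the difference of norms; this removes the w.l.o.g.\ step, makes no quantitative use of the lower bound on $\norm{v}$ (you need it only to ensure $v \neq 0$ so $\hatv$ is defined), and yields the tighter $\norm{\hatu - \hatv} \le 4\varepsilon$ and $1 - 8\varepsilon^2$, which implies the stated bound since $\varepsilon \le \tfrac15$ gives $8\varepsilon^2 < 1$. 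A minor additional merit: your final step correctly passes through $\mathrm{Re}\braket{\hatu}{\hatv}$ and uses $\abs{z} \ge \mathrm{Re}(z)$, whereas the paper's ``law of cosines'' step writes $\braket{x}{y}$ where $\mathrm{Re}\braket{x}{y}$ is meant --- harmless in context, but imprecise for complex vectors.
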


\begin{proof}
	We prove a lower bound for $\abs{\braket{\hatu}{\hatv}}$ by giving an upper bound on the distance between the normalizations $\norm{\ket{\hatu} - \ket{\hatv}}$.
	Intuitively, the geometric argument goes as follows:
	\begin{itemize}
		\item
		The distance $\norm{\ket{\hatu} - \ket{\hatv}}$ is exactly the distance between the two (un-normalized) vectors $u$ and $\frac{\norm{u}}{\norm{v}} \cdot v$, multiplied by the factor $\frac{1}{\norm{u}}$.
		We later use the fact that $u$ is long, and thus the factor $\frac{1}{\norm{u}}$ will make the distance between the un-normalized vectors small.
		
		\item
		The distance from $u$ to $\frac{\norm{u}}{\norm{v}} \cdot v$ is bounded by the distance from $u$ to $v$, plus the distance from $v$ to $\frac{\norm{u}}{\norm{v}} \cdot v$.
		
		\item
		The distance from $u$ to $v$ is bounded as a function of $\varepsilon$, and the distance between $v$ and $\frac{\norm{u}}{\norm{v}} \cdot v$ is bounded by the fact that the norms $\norm{u}$, $\norm{v}$ are close.
	\end{itemize}

	Formally, assume w.l.o.g. that $u$ is the longer vector between the two, denote the difference vector $r := u - v$ and it follows that $\norm{r} \leq \sqrt{2^n} \cdot \varepsilon$.
	Additionally, from triangle inequality it follows that $\norm{v} \geq \left( \frac{1}{2} - \varepsilon \right) \cdot \sqrt{2^n}$.
	We have the following inequalities.
	
	\begin{align*}
	\norm{\ket{\hatu} - \ket{\hatv}} &=
	\norm{\frac{1}{\norm{u}} \cdot u - \frac{1}{\norm{v}}\cdot v} \\
	&= \frac{1}{\norm{u}}\cdot \norm{ u - \frac{\norm{u}}{\norm{v}}\cdot v } \\
	&= \frac{1}{\norm{u}}\cdot \norm{ v + r - \frac{\norm{u}}{\norm{v}}\cdot v } \\
	&\underset{(1)}{\leq} \frac{1}{\norm{u}}\cdot \left( \norm{ v - \frac{\norm{u}}{\norm{v}}\cdot v } + \norm{r} \right) \\
	&\leq \frac{1}{\norm{u}}\cdot \left( \norm{ v - \frac{\norm{u}}{\norm{v}}\cdot v }  + \varepsilon \cdot \sqrt{2^n} \right) \enspace ,
	\end{align*}
	
	where $(1)$ is due to the triangle inequality.
	
	Next, observe that the ratio $\frac{\norm{u}}{\norm{v}}$ is close to 1: it's at least 1 because we assume that $\norm{u} \geq \norm{v}$, and also,
	
	\begin{align*}
	\frac{\norm{u}}{\norm{v}} &=
	\frac{\norm{v + r}}{\norm{v}} \\
	&\leq \frac{\norm{v} + \norm{r}}{\norm{v}} \\
	&\leq \frac{\norm{v} + \varepsilon\cdot\sqrt{2^n}}{\norm{v}} \\
	&= 1 + \frac{\varepsilon\cdot\sqrt{2^n}}{\norm{v}} \\
	&\underset{\left( \norm{v} \geq \left( \frac{1}{2} - \varepsilon \right) \cdot \sqrt{2^n} \; , \; \frac{1}{4} \geq \varepsilon \right)}{\leq}
	1 + 4\cdot \varepsilon \enspace ,
	\end{align*}
	
	The above implies the following bound,
	$$
	\norm{ v - \frac{\norm{u}}{\norm{v}}\cdot v } = 
	\abs{ 1 - \frac{\norm{u}}{\norm{v}} }\cdot \norm{ v } \leq
	4\cdot \varepsilon \cdot \norm{v} \enspace ,
	$$
	
	and thus,
	\begin{align*}
	\frac{1}{\norm{u}}\cdot \left( \norm{ v - \frac{\norm{u}}{\norm{v}}\cdot v }  + \varepsilon \cdot \sqrt{2^n} \right)
	&\leq \frac{1}{\norm{u}}\cdot \left( 4\cdot \varepsilon \cdot \norm{v}  + \varepsilon \cdot \sqrt{2^n} \right) \\
	&= \frac{\norm{v}}{\norm{u}}\cdot 4\cdot \varepsilon + \frac{\varepsilon \cdot \sqrt{2^n}}{\norm{u}} \\
	&\underset{\Big( \substack{\norm{v}\leq \norm{u}, \\ \norm{u} \geq \sqrt{2^n}/2} \Big)}{\leq} 4\cdot \varepsilon + 2\cdot \varepsilon \\
	&= 6\cdot \varepsilon \enspace ,
	\end{align*}
	
	that is, $\norm{\ket{\hatu} - \ket{\hatv}} \leq 6\cdot \varepsilon$.
	
	Finally, recall the law of cosines that says that for vectors $x, y$:
	$$
	\norm{x - y}^2 =
	\norm{x}^2 + \norm{y}^2 - 2\cdot\norm{x}\cdot\norm{y}\cdot \cos(\Theta_{x, y}) =
	\norm{x}^2 + \norm{y}^2 - 2\cdot \braket{x}{y} \enspace ,
	$$
	which implies for unit vectors $\ket{\hat{x}}, \ket{\hat{y}}$,
	$$
	\implies \abs{\braket{\hat{x}}{\hat{y}}} = \abs{ 1 - \frac{\norm{\ket{\hat{x}} - \ket{\hat{y}}}^2}{2} } \enspace .
	$$
	In our case,
	$$
	\abs{\braket{\hatu}{\hatv}} = \abs{ 1 - \frac{\norm{\ket{\hatu} - \ket{\hatv}}^2}{2} } \underset{\left( \substack{\norm{\ket{\hatu} - \ket{\hatv}} \leq 6\cdot \varepsilon, \\ \varepsilon^2 \leq 1/18} \right)}{\geq}
	1 - 18\cdot \varepsilon^2 \enspace .
	$$
\end{proof}

\paragraph{Spherical Symmetry by Bound on Probability.}
Finally, an important property of the Gaussian distribution is that when we take a small number $\varepsilon$, the probability that a vector is of length $\varepsilon$-close (as a multiplication factor) to $\frac{\sqrt{2^n}}{2}$ is bounded by $\varepsilon\cdot N$.
A part of the reasoning in the proof of Theorem \ref{thm:main} will include moving from a non-spherically-symmetric distribution to a spherically symmetric one.
More specifically, the reason that the mentioned distribution is not symmetric is due to the fact that it's a conditional distribution, with the condition being a lower bound on the sampled vector length, but this lower bound changes from vector to vector and isn't uniform.
However, it is known that the changes on the lower bound between vectors are within a small multiplicative factor (which we think of as $\varepsilon$).

The below bound shows that changing the condition on a sampled vector's length by a bit (and thus making the condition uniform for all sampled vectors) yields a tiny change in statistical distance, as needed.

\begin{lemma}[Probability Mass of Vectors inside Thin Layer is Small] \label{lemma:layer_mass_small}
	Let $n \in \bbN$, $\secp \in \bbN$ (s.t. $\secp \geq 4$) and denote $\varepsilon := 2^{- n - \secp}$, $N := 2^n$.
	Then,
	$$
	\Pr_{u \gets \normalCompDist^{N}}\left[ \left( \frac{1}{2} - \varepsilon \right)\sqrt{N} \leq \norm{u} < \frac{1}{2}\cdot\sqrt{N} \right] <
	2^{-\secp}\cdot 6^{-N} \enspace .
	$$
\end{lemma}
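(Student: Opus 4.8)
The plan is to reduce the statement to a one–dimensional tail estimate for the squared norm and then bound the mass of the thin shell by its \emph{width} times the \emph{maximal density} on it; the factor $2^{-\secp}$ will come entirely from the width, and the exponential-in-$N$ factor from the density.

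First I would pass to the squared norm. Writing $u\gets\normalCompDist^{N}$ coordinate-wise as $N$ independent complex Gaussians whose real and imaginary parts are i.i.d.\ $\normalDist$, the quantity $\norm{u}^2=\sum_{i=1}^{N}\big(\mathrm{Re}(u_i)^2+\mathrm{Im}(u_i)^2\big)$ is a sum of $2N$ independent squared standard Gaussians, so $\norm{u}^2$ follows the chi–squared distribution with $2N$ degrees of freedom, with density $p(r)=\frac{1}{2^{N}\,\Gamma(N)}\,r^{N-1}e^{-r/2}$ for $r>0$. The event in the statement is exactly $\norm{u}^2\in\big[(\tfrac12-\varepsilon)^2N,\ \tfrac14 N\big)$, an interval of length $\big(\tfrac14-(\tfrac12-\varepsilon)^2\big)N=(\varepsilon-\varepsilon^2)N\le \varepsilon N=2^{-\secp}$, using $\varepsilon=2^{-n-\secp}$ and $N=2^{n}$. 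Thus the shell is extremely thin once measured in the squared norm, and this is where all of the $\secp$–dependence (the part that matters for negligibility in $\secp$) originates.

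Next I would bound the shell probability by its width times the largest value of $p$ on it. The density $p$ is unimodal with mode at $r=2(N-1)$, and for $N\ge 2$ the whole interval lies strictly to the left of the mode (since $\tfrac14N<2(N-1)$), so $p$ is increasing there and its maximum is attained at the right endpoint $r=\tfrac14N$. Hence
\[
\Pr_{u\gets\normalCompDist^{N}}\!\Big[(\tfrac12-\varepsilon)\sqrt N\le\norm{u}<\tfrac12\sqrt N\Big]\ \le\ 2^{-\secp}\cdot p\big(\tfrac{N}{4}\big)\ =\ 2^{-\secp}\cdot\frac{(N/4)^{N-1}e^{-N/8}}{2^{N}\,\Gamma(N)} ,
\]
and it remains to show that the second factor is exponentially small in $N$.

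Finally, I would estimate $p(N/4)$ via a Stirling bound on $\Gamma(N)=(N-1)!$. Substituting $\Gamma(N)>\big((N-1)/e\big)^{N-1}$ collapses the second factor into a product of the form $\big(c(N)\big)^{N}$ whose per-factor base tends to $\tfrac{e}{4}\,e^{-1/8}<1$; bounding $c(N)$ uniformly then gives a decay of the shape $c^{-N}$ for an absolute constant $c>1$, which is the exponential-in-$N$ term recorded as $6^{-N}$ in the statement, and combined with the width $2^{-\secp}$ this yields the claimed bound. I expect the main obstacle to be precisely this last step: the density bound must be made exponentially small in $N$ with an explicit, uniform constant rather than an asymptotic one, and the smallest dimension $N=2$ (i.e.\ $n=1$) must be checked by hand, since that is the binding case for the constant in the exponent—this is also where the hypothesis $\secp\ge 4$ is used, to keep $\varepsilon$ (and hence the $\varepsilon^2$ correction to the width) negligible.
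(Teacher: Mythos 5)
Your reduction to the $\chi^2_{2N}$ radial density $p(r)=\tfrac{1}{2^{N}\Gamma(N)}r^{N-1}e^{-r/2}$, the width bound $(\varepsilon-\varepsilon^2)N\le 2^{-\secp}$, and the monotonicity argument (the whole interval lies left of the mode $2(N-1)$, so the sup of $p$ is at $N/4$) are all correct, and this is in fact a cleaner, one-dimensional version of what the paper does (the paper bounds the $2N$-dimensional Gaussian density on the shell and multiplies by the shell's Euclidean volume, computed via ball-volume formulas and a binomial expansion). The gap is exactly the step you flagged as the "main obstacle," and it is fatal: Stirling does \emph{not} deliver the base $6$. Carrying out your own estimate, $\Gamma(N)\ge((N-1)/e)^{N-1}$ together with $(N/(N-1))^{N-1}\le e$ gives
\[
p\bigl(\tfrac N4\bigr)\;\le\;\frac{e^{-N/8}}{2^{N}}\Bigl(\frac{Ne}{4(N-1)}\Bigr)^{N-1}\;\le\;4\Bigl(\frac{e^{7/8}}{8}\Bigr)^{N}\approx 4\cdot(0.30)^{N},
\]
so the per-factor base is $e^{7/8}/8\approx 0.30$, not the $\tfrac e4 e^{-1/8}\approx 0.60$ you wrote (you dropped the $2^{N}$), and in either case it is larger than $1/6\approx 0.167$. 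The hand-check at $N=2$ that you correctly identify as the binding case in fact refutes the step: $p(1/2)=e^{-1/4}/8\approx 0.097$, while $6^{-2}\approx 0.028$. Your method thus proves a bound of roughly $2^{-\secp}\cdot 4\cdot(10/3)^{-N}$, which is exponentially weaker than the claimed $2^{-\secp}6^{-N}$.

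Moreover, no refinement can close this gap, because the lemma as stated is false: on a shell this thin your width-times-density bound is tight up to a small constant (the density is monotone on the interval and varies across it by a factor at most $e^{1/4}$), so the true probability is of order $2^{-\secp}(10/3)^{-N}/\sqrt N$, which \emph{exceeds} $2^{-\secp}6^{-N}$ for every $N\ge 2$. Concretely, for $n=1$, $\secp=4$ the event is $\norm{u}^2\in[0.4394,0.5)$ with $\norm{u}^2\sim\chi^2_4$, whose CDF $1-e^{-x/2}(1+x/2)$ gives probability $\approx 5.6\cdot 10^{-3}$, strictly larger than the claimed $2^{-4}6^{-2}=1/576\approx 1.7\cdot 10^{-3}$. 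You should not expect to reach $6^{-N}$: the paper's own proof fails at the very same point, namely its final inequality (annotated $6<e^{18/10}$); its valid penultimate bound $2^{-\secp+2}\pi^{N}/(e^{\frac{18}{20}N}4^{N}\sqrt{2\pi N})$ decays only like $(\pi/(4e^{0.9}))^{N}\approx(0.32)^{N}$, and the asserted last step is false since $6\pi/(4e^{0.9})\approx 1.9>1$ (apparently $e^{-0.9N}$ was conflated with $e^{-1.8N}$). Both your argument and the paper's do go through verbatim if $6^{-N}$ is relaxed to roughly $3^{-N}$, or simply to an absolute constant, and that weaker statement still suffices (after adjusting constants) for Corollary \ref{cor:balanced_vectors_not_in_layer} and for the hybrid $\td(P_8,P_9)$ in Theorem \ref{thm:main}.
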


\begin{proof}
	\znote{See example for equation alignment.}
	Denote $L := \biggl\{ u \in \bbR^{2N} : \left( \frac{1}{2} - \varepsilon \right)\sqrt{N} \leq \norm{u} < \frac{1}{2}\cdot\sqrt{N} \biggr\}$ and by $\mu$ the $2N$-dimensional Gaussian distribution measure, and we have,
	\begin{align*}
	\Pr_{u \gets \normalCompDist^{N}}\left[ \left( \frac{1}{2} - \varepsilon \right)\sqrt{N} \leq \norm{u} < \frac{1}{2}\cdot\sqrt{N} \right] &= 
	\Pr_{u \gets \normalDist^{2N}}\left[ \left( \frac{1}{2} - \varepsilon \right)\sqrt{N} \leq \norm{u} < \frac{1}{2}\cdot\sqrt{N} \right]\\
	& = 
	\mu\left( L \right) \enspace .
	\end{align*}
		
	We can bound the measure of $L$ by using the always-increasing exponential function and the multi-dimensional volume of $L$:
	\begin{align*}
	e^{\left( \frac{1}{4} - \varepsilon \right)N}\cdot \mu\left( L \right) &<
	e^{\left( \frac{1}{2} - \varepsilon \right)^2 N}\cdot \mu\left( L \right) \\
	&=
	\int_{L}e^{\left( \frac{1}{2} - \varepsilon \right)^2 N} \text{d}\mu \\
	&\leq
	\int_{L}e^{\norm{u}^2} \text{d}\mu \\
	& =
	\int_{L}e^{\norm{u}^2}\cdot e^{-\norm{u}^2/2}\cdot \left( 2\pi \right)^{-N} \text{d}u \\
	& = 
	\int_{L}e^{\frac{\norm{u}^2 - 2\cdot\ln(2\cdot \pi)N}{2}} \text{d}u \\
	& < 
	\int_{L}e^{\frac{N - 8\cdot\ln(2\cdot \pi)N}{8}} \text{d}u \\ 
	&=  
	\int_{L}e^{\frac{1 - 8\cdot\ln(2\cdot \pi)}{8} \cdot N} \text{d}u \\
	& < 
	e^{-\frac{17}{10} \cdot N}\cdot \int_{L}1\text{d}u \\
	& = 
	e^{-\frac{17}{10} \cdot N}\cdot \text{Vol}(L) \enspace ,
	\end{align*}
	which implies,
		\begin{align*}
	\mu\left( L \right) & <
	e^{-\left( \frac{1}{4} - \varepsilon \right)N}\cdot e^{-\frac{17}{10} \cdot N}\cdot \text{Vol}(L)\\
	& = 
	e^{- \left( \frac{39}{20} - \varepsilon \right) N} \cdot \text{Vol}(L)\\
	& \underset{(\varepsilon < 1/20)}{<} 
	e^{- \frac{38}{20} N} \cdot \text{Vol}(L) \enspace ,
	\end{align*}
	and it is left to bound the volume of $L$. \znote{The script under the inequality signs does not look good.} \omri{I thought this is the easiest way to explain the inequality. What is the usual design for this?}
	
	The volume of $L$ is a volume that can be calculated rather easily - this is simply the volume of the real sphere $S_{2N}\left(\frac{1}{2}\cdot\sqrt{N}\right)$ of $2N$ dimensions and radius $\frac{1}{2}\cdot\sqrt{N}$, minus the volume of the real sphere $S_{2N}\left( \left( \frac{1}{2} - \varepsilon \right)\sqrt{N} \right)$ of $2N$ dimensions and radius $\left( \frac{1}{2} - \varepsilon \right)\sqrt{N}$.
	Since there are known and convenient formulas for the volume of the $N$-ball, the calculation follows.
	\begin{align*}
	\text{Vol}(L) & = 
	\text{Vol}\left( S_{2N}\left( \frac{1}{2}\cdot \sqrt{N} \right) \right) - 
	\text{Vol}\left( S_{2N}\left( \left( \frac{1}{2} - \varepsilon \right)\sqrt{N} \right) \right) \\
	& = 
	\frac{\pi^{N}}{\Gamma(N + 1)}\cdot \left( \frac{1}{2}\cdot \sqrt{N} \right)^{2N} - \frac{\pi^{N}}{\Gamma(N + 1)}\cdot \left( \left( \frac{1}{2} - \varepsilon \right)\sqrt{N} \right)^{2N} \\
	& = 
	\frac{\pi^{N}\cdot N^N}{N!}\cdot \left( \left( \frac{1}{2} \right)^{2N} - \left( \frac{1}{2} - \varepsilon \right)^{2N} \right) \enspace .
	\end{align*}
	
	To bound the above, we'll give a lower bound for $\left( \frac{1}{2} - \varepsilon \right)^{2N}$:
	\begin{align}
	\left( \frac{1}{2} - \varepsilon \right)^{2N} &= 
	\sum_{i = 0}^{2N} {2N \choose i} (-1)^{i}\cdot \varepsilon^{i}\cdot \frac{1}{2^{2N - i}} \nonumber \\
	&=
	\frac{1}{2^{2N}} - 2N\varepsilon\cdot \frac{1}{2^{2N - 1}} + \sum_{i = 2}^{2N} {2N \choose i} (-1)^{i}\cdot \varepsilon^{i}\cdot \frac{1}{2^{2N - i}} \nonumber \\
	& \geq 
	\frac{1}{2^{2N}} - 2N\varepsilon\cdot \frac{1}{2^{2N - 1}} \label{eq:1} \enspace,
	\end{align}
	
	and we would like to show the inequality in Eq.~\eqref{eq:1} \znote{What is this "*"? Why not use equation reference?} \omri{How about now?}, that says that the last sum (that sums from $i = 2$ and up) is positive:
	\begin{align*}
	\sum_{i = 2}^{2N} {2N \choose i} (-1)^{i}\cdot \varepsilon^{i}\cdot \frac{1}{2^{2N - i}} &= 
	\sum_{i = 2, 4, 6, \cdots, 2N - 2} \left( {2N \choose i}\cdot \varepsilon^{i}\cdot \frac{1}{2^{2N - i}} -
	2\cdot{2N \choose i + 1}\cdot\varepsilon^{i + 1}\cdot \frac{1}{2^{2N - i}} \right) + \varepsilon^{2N} \\
	&= 
	\sum_{i = 2, 4, 6, \cdots, 2N - 2} \left( \varepsilon^{i}\cdot \frac{1}{2^{2N - i}}\cdot \left( {2N \choose i} -  2\cdot{2N \choose i + 1}\cdot\varepsilon \right) \right) + \varepsilon^{2N} \enspace ,
	\end{align*}
	and for all $i = 2, 4, 6, \cdots, 2N - 2$ we have,
	\begin{align*}
		{2N \choose i} -  2\cdot{2N \choose i + 1}\cdot\varepsilon & = 
	\prod_{j = 0}^{i - 1}\frac{2N - j}{i - j} - 2\varepsilon\cdot \prod_{j = 0}^{i}\frac{2N - j}{i + 1 - j} \\
	&= 
	\prod_{j = 0}^{i - 1}\frac{2N - j}{i - j} - 2\varepsilon\cdot \left( 2N - i \right) \prod_{j = 0}^{i - 1}\frac{2N - j}{i + 1 - j} \\ 
	&\geq
	\prod_{j = 0}^{i - 1}\frac{2N - j}{i - j} - 4N\varepsilon\cdot \prod_{j = 0}^{i - 1}\frac{2N - j}{i + 1 - j} \\
	& \geq
	\left( \prod_{j = 0}^{i - 1}\frac{2N - j}{i - j} \right) \cdot \left( 1 - 4N\varepsilon \right)\\
	& \underset{(1 \geq 2^{-\secp + 2} = 4N\varepsilon)}{\geq} 0 \enspace.
	\end{align*}
	The last inequality implies the inequality \ref{eq:1}, which in turn implies the lower bound,
	$$
	\left( \frac{1}{2} - \varepsilon \right)^{2N} \geq 
	\frac{1}{2^{2N}} - 4N\varepsilon\cdot \frac{1}{2^{2N}} \enspace.
	$$
	
	Finally, we obtain the following:
	\begin{align*}
	\mu\left( L \right) & <
	e^{- \frac{38}{20} N} \cdot \text{Vol}(L) \\
	& = 
	e^{- \frac{38}{20} N} \cdot \frac{\pi^{N}\cdot N^N}{N!}\cdot \left( \left( \frac{1}{2} \right)^{2N} - \left( \frac{1}{2} - \varepsilon \right)^{2N} \right)\\
	& \leq
	e^{- \frac{38}{20} N} \cdot \frac{\pi^{N}\cdot N^N}{N!}\cdot 4N\varepsilon\cdot \frac{1}{2^{2N}} \\
	& =
	2^{-\secp + 2}\cdot \frac{\pi^{N}\cdot N^N}{e^{\frac{38}{20} N} \cdot 4^{N} \cdot N!} \\
	& \underset{(\text{Stirling})}{\leq}
	2^{-\secp + 2}\cdot \frac{\pi^{N}\cdot N^N}{e^{\frac{38}{20} N} \cdot 4^{N} \cdot \sqrt{2\pi N}\left( \frac{N}{e} \right)^N} \\
	& = 
	2^{-\secp + 2}\cdot \frac{\pi^{N}}{e^{\frac{18}{20} N} \cdot 4^{N} \cdot \sqrt{2\pi N}} \\
	& \underset{\left( 6 < e^{\frac{18}{10}} \right)}{<}
	2^{-\secp}\cdot 6^{-N} \enspace .\qedhere
	\end{align*}
\end{proof}

The Lemma about the tiny probability that a Gaussian vector lands inside the layer that's $\varepsilon$-near $\frac{\sqrt{2^n}}{2}$ implies that it is also the case when it is sampled from the conditional distribution of balanced vectors.

\begin{corollary}[Balanced Gaussian Vectors are Almost Always Outside of the Thin Layer] \label{cor:balanced_vectors_not_in_layer}
	Let $n \in \bbN$, $4 \leq \secp \in \bbN$, $\varepsilon := 2^{-n-\secp}$, $B \geq 2\sqrt{n + \secp}$ and let $D$ be the conditional distribution of $u \gets \normalCompDist^{N}$ with the conditions that $\norm{R_{(\varepsilon, B)}(u)} \geq \frac{\sqrt{N}}{2}$ and $\forall i \in [2^n] : |u_i| \leq B$.
	Then,
	$$
	\Pr_{u \gets D}\left[ \norm{u} < \frac{\sqrt{2^n}}{2} \right] < 4\cdot2^{-\secp} \cdot 6^{-2^n} \enspace .
	$$
\end{corollary}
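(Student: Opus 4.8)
The plan is to write the conditional probability as a ratio and bound the numerator and denominator separately, exactly in the spirit of Corollaries~\ref{cor:conditional_vectors_coordinates_bounded} and~\ref{cor:2nd_conditional_vectors_coordinates_bounded}. Writing $A$ for the event $\norm{R_{(\varepsilon,B)}(u)}\geq\frac{\sqrt{N}}{2}$ and $C$ for the per-coordinate bound $\forall i:|u_i|\leq B$ (so that $D$ is $\normalCompDist^{N}$ conditioned on $A\wedge C$), I would start from
$$
\Pr_{u\gets D}\left[\norm{u}<\tfrac{\sqrt{N}}{2}\right]=\frac{\Pr_{u\gets\normalCompDist^{N}}\left[\norm{u}<\frac{\sqrt{N}}{2}\,\wedge\,A\,\wedge\,C\right]}{\Pr_{u\gets\normalCompDist^{N}}\left[A\wedge C\right]}\enspace .
$$

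For the numerator I would observe that under $C$ no coordinate is truncated, so each complex coordinate of $R_{(\varepsilon,B)}(u)$ differs from that of $u$ by at most $\sqrt{2}\,\varepsilon$ (the $\sqrt{2}$ coming from rounding the real and imaginary parts, exactly as in the $P_7\to P_8$ step of Theorem~\ref{thm:main}), giving $\norm{R_{(\varepsilon,B)}(u)-u}\leq\sqrt{2N}\,\varepsilon$. Hence on the numerator event the triangle inequality forces $\norm{u}\geq\norm{R_{(\varepsilon,B)}(u)}-\sqrt{2N}\,\varepsilon\geq\frac{\sqrt{N}}{2}-\sqrt{2N}\,\varepsilon$, so the numerator event lies inside the thin spherical shell $\frac{\sqrt{N}}{2}-\sqrt{2N}\,\varepsilon\leq\norm{u}<\frac{\sqrt{N}}{2}$, whose mass is controlled by Lemma~\ref{lemma:layer_mass_small}. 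Since this shell has normalized width $\sqrt{2}\,\varepsilon$ rather than $\varepsilon$, I would rerun the linear-in-$\varepsilon$ estimate of that lemma (its bound comes from $(\frac12)^{2N}-(\frac12-\varepsilon)^{2N}\leq 4N\varepsilon\cdot 4^{-N}$, which scales linearly), multiplying its bound by $\sqrt{2}$ to get numerator $<\sqrt{2}\cdot 2^{-\secp}6^{-N}$.

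For the denominator I would use that under $C$ rounding only increases the norm, so $\{\norm{u}\geq\frac{\sqrt{N}}{2}\}\wedge C\subseteq A\wedge C$ (the monotonicity already exploited in the proof of Corollary~\ref{lemma:rounded_gaussian_vectors_long}); hence $\Pr[A\wedge C]\geq\Pr[\norm{u}\geq\frac{\sqrt{N}}{2}\wedge C]>1-(e^{-N/4}+e^{-\secp})$ by Corollary~\ref{lemma:gaussian_vectors_balanced}. For $N\geq 2$ and $\secp\geq 4$ this is at least $1-(e^{-1/2}+e^{-4})>\frac{3}{8}$. Combining the two bounds, the ratio is at most
$$
\frac{\sqrt{2}\cdot 2^{-\secp}6^{-N}}{3/8}=\frac{8\sqrt{2}}{3}\cdot 2^{-\secp}6^{-N}<4\cdot 2^{-\secp}6^{-N}\enspace ,
$$
which is the claimed bound.

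The main obstacle is purely the bookkeeping of constants: the complex rounding inflates the shell width by a factor of $\sqrt{2}$, so I must verify that rerunning Lemma~\ref{lemma:layer_mass_small}'s estimate is still legitimate (its proof needs $\sqrt{2}\,\varepsilon<1/20$ and $4N\sqrt{2}\,\varepsilon\leq 1$, both of which hold for all $n\geq 1,\secp\geq 4$ since $\sqrt{2}\,\varepsilon\leq\sqrt{2}\,2^{-5}$), and that the denominator remains above $\tfrac{3}{8}$ even at the smallest admissible parameters $n=1,\secp=4$. These two facts must combine so that the final constant stays strictly below $4$, which is exactly what the value $\tfrac{8\sqrt{2}}{3}\approx 3.77$ guarantees.
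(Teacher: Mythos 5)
Your proposal is correct and follows the same route as the paper's own proof: write the conditional probability as a ratio, trap the numerator event inside a thin spherical shell whose mass is controlled by Lemma~\ref{lemma:layer_mass_small}, lower bound the denominator by $1-(e^{-N/4}+e^{-\secp})$ via Corollary~\ref{lemma:gaussian_vectors_balanced} together with the monotonicity of rounding under the per-coordinate bound, and check that the constants combine to less than $4$. The one substantive difference is that your version is more careful than the paper's on a point where the paper actually has a small gap. The paper asserts that the numerator event lies inside the shell $\left(\tfrac12-\varepsilon\right)\sqrt N\le\norm u<\tfrac12\sqrt N$ of width $\varepsilon\sqrt N$; but under the per-coordinate bound, rounding moves each of the $2N$ real coordinates by up to $\varepsilon$, so one only gets $\norm{R_{(\varepsilon,B)}(u)}-\norm u\le\sqrt{2N}\,\varepsilon$, i.e.\ inclusion in the shell of width $\sqrt2\,\varepsilon\sqrt N$ --- exactly the factor you track. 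Your two compensations are the right ones, and both are needed: rerunning Lemma~\ref{lemma:layer_mass_small}'s estimate, which is linear in the shell width (and whose side conditions $\sqrt2\,\varepsilon<1/20$ and $4N\sqrt2\,\varepsilon\le1$ you correctly verify at the extreme parameters $n=1$, $\secp=4$), costs a factor $\sqrt2$ in the numerator, while invoking the corollary's actual hypothesis $\secp\ge4$ (the paper's proof only uses $\secp\ge3$) raises the denominator bound from $3/10$ to $3/8$, giving the final constant $\tfrac{8\sqrt2}{3}\approx3.77<4$. Had you kept the paper's denominator constant $3/10$, the corrected numerator would give $\tfrac{10\sqrt2}{3}\approx4.71>4$ and the claimed bound would not close, so your bookkeeping is not pedantry but exactly what delivers the stated inequality. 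The only shared caveat is that both your argument and the paper's rest entirely on Lemma~\ref{lemma:layer_mass_small} (you use it rescaled, the paper as stated), so your reliance is no weaker than the paper's own.
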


\begin{proof}
	We have,
	\begin{align*}
	2^{-\secp} \cdot 6^{-2^n}
	&\underset{(\text{Lemma \ref{lemma:layer_mass_small}})}{>}
	\Pr_{u \gets \normalCompDist^{2^n}}\left[ \left( \frac{1}{2} - \varepsilon \right)\sqrt{2^n} \leq \norm{u} \frac{1}{2}\cdot\sqrt{2^n} \right] \\
	&\geq \Pr_{u \gets \normalCompDist^{2^n}}\left[ \left( \norm{R_{(\varepsilon, B)}(u)} \frac{\sqrt{2^n}}{2} \right) \land \left( \forall i \in [2^n] : |u_i| \leq B \right) \land \left( \norm{u} < \frac{\sqrt{2^n}}{2} \right) \right] \\
	&= \Pr_{u \gets \normalCompDist^{2^n}}\left[ \left( \norm{R_{(\varepsilon, B)}(u)} \frac{\sqrt{2^n}}{2} \right) \land \left( \forall i \in [2^n] : |u_i| \leq B \right) \right] \cdot
	\Pr_{u \gets D}\left[  \norm{u} < \frac{\sqrt{2^n}}{2} \right] \\
	&\geq \Pr_{u \gets \normalCompDist^{2^n}}\left[ \left( \norm{u} \geq \frac{\sqrt{2^n}}{2} \right) \land \left( \forall i \in [2^n] : |u_i| \leq B \right) \right] \cdot
	\Pr_{u \gets D}\left[  \norm{u} < \frac{\sqrt{2^n}}{2} \right] \\
	&\underset{\text{(Corollary \ref{lemma:gaussian_vectors_balanced})}}{>}
	\Big( 1 - \left( e^{-\frac{2^n}{4}} + e^{-\secp} \right) \Big) \cdot \Pr_{u \gets D}\left[  \norm{u} < \frac{\sqrt{2^n}}{2} \right] \\
	&\underset{(2^n \geq 2 \; , \; \secp \geq 3)}{\geq}
	\Big( 1 - \left( e^{-\frac{1}{2}} + e^{-3} \right) \Big) \cdot \Pr_{u \gets D}\left[  \norm{u} < \frac{\sqrt{2^n}}{2} \right] >
	\left( \frac{3}{10} \right) \cdot \Pr_{u \gets D}\left[  \norm{u} < \frac{\sqrt{2^n}}{2} \right] \enspace .
	\end{align*}
	The above implies in particular,
	$$
	2^{-\secp} \cdot 6^{-2^n} \cdot \left( \frac{10}{3} \right) > \Pr_{u \gets D}\left[  \norm{u} < \frac{\sqrt{2^n}}{2} \right] \enspace ,
	$$
	which implies our wanted inequality.
\end{proof}

	

	\bibliographystyle{alpha} 
	\bibliography{PseudoRandomQuantum}

\newpage
\pagestyle{plain}
\pagenumbering{roman}

	

\end{document}